\newif\ifArxivVersion
\newtheorem{theorem}{Theorem}[section]
\newtheorem{lemma}[theorem]{Lemma}
\newtheorem{Definition}[theorem]{Definition}
\newtheorem{algorithm}[theorem]{Algorithm}
\xpatchcmd\thmt@restatable{%
\csname #2\@xa\endcsname\ifx\@nx#1\@nx\else[{#1}]\fi
}{%
\ifthmt@thisistheone
\csname #2\@xa\endcsname\ifx\@nx#1\@nx\else[{#1}]\fi
\else
\csname #2\@xa\endcsname[{restated}]
\fi}{}{}
\definecolor{lightblue}{HTML}{8D9DB6}
\newcommand{\pp}[1]{\textup{#1}}
\newcommand{\PESqd}[2]{\ensuremath{\pp{PES}_{#1}^{#2}}}
\newcommand{\Roots}[2]{\ensuremath{\pp{\#Roots}_{#1}^{#2}}}
\newcommand{\SUMqd}[2]{\ensuremath{\pp{Sum}_{#1}^{#2}}}
\renewcommand{\Comment}[1]{\textit{\hypersetup{linkcolor=gray}\textcolor{gray}{#1}}}
\renewcommand{\Call}[2]{\textup{\textsc{#1}%
\if\relax\detokenize{#2}\relax\else\ensuremath{(#2)}\fi}}
\newcommand{\RazborovSmolensky}[1]{\Call{\hyperref[algo:RazborovSmolensky]{RazborovSmolensky}}{#1}}
\newcommand{\PartialSum}[1]{\Call{\hyperref[algo:PartialSum]{PartialSum}}{#1}}
\newcommand{\FullSum}[1]{\Call{\hyperref[algo:FullSum]{FullSum}}{#1}}
\newlength{\RoundedBoxWidth}
\newsavebox{\GrayRoundedBox}
\newenvironment{GrayBox}[1]%
   {\setlength{\RoundedBoxWidth}{.93\textwidth}
    \def\boxheading{#1}
    \begin{lrbox}{\GrayRoundedBox}
       \begin{minipage}{\RoundedBoxWidth}}%
   {   \end{minipage}
    \end{lrbox}
    \begin{center}
    \begin{tikzpicture}%
       \node(Text)[draw=black!20,rounded corners,%
             inner sep=0.7em,text width=\RoundedBoxWidth,anchor=north west]
             at (0,0) {\usebox{\GrayRoundedBox}};
        \node [rectangle,inner sep=0.3em,anchor=base west,fill=white] at (0.8em,-0.1em) {\boxheading};
    \end{tikzpicture}
    \end{center}}
\newcommand{\defproblemtalt}[3]{
  \begin{GrayBox}{#1}
    \begin{tabular*}{\textwidth}{>{\itshape}l p{0.85\textwidth}}
    Input:  & #2 \\
    Question: & #3
    \end{tabular*}
  \end{GrayBox}
}%
\newcommand{\E}{\mathbb{E}}
\newcommand{\F}{\mathbb{F}}
\newcommand{\R}{\mathbb{R}}
\newcommand{\N}{\mathbb{N}}
\newcommand{\OO}{\mathcal{O}}
\newcommand{\OOstar}{\OO^\ast}
\newcommand{\eps}{\varepsilon}
\renewcommand{\epsilon}{\eps}
\newcommand{\zo}{\set{0,1}}
\DeclarePairedDelimiter{\ceil}{\lceil}{\rceil}
\DeclarePairedDelimiter{\floor}{\lfloor}{\rfloor}
\DeclarePairedDelimiter\paren{\lparen}{\rparen}  
\DeclarePairedDelimiter\abs{\lvert}{\rvert}  
\DeclarePairedDelimiter\set{\{}{\}}  
\DeclarePairedDelimiterX\setc[2]{\{}{\}}{\,#1 \mid #2\,}  
\DeclarePairedDelimiterX\parenc[2]{\lparen}{\rparen}{\,#1 \;\delimsize\vert\; #2\,}  
\DeclareMathOperator{\poly}{poly}
\newcommand{\FF}{\varphi}
\newcommand{\PartialSumError}[1]{q^{- #1}}
\newcommand{\field}{\ensuremath{\F_q}\xspace}
\newcommand{\TrimmedFp}[2]{\F_{q\downarrow #2}^{#1}}
\newcommand{\Zdegree}[2]{\Delta_{#1,#2}}
\newcommand{\kappaUPPER}{\ensuremath{\tfrac{1}{2d-1}}}
\begin{document}

\newcommand\relatedversion{}

\ifArxivVersion
\renewcommand\relatedversion{\thanks{An extended abstract of this work appears in the Proceedings of the Symposium on Discrete Algorithms (SODA 2025), published by the
Society for Industrial and Applied Mathematics (SIAM).}}
\fi

\title{\Large Solving Polynomial Equations Over Finite Fields\relatedversion}

\author{%
Holger Dell%
\thanks{Goethe University Frankfurt, Germany; IT University of Copenhagen and Basic Algorithms Research Copenhagen (BARC), Denmark.}%
\and Anselm Haak%
\thanks{Universität Paderborn, Germany.}%
\and Melvin Kallmayer%
\thanks{Goethe University Frankfurt, Germany.}%
\and Leo Wennmann%
\thanks{Maastricht University, Netherlands. Supported by Dutch Research Council (NWO) project ``The Twilight Zone of Efficiency: Optimality of Quasi-Polynomial Time Algorithms'' [grant number OCEN.W.21.268].}%
}


\date{}

\maketitle




\ifArxivVersion
\else
\fancyfoot[R]{\scriptsize{Copyright \textcopyright\ 2025\\
Copyright for this paper is retained by authors}}
\fi



\begin{abstract}\small\baselineskip=9pt%
    We present a randomized algorithm for solving low-degree polynomial equation systems over finite fields faster than exhaustive search.
    In order to do so, we follow a line of work by Lokshtanov, Paturi, Tamaki, Williams, and Yu (SODA 2017), Björklund, Kaski, and Williams (ICALP 2019), and Dinur (SODA 2021).
    In particular, we generalize Dinur's algorithm for $\F_2$ to all finite fields, in particular the ``symbolic interpolation'' of Björklund, Kaski, and Williams, and we use an efficient trimmed multipoint evaluation and interpolation procedure for multivariate polynomials over finite fields by Van der Hoeven and Schost (AAECC 2013).
    The running time of our algorithm matches that of Dinur's algorithm for $\F_2$ and is significantly faster than the one of Lokshtanov et al. for $q>2$.

    We complement our results with tight conditional lower bounds that, surprisingly, we were not able to find in the literature. In particular, under the strong exponential time hypothesis, we prove that it is impossible to solve $n$-variate low-degree polynomial equation systems over $\field$ in time~$\OO((q-\eps)^{n})$.
    As a bonus, we show that under the counting version of the strong exponential time hypothesis, it is impossible to compute the number of roots of a \emph{single} $n$-variate low-degree polynomial over $\field$ in time~${\OO((q-\eps)^{n})}$; this generalizes a result of Williams (SOSA 2018) from $\F_2$ to all finite fields.
    \nocite{lokshtanov2017beating}%
    \nocite{DBLP:conf/icalp/BjorklundK019}%
    \nocite{dinur2021improved}%
    \nocite{DBLP:journals/aaecc/HoevenS13}%
    \nocite{DBLP:conf/soda/Williams18a}%
\end{abstract}

\section{Introduction}

In the 16th century, mathematicians have kept algorithms for solving polynomial equation systems secret from each other, so that they could publicly demonstrate their superior skill in case they were challenged for their non-tenured positions (e.g., \cite{veritasium,toscano2020secret,Rowe_2014}).
Modern scientists are still facing similar problems, except that secrecy is frowned upon, and so we chose to publish our algorithm for polynomial equation systems over finite fields in this paper.

Polynomial equation systems over finite fields have countless applications.
In coding theory, they are used in the decoding step of error-correcting codes~\cite{pellikaan_wu_bulygin_jurrius_2017}.
In cryptography, they can be used to break cryptographic schemes~\cite{pellikaan_wu_bulygin_jurrius_2017}.
They even have a close connection to quantum computations, where the output of such computations can be expressed as the number of solutions to a set of polynomial equations, allowing for elegant proofs of some relations between quantum and classical complexity classes~\cite{DBLP:journals/qic/DawsonHMHNO05}.

\subsection{Our Results.}
For every prime power~$q$ and every positive integer~$d$, we define the problem of solving a system of degree-$d$ polynomial equations over the finite field~$\field$ as follows:

\defproblemtalt{\PESqd{q}{d}}{Polynomials $P_1, \dots, P_m \in \field[X_1,\dots,X_n]$ of degree at most~$d$}{Is there an assignment~$x\in\field^n$ with $P_1(x)=P_2(x)=\dots=P_m(x)=0$?}

We stress that the polynomials are explicitly given as a list of monomial-coefficient pairs.
Exhaustive search trivially solves this problem in time $\OOstar(q^{n})$ asymptotically in $n$ for any fixed $q$ and $d$, where the $\OOstar(\cdot)$-notation hides polynomial factors.
Our main result is a faster algorithm for this problem.
\begin{restatable}{theorem}{PESalgorithm}\label{thm:pes-algorithm}
    For every prime power~$q$ and every positive integer~$d$, there is a bounded-error randomized algorithm that solves \PESqd{q}{d} in time $\OOstar(q^{\zeta_{q,d} n})$ for  $0<\zeta_{q,d}\leq 1 - \min\paren*{\tfrac{1}{8\ln q}, \tfrac{1}{4d}}$.
\end{restatable}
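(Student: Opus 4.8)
The plan is to reduce $\PESqd{q}{d}$ to a counting problem and then to solve that by a fast polynomial-summation subroutine, following the line of work of Lokshtanov et al., Björklund--Kaski--Williams, and Dinur; throughout write $q=p^e$ with $p$ prime. First I would reduce the decision problem to counting, with one-sided error, the number of common roots of an augmented system \emph{modulo $p$}, using an isolation step to force that number into $\set{0,1}$. The obstacle is that the unmodified count may be as large as $q^n$, so a bare residue modulo $p$ carries no information; the remedy is a self-reduction in the style of the parity-counting self-reduction of Björklund, Kaski, and Williams: add $j$ uniformly random affine forms over $\F_q$ and iterate $j=0,1,\dots,n$, so that with constant probability some $j$ isolates a unique common root whenever the original system is satisfiable, while an unsatisfiable system always yields count $0$. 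Since each affine form can also be used to eliminate one variable, the $j$-th instance has $n-j$ variables and the term $j=0$ dominates the running time. For the reduction to make sense the augmented system must have $\sum_i\deg P_i\ge n$ (otherwise Chevalley--Warning forces the count divisible by $p$, and a unique isolated solution is impossible); when this fails I would first boost the total degree by a routine padding step.

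Given the counting instance, the number of common roots modulo $p$ can be written as $\sum_{x\in\F_q^n}F(x)$, computed in $\F_q$, for a single polynomial $F$ built from the system -- the exact indicator $\prod_i\paren*{1-P_i(x)^{q-1}}$, or, when there are many equations, a Razborov--Smolensky replacement of it by a product of few random $\F_p$-linear combinations of the $P_i(x)^{q-1}$, each raised to the power $p-1$. Since $\sum_{x\in\F_q^n}X_1^{a_1}\cdots X_n^{a_n}$ vanishes unless every $a_j$ is a positive multiple of $q-1$, this sum equals, up to sign, the single coefficient of $X_1^{q-1}\cdots X_n^{q-1}$ in $F$ reduced modulo $\langle X_1^q-X_1,\dots,X_n^q-X_n\rangle$; in particular $\deg F$ must be at least $n(q-1)$, and how much slack it carries above that minimum is a parameter I will write as $\kappa$.

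The technical heart is to compute $\sum_{x\in\F_q^n}F(x)$, equivalently that one coefficient of $F$ in reduced form, faster than $q^n$; this is where symbolic interpolation and trimmed multipoint evaluation come in. I would partition the $n$ variables into $\bb$ blocks of size $\s$ and eliminate them one block at a time, each step summing the polynomial over the $q^{\s}$ assignments to the current block's variables. Done naively this is just exhaustive search, so instead of brute-forcing a block I would interpolate: the factors $1-P_i(x)^{q-1}$ (or their Razborov--Smolensky surrogates) have bounded degree and hence few monomials once restricted to the block's variables, so the relevant per-block polynomial can be recovered by multivariate multipoint evaluation and interpolation over a trimmed point set $\TrimmedFp{n'}{j}$ rather than over a full box $\F_q^{n'}$ -- exactly the regime handled by the trimmed procedure of Van der Hoeven and Schost -- and a $q$-ary entropy estimate gives $|\TrimmedFp{n'}{j}|\approx q^{\,n'\entropy{\kappa}}\ll q^{n'}$. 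Each block thus contributes a constant-factor improvement over brute force, and accumulated over all $\bb$ blocks the total drops to $\OOstar\paren*{q^{(1-\Theta(\delta))n}}$.

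Finally I would optimize the parameters. Two constraints bind. A degree constraint forces $\kappa\le\kappaUPPER$: each degree-$d$ equation sheds degree at rate $\Theta(d)$ per summed variable, so the slack affordable per variable is of order $1/d$, giving a saving of order $\tfrac{1}{4d}$. An entropy constraint forces $\entropy{\kappa}$ to stay below roughly $1-2\kappa$, which caps $\kappa$ at order $1/\ln q$ and gives a saving of order $\tfrac{1}{8\ln q}$. Taking $\delta$ to be the better of the two and absorbing the constant and polynomial overheads yields $\zeta_{q,d}\le 1-\min\paren*{\tfrac{1}{8\ln q},\tfrac{1}{4d}}$, with $\zeta_{q,d}>0$ since both terms are strictly below $1$; the block size is set to $\s$ precisely so this accounting closes. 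I expect the summation step to be the main obstacle: making the trimmed multipoint evaluation and interpolation run in time near-linear in its entropy-sized output, threading the block decomposition so that the intermediate polynomials never become fully dense, and verifying that the $q$-ary entropy bookkeeping beats $q^n$ with enough room to swallow the polynomial factors and the isolation and boosting overheads.
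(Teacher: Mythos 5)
Your high-level strategy---Valiant--Vazirani isolation, the Fermat-style indicator polynomial, Razborov--Smolensky probabilistic degree reduction, trimmed multipoint evaluation/interpolation, and $q$-ary entropy bookkeeping---matches the paper's. But there are three substantive problems with the route you sketch, the first of which is, in my view, fatal as stated.

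\emph{The recursion is wrong.} You propose to partition the $n$ variables into $\bb$ blocks of the \emph{constant} size $\s$ and eliminate blocks one at a time. That gives $\Theta(n)$ levels of elimination. If each level makes even two independent subcalls, the total call count is exponential (this is essentially the Björklund--Kaski--Williams recursion, whose savings are weaker); and if each level is a single sequential pass, the intermediate polynomial blows up and there is no mechanism to keep its representation small across $\Theta(n)$ passes. The entire point of Dinur's improvement, which the paper follows, is to use a \emph{constant} number of recursion levels (blocks of size $\lceil\lambda n\rceil$, so depth $D^\ast=\kappa/\lambda=O(1)$) and, crucially, to make only $t=O(n)$ recursive calls per level. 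This works because every partial-sum value needed at a given level is an evaluation of a \emph{single} lower-degree polynomial $\widetilde Z_{\beta'}$, which one recursive call to the subroutine returns in coefficient form; trimmed multipoint evaluation then yields all needed evaluations at once. Without this, you do not beat exhaustive search, and the polynomial call-tree overhead $t^{D^\ast}=n^{O(1)}$ is exactly what makes the bookkeeping close. You should also say a word about error correction: Razborov--Smolensky only gives pointwise correctness, and the paper repeats each level $t$ times and takes a \emph{plurality vote} per evaluation point; without this, errors compound across levels.

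\emph{The degree accounting is inverted.} You write that $\deg F$ ``must be at least $n(q-1)$'' and treat $\kappa$ as slack \emph{above} that threshold. In fact the algorithm lives in the opposite regime: after replacing the $m$ polynomials by $\beta+2$ Razborov--Smolensky surrogates, $\widetilde F$ has degree at most $(\beta+2)d(q-1)$, and the partial sum $Z_\beta$ has degree at most $\bigl((\beta+2)d-\beta\bigr)(q-1)$, which is far \emph{below} $(n-\beta)(q-1)$ whenever $\kappa<\kappaUPPER$. That upper bound is what makes the trimmed point set $\TrimmedFp{n-\beta}{\Delta}$ of subexponentially small entropy $H(q,\alpha)<1$; your paragraph optimizes a parameter that, as you define it, does not control anything.

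\emph{Work over $\F_q$, not modulo $p$.} Writing $q=p^e$, you propose to reduce to a count modulo $p$, take $\F_p$-linear Razborov--Smolensky combinations of $P_i^{q-1}$, and then raise to the power $p-1$. This works, but it is both unnecessary (after isolation the count is $\{0,1\}$-valued, so computing the sum in $\F_q$ already decides the problem) and strictly worse: each factor $(\sum_i\rho_iP_i^{q-1})^{p-1}$ has degree $d(q-1)(p-1)$, a factor of $p-1$ more than the paper's $1-\widetilde P_j^{q-1}$ at degree $d(q-1)$. The Chevalley--Warning remark is correct mathematics but a red herring: Valiant--Vazirani samples the number $\ell$ of added linear forms uniformly from $\{0,\dots,n\}$, so no padding is needed---isolation simply won't succeed for $\ell$ too small.
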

To emphasize, $q$ and $d$ are treated as constants for this algorithm, and its running time is given asymptotically in $n$.
The exponent~$\zeta_{q,d}$ is a non-elementary function of~$q$ and~$d$ that we explicitly define in Section~\ref{sec:running-time}, and we do not believe our upper bound to be tight.
For example, numerical experiments suggest $\zeta_{2,d}\leq 1-1/(2d)$ for all $d\leq 2^{18}$.
We provide a small table of running times for specific values of $q$ and $d$ in Figure~\ref{fig:running-times}.
For $q=2$, our algorithm specializes to the algorithm of Dinur~\cite{dinur2021improved} that we build on, and for $q>2$, the running time we obtain is significantly faster than the one obtained by Lokshtanov et al.~\cite{lokshtanov2017beating}.

\begin{figure}[ht]
    \centering
\begin{tabular}{lcccc}
    Algorithm & $q=d=2$ & $q=3$, $d=2$ & $q=4$, $d=2$ & $q=4$, $d=3$ \\\hline
    Lokshtanov et al.~\cite{lokshtanov2017beating} & $\OOstar(2^{0.8765n})$
    & $\OOstar(q^{0.9975n})$
    & $\OOstar(q^{0.9n})$
    & $\OOstar(q^{0.934n})$
    \\
    Björklund, Kaski, Williams~\cite{DBLP:conf/icalp/BjorklundK019} & $\OOstar(2^{0.804n})$ & --- & --- & --- \\
    Dinur~\cite{dinur2021improved} & \boldmath$\OOstar(2^{0.695n})$ & --- & --- & --- \\
    This paper
    & \boldmath$\OOstar(2^{0.695n})$
    & \boldmath$\OOstar(q^{0.696n})$
    & \boldmath$\OOstar(q^{0.698n})$
    & \boldmath$\OOstar(q^{0.813n})$\\
\end{tabular}
\caption{\label{fig:running-times}Comparison of running times for some values of $q$ and $d$.}
\end{figure}

When $q$ remains constant, the term~$\zeta_{q,d}$ in the running time of our algorithm approaches~$1$ as the degree~$d$ tends to infinity.
We show this behavior is necessary under the Strong Exponential Time Hypothesis (\pp{SETH}).
\begin{restatable}{theorem}{PEShardness}\label{thm:pes-hardness}\mbox{}
    If \pp{SETH} holds, then for all prime powers $q$ and all rationals $\delta >0$, there exists $d\in\N$ such that there is no $\OOstar(q^{(1-\delta) n})$-time algorithm for \PESqd{q}{d}.
\end{restatable}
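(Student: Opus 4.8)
The plan is a polynomial‑time many‑one reduction from $k$-SAT to $\PESqd{q}{d}$ that inflates the number of variables by only a factor $\tfrac{1}{\log_2 q}(1+o(1))$, followed by an appeal to \pp{SETH}. Fix the prime power $q$ and the rational $\delta>0$. It suffices to treat $0<\delta<1$: for $\delta\ge 1$ an $\OOstar(q^{(1-\delta)n})$-time algorithm runs in polynomial time, so the statement for such $\delta$ already follows from the case $\delta=\tfrac12$. Set $\eps:=\delta/2$ and let $k$ be the integer that \pp{SETH} supplies for this $\eps$, so that $k$-SAT on $N$ variables has no $\OO(2^{(1-\delta/2)N})$-time algorithm.

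The only genuinely delicate ingredient is an efficient packing of Boolean variables into field elements. A block of $t$ elements of $\field$ can faithfully store $b_t:=\floor{t\log_2 q}$ bits via some fixed injection $\iota\colon\zo^{b_t}\hookrightarrow\F_q^{t}$ (which exists because $2^{b_t}\le q^{t}$), and since $b_t/(t\log_2 q)\to 1$ as $t\to\infty$, I fix once and for all a constant $t=t(q,\delta)$ with $(1-\delta)\cdot\tfrac{t\log_2 q}{b_t}<1-\delta/2$. (Using a single field element per block, $t=1$, would be too wasteful when $q$ is not a power of two — this is precisely why one groups elements.) Given a $k$-SAT formula $\phi$ on $z_1,\dots,z_N$, partition its variables into $B:=\ceil{N/b_t}$ blocks (padding the last block with dummies), assign to block $j$ a tuple $Y^{(j)}=(X_{j,1},\dots,X_{j,t})$ of field variables, and identify a Boolean assignment to block $j$ with the point $\iota(\cdot)$ in $\mathcal V:=\iota(\zo^{b_t})\subseteq\F_q^{t}$. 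This uses $n=Bt\le Nt/b_t+t$ field variables.

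Now I assemble the system. For every block $j$ add the equation $\mathrm{Valid}(Y^{(j)})=0$, where $\mathrm{Valid}(Y):=\sum_{a\in\F_q^{t}\setminus\mathcal V}\prod_{s=1}^{t}\bigl(1-(Y_s-a_s)^{q-1}\bigr)$ is the degree‑$\le t(q-1)$ polynomial whose zero set is exactly $\mathcal V$ (it is identically zero, hence vacuous, precisely when $\mathcal V=\F_q^{t}$, e.g.\ for $q$ a power of two). For every clause $C=(\ell_1\vee\dots\vee\ell_k)$, where literal $\ell_s$ refers to a Boolean variable sitting in block $j_s$, add the equation $\prod_{s=1}^{k}\mathrm{False}_{\ell_s}(Y^{(j_s)})=0$, where $\mathrm{False}_{\ell_s}$ is the degree‑$\le t(q-1)$ polynomial obtained by interpolation over $\F_q$ (each variable to degree $\le q-1$) that on any valid block $Y\in\mathcal V$ returns $0$ if $\iota^{-1}(Y)$ satisfies $\ell_s$ and $1$ otherwise (its values on invalid blocks do not matter). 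Every polynomial of the system has degree at most $d:=k\,t\,(q-1)$, a constant depending only on $q$ and $\delta$; there are $\poly(N)$ equations ($B$ validity equations plus one per clause), each involving only $O(1)$ variables and hence storable with $O(1)$ monomials; and the whole instance is computable in $\poly(N)$ time, since all the $\mathrm{Valid}$ and $\mathrm{False}$ polynomials arise from $O(1)$-size data and can be precomputed. Correctness is immediate: the $\mathrm{Valid}$ equations force every block of a solution into $\mathcal V$, i.e.\ to encode a genuine Boolean assignment, and on such points the equation for $C$ holds iff some factor $\mathrm{False}_{\ell_s}$ vanishes iff some literal of $C$ is satisfied; hence the system has a solution over $\field$ if and only if $\phi$ is satisfiable.

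To finish, assume for contradiction that $\PESqd{q}{d}$ admits an $\OOstar(q^{(1-\delta)n})$-time algorithm. Composing the reduction with this algorithm decides $k$-SAT on $N$ variables in time $\poly(N)\cdot q^{(1-\delta)n}$, which by $n\le Nt/b_t+t$ equals $\OO\bigl(\poly(N)\cdot 2^{\alpha N}\bigr)$ with $\alpha=(1-\delta)\tfrac{t\log_2 q}{b_t}$; since $\alpha<1-\delta/2$ by the choice of $t$, this is $\OO(2^{(1-\delta/2)N})$, contradicting \pp{SETH}. I expect the only real obstacle to be pushing the variable blow‑up down to $\tfrac{1}{\log_2 q}(1+o(1))$ — which forces the block size $t$ to be chosen relative to $\delta$; designing the $O(1)$-degree gadget polynomials and checking the equivalence is then routine.
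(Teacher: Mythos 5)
Your proof is correct and takes essentially the same approach as the paper: a block-packing reduction from \pp{$k$-SAT} in which constant-size groups of Boolean variables are encoded by constant-size groups of $\field$-variables, each clause is expressed via a product of interpolated constant-degree gadget polynomials, and the block size is tuned to $\delta$ so that the variable blow-up is $(1+\delta/2+o(1))/\log_2 q$. The only design difference is cosmetic: you use an \emph{injective} encoding together with explicit ``Valid'' equations forcing each block into the image, whereas the paper's Lemma~\ref{lem:PESreduction} uses a \emph{surjective} decoding map (no validity constraints needed for the decision version) and only adds the analogous bounding constraints when upgrading the reduction to a parsimonious one for the counting lower bound of Theorem~\ref{thm:roots-hardness} --- so your variant gives parsimony for free, at the harmless cost of a few extra equations.
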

We leave the reverse setting as an open question: If $d$ remains constant (say, $d=2$), does there exist a constant $\delta>0$ such that \PESqd{q}{d} can be solved in time $\OOstar(q^{(1-\delta) n})$ for all prime powers~$q$?

As a bonus, we also consider the counting complexity of the problem.
We write \Roots{q}{d} for the counting problem where the goal is to compute the number of roots of a single given $n$-variate degree-$d$ polynomial over~$\field$.
Under the counting version of \pp{SETH}, we obtain the following result.
\begin{restatable}{theorem}{ROOTShardness}\label{thm:roots-hardness}\mbox{}
    If \#\pp{SETH} holds, then for all prime powers $q$ and all rationals $\delta >0$, there exists $d\in\N$ such that there is no $\OO(q^{(1-\delta) n})$-time algorithm for \Roots{q}{d}.
\end{restatable}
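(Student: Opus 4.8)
My plan is to prove Theorem~\ref{thm:roots-hardness} by a polynomial-time Turing reduction from $\#k$-SAT to $\Roots{q}{d}$, extending Williams' construction over $\F_2$~\cite{DBLP:conf/soda/Williams18a}. The two external ingredients are: (i) \#\pp{SETH} in the form ``for every $\eps>0$ there is a $k$ with $\#k$-SAT $\notin\pp{TIME}(2^{(1-\eps)N})$ on $N$-variable formulas''; and (ii) the counting version of the Impagliazzo--Paturi--Zane sparsification lemma, which rewrites a $k$-CNF $\FF$ on $N$ variables as a disjoint union of at most $2^{\eta N}$ many $k$-CNFs of clause-density $c(k,\eta)$, so that $\#\pp{SAT}(\FF)$ equals the sum of the counts over the pieces. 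It therefore suffices, for a $k$-CNF $\FF$ of bounded clause-density on $N$ variables, to reduce the computation of $\#\pp{SAT}(\FF)$ to a \emph{single} call of $\Roots{q}{d}$ on a polynomial~$P$ in~$n$ variables, where $d=d(q,\delta)$ is a constant and $n\le(1+o(1))\,\tfrac{N}{\log_2 q}$.

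For the core construction I would: (a) \emph{block-encode} the Boolean variables, packing $\lfloor m\log_2 q\rfloor$ of them into each block of $m$ field variables via fixed extraction polynomials of degree at most $m(q-1)$ --- taking the constant $m$ large makes the packing rate approach $\log_2 q$; (b) \emph{arithmetize} each clause $C_j$ as its degree-$\le k\,m(q-1)$ indicator polynomial $c_j$ of ``$C_j$ is satisfied''; (c) \emph{group} the clauses into blocks of size $g$ and set $R_i=\prod_{j\in\text{block }i}c_j$, of degree $\le g\,k\,m(q-1)$, so that $\FF$ is satisfied exactly when every $R_i$ equals~$1$; and (d) introduce one fresh field variable $w_i$ per clause-block and form $P_0=\sum_i w_i(1-R_i)$. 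For each fixed assignment to the encoding variables, $P_0$ is a linear form in the~$w_i$, so it has $q^{|w|}$ zeros if the encoded Boolean assignment satisfies~$\FF$ and $q^{|w|-1}$ zeros otherwise; summing over all encoding assignments shows that the number of zeros of $P_0$ equals $q^{|w|-1}\bigl(q^{n'}+(q-1)\,W_\FF\bigr)$, where $n'$ is the number of encoding variables and $W_\FF=\sum_{a}\lambda(a)$, summed over satisfying $a$, is a \emph{weighted} count whose weight $\lambda(a)$ is the number of field-tuples that encode the Boolean assignment~$a$.

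The genuinely new difficulty compared to $\F_2$ is that these weights are not uniform unless $q$ is a power of two --- indeed no balanced encoding of $\zo^N$ into $\F_q^{n'}$ exists for odd~$q$, since $2^N$ does not divide $q^{n'}$. I would resolve this with \emph{validity checks}: fix a canonical set $Z_0\subseteq\F_q^m$ with $|Z_0|=2^{\lfloor m\log_2 q\rfloor}$ on which extraction is a bijection, let~$V$ be the degree-$\le m(q-1)$ polynomial equal to~$1$ on $Z_0$ and~$0$ elsewhere, group the encoding-blocks into super-blocks of size~$g'$, and --- with one further fresh variable $w^*_h$ per super-block --- add $\sum_h w^*_h(1-V_h)$ to $P_0$ to obtain~$P$. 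Any field-tuple with some block outside $Z_0$ turns $P$ into a non-constant linear form in $(w,w^*)$, contributing exactly $q^{|w|+|w^*|-1}$ zeros regardless of~$\FF$, while field-tuples with all blocks inside $Z_0$ map \emph{uniformly} onto $\zo^N$; these two contributions combine so that the number of zeros of~$P$ equals
\[
    q^{|w|+|w^*|-1}\bigl(q^{n'}+\lambda(q-1)\,\#\pp{SAT}(\FF)\bigr)
\]
for a single known constant~$\lambda$, from which $\#\pp{SAT}(\FF)$ is read off. The polynomial $P$ has degree $d:=1+m(q-1)\max(gk,g')$, and after sparsification it has $n=n'+|w|+|w^*|$ variables with
\[
    n\;\le\;\tfrac{N}{\log_2 q}\Bigl((1+o_m(1))\bigl(1+\tfrac{1}{m g'}\bigr)+\tfrac{c(k,\eta)\log_2 q}{g}\Bigr)+O(m).
\]

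Finally I would fix the constants in the correct order. Given~$\delta$, set $\eps_0:=\delta/2$ and let~$k$ be the value supplied by \#\pp{SETH}; sparsify with $\eta:=\delta/8$, which determines $c(k,\eta)$; then choose the constants $m,g,g'$ large enough that the parenthesised factor above is strictly below $\tfrac{1-5\delta/8}{1-\delta}>1$, which in turn fixes $d=d(q,\delta)$. An $\OO(q^{(1-\delta)n})$-time algorithm for $\Roots{q}{d}$ would then compute $\#k$-SAT by sparsifying~$\FF$, building the at most $2^{\eta N}$ polynomials, calling the algorithm on each, and summing --- in total time $2^{\eta N}\cdot\poly(N)\cdot q^{(1-\delta)n}\le 2^{(1-\delta/2)N}$ for all large~$N$ --- contradicting the choice of~$k$. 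The step I expect to be the main obstacle is the simultaneous accounting: one must ensure that the validity apparatus costs only an $o(1)$-fraction of $n'$ additional variables (this is what forces the two-level grouping into blocks and super-blocks) while at the same time pushing all four overheads --- the encoding loss, the clause-selector variables, the validity-selector variables, and the $2^{\eta N}$ sparsification factor --- jointly below the threshold at which $q^{(1-\delta)n}$ drops under $2^{(1-\eps_0)N}$.
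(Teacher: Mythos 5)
Your proposal is correct in its essentials, but it takes a genuinely different route from the paper. The paper is modular: it first constructs a \emph{parsimonious} mapping reduction from \pp{$k$-SAT} to \#\PESqd{q}{d} (Lemma~\ref{lem:PESreduction}), where validity of the block encoding is enforced by adding an explicit \textsc{Bound} polynomial equation per block to the system; it then invokes Williams' oracle reduction (Lemma~\ref{lem:williams-roots-reduction}) from \#\PESqd{q}{d} to \Roots{q}{D} as a black box, paying a $q^{\eps n}$-factor in queries. You instead bypass the \#\PESqd{q}{d} intermediate altogether and produce, per sparsified piece, a \emph{single} query polynomial whose root count directly encodes $\#\pp{SAT}(\FF)$, by packaging Williams' selector-variable trick (the $\sum_i w_i(1-R_i)$ linear-form device) into the polynomial itself, and --- this is the part you rightly flag as the new wrinkle for general $q$ --- absorbing the validity constraints via a second family of selector variables $w^*_h(1-V_h)$ rather than via extra equations. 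Your final identity $\#\text{roots}(P) = q^{|w|+|w^*|-1}\bigl(q^{n'}+\lambda(q-1)\#\pp{SAT}(\FF)\bigr)$ checks out (with $\lambda=2^{\text{pad}}$ accounting for dummy Boolean variables), as does the degree bound $d=1+m(q-1)\max(gk,g')$ and the variable-count bound with the factor $\bigl(1+o_m(1)\bigr)\bigl(1+\tfrac{1}{mg'}\bigr)+\tfrac{c(k,\eta)\log_2 q}{g}$, which can be driven below $\tfrac{1-5\delta/8}{1-\delta}$ by taking $m,g,g'$ large. What the paper's route buys is reuse: Lemma~\ref{lem:PESreduction} is shared with the \pp{SETH} lower bound for \PESqd{q}{d}, and Williams' lemma is cited rather than re-derived. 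What your route buys is self-containment and a cleaner oracle interface (one query per sparsified piece instead of $q^{\eps n}$), at the cost of re-deriving what Williams' lemma hides and of a slightly more delicate simultaneous budget over four overheads (encoding loss, clause selectors, validity selectors, sparsification blow-up) --- but you set up the order of quantifiers ($\delta\to\eps_0\to k\to\eta\to c(k,\eta)\to m,g,g'\to d$) correctly, so the accounting closes.
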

We remark that assuming \#\pp{SETH} instead of \pp{SETH} makes the theorem stronger.
The special case~$q=2$ of Theorem~\ref{thm:roots-hardness} was proved by Williams~\cite[Theorem~4]{DBLP:conf/soda/Williams18a}; we combine this proof with our hardness reduction from Theorem~\ref{thm:pes-hardness} to establish the general case $q>2$.

\subsection{Related Work.}

Our main result continues a recent line of work~\cite{lokshtanov2017beating, DBLP:conf/icalp/BjorklundK019, dinur2021improved} on solving systems of polynomial equations over finite fields in time $\OOstar(q^{\delta n})$ for some fixed $\delta < 1$.
Using the polynomial method~\cite{DBLP:conf/coco/Beigel93}, Lokshtanov et al.~\cite{lokshtanov2017beating} obtained a randomized worst-case algorithm better than exhaustive search for any prime power $q$ and degree $d\ge 1$.
They obtain running times of the form $\OOstar(q^{\delta n})$, where $\delta=1-\tfrac{1}{Cd}$ holds for some fixed constant~$C$.
For $q = d = 2$, their algorithm yields a running time of $\OOstar(2^{0.8765n})$ which was improved to $\OOstar(2^{0.804n})$ by Björklund, Kaski, and Williams~\cite{DBLP:conf/icalp/BjorklundK019}, who solely focused on solving \PESqd{2}{d}.
The current state-of-the-art algorithm for \PESqd{2}{d} was devised by Dinur~\cite{dinur2021improved}, significantly improving the running time to $\OOstar(2^{0.695n})$.
All aforementioned algorithms are closely related to our work and will be explained in more detail in Section~\ref{sec:technical-overview}.

We also provide some examples for other settings considered for this problem in the literature and results in those settings, highlighting the diversity of relevant settings, approaches, and applications regarding polynomial equation systems.
As the problem is hard in general (see~Theorem~\ref{thm:pes-hardness}), many approaches aim to solve the problem ``fast enough'' in practice, for example in order to utilize them in cryptographic attacks.
Bard~\cite{DBLP:phd/basesearch/Bard07} and Courtois and Bard~\cite{DBLP:conf/ima/CourtoisB07} solve multivariate polynomial systems over $\mathbb{F}_2$ by reducing them to the satisfiability problem for Boolean formulas.
Bard~\cite{DBLP:phd/basesearch/Bard07} also gives a fast method for solving linear systems of equations over $\mathbb{F}_2$ in practice, combining the so-called Method of Four Russians with Strassen's algorithm.
Ding et al.~\cite{DBLP:journals/iacr/DingGS06} solve multivariate polynomial systems over arbitrary finite fields by reducing the problem to that of solving a single univariate polynomial equation over an extension of the original field.
There is a variety of algorithms based on Gröbner bases, prominent examples being the $F_4$- and the $F_5$-algorithm~\cite{faugere1999,faugere2002} as well as the XL-algorithm~\cite{eurocrypt-2000-2187}.
In many cases, these algorithms output a representation of the set of all solutions.
They can be efficient in restricted cases or at least heuristically, but have very high worst case complexity.
Note that the complexity of the XL-algorithm was not conclusively determined in the original work, see~\cite{abdelmageedMohamed11} for an overview of what is known.

There is also work on restricted cases with a better worst-case complexity than the general problem.
Ivanyos and Santha~\cite{DBLP:journals/tcs/IvanyosS17} gave a polynomial-time algorithm in a restricted setting with applications in quantum computing.
Another line of research aims to obtain complexity bounds for Las Vegas type algorithms for solving polynomial equation systems under certain additional assumptions that are likely to hold for systems with random coefficients.
Here, the goal is to compute a representation of all solutions in a time that is between quadratic and cubic in the number of solutions.
These algorithms typically do not offer efficient ways to answer the corresponding decision problem \PESqd{q}{d}, and they require additional assumptions on the system.
A recent work in this vein is due to van der Hoeven and Lecerf~\cite{DBLP:journals/focm/HoevenL21}, which also provides a good exposition of related work.

\subsection{Our Contributions.}\label{sec:our-contributions}

We briefly mention the key contributions of this paper compared to previous work.
While many individual techniques generalize easily from $\F_2$ to $\field$ to obtain Theorem~\ref{thm:pes-algorithm}, the overall argument is quite delicate.
The following are the most important changes compared to Dinur's algorithm~\cite{dinur2021improved} for $\F_2$:
\begin{itemize}[noitemsep]
    \item Instead of the Fast Möbius Transform, which only works over $\F_2$, we use a suitable Fast Multipoint Evaluation algorithm by Van der Hoeven and Schost~\cite{DBLP:journals/aaecc/HoevenS13} (see Lemma~\ref{lem:interpolation}).
    \item
    Björklund, Kaski, and Williams~\cite{DBLP:conf/icalp/BjorklundK019} as well as Dinur~\cite{dinur2021improved} use
    a cancellation trick modulo~$2$~\cite[Eq.~(19)]{DBLP:conf/icalp/BjorklundK019} to significantly reduce the running time compared to Lokshtanov et al.~\cite{lokshtanov2017beating}. We use Fermat's little theorem to generalize this trick to \emph{symbolic interpolation} over any finite field~$\field$ (see Lemma~\ref{lem:modqtrick}).
    \item Finally, we need to analyze the running time of the algorithm, which turns out to be surprisingly non-trivial in the general case. To do so, we use a bound on the \emph{extended binomial coefficient}, which is equal to the number of degree-$\Delta$ monomials in $\field[X_1,\dots,X_n]$. Entropy-style bounds on the extended binomial coefficient have only recently been studied in mathematics in the context of the cap set problem (see Sections~\ref{sec:approx-number-monomials} and~\ref{sec:running-time}).
\end{itemize}
Moreover, our hardness results (Theorems~\ref{thm:pes-hardness} and~\ref{thm:roots-hardness}) appear to be novel. We follow standard techniques from fine-grained complexity to design a suitable reduction from \pp{$k$-SAT}.

\subsection{Technical Overview.}\label{sec:technical-overview}

We provide a high-level overview of the proof of Theorem~\ref{thm:pes-algorithm} and discuss the key techniques of our algorithm.
We follow the main insights of the previous papers~\cite{lokshtanov2017beating,DBLP:conf/icalp/BjorklundK019,dinur2021improved}.
Although two of these papers~\cite{DBLP:conf/icalp/BjorklundK019,dinur2021improved} only consider the special case of $\F_2$, many techniques generalize to $\field$ for $q\geq2$, which is why we describe the techniques in the general case and sprinkle in our contributions to make the generalization work.

\paragraph*{Polynomial Method using Indicator Polynomial.}
Based on the so-called \emph{polynomial method} from Boolean circuit complexity~\cite{DBLP:conf/coco/Beigel93}, Lokshtanov et al.~\cite{lokshtanov2017beating} devised a randomized algorithm for \PESqd{q}{d} by capturing the entire system of polynomial equations as a single \emph{probabilistic} polynomial.
This system has a \emph{small} exponential number of monomials and has a very high probability of being consistent with the original system on many assignments.
For polynomials $P_1, \dots, P_m \in \field[X_1,\dots,X_n]$, we define the associated \emph{indicator polynomial} $F\in\field[X_1,\dots,X_n]$ as
\[
    F \coloneqq \prod_{i=1}^m \paren*{1-(P_i)^{q-1}}.
\]
By Fermat's little theorem, $F(x) = 1$ holds for a vector~$x\in\field^{n}$ if and only if $P_1(x)=\dots=P_m(x)=0$ holds.
Thus, evaluating $F$ on all points $x \in \field^{n}$ suffices to decide whether the system has a solution, that is, whether the polynomials have a common root.
For more details, refer to Section~\ref{sec:equations-to-sum}.
Unfortunately, this idea alone does not yet yield an efficient algorithm for \PESqd{q}{d}, as we are unable to expand~$F$ as a sum of monomials or evaluate it faster than in time $\OOstar(q^n)$---this is because~$F$ would need to be evaluated at each of the $q^{n}$ evaluation points~$x\in\field^{n}$.

\paragraph*{Probabilistic Polynomial Construction of Razborov--Smolensky.}

Building on the observation in the previous paragraph, Lokshtanov et al.~\cite{lokshtanov2017beating} used the probabilistic polynomial construction of \emph{Razborov and Smolensky}~\cite{razborov1987lower,smolensky1987algebraic}, formally stated in Lemma~\ref{lem:RazborovSmolensky}, to reduce the \emph{number} of polynomials~$P_i$.
It transforms the original polynomials into much fewer, probabilistic polynomials~$\widetilde P_i\in\field[X_1,\dots,X_n]$.
In turn, the indicator polynomial~$\widetilde F\in\field[X_1,\dots,X_n]$ constructed from the polynomials~$\widetilde P_i$ is of smaller degree and each evaluation~$\widetilde P_i(x)$ is likely to be equal to the corresponding evaluation~$F(x)$.
Efficiently evaluating the probabilistically obtained indicator polynomial~$\widetilde F$ on many carefully chosen assignments leads to exponential savings over exhaustive search.

\paragraph*{Randomized Isolation Technique of Valiant--Vazirani.}

Björklund et al.~\cite{DBLP:conf/icalp/BjorklundK019} observed that the previous approach of Lokshtanov et al.~\cite{lokshtanov2017beating} included a \emph{decision-to-parity reduction} within the algorithm which can also be done on the system of polynomials itself using randomized isolation techniques.
One elegant technique is \emph{Valiant--Vazirani affine hashing} \cite{valiant1985np}, formally stated in Lemma~\ref{lem:ValiantVazirani}, that probabilistically transforms a polynomial equation system into one that preserves \emph{exactly one solution} of the original system with high probability, if the latter has a solution, and does not add any solutions otherwise.
The isolation technique probabilistically reduces the decision problem \PESqd{q}{d} to the intermediate problem \SUMqd{q}{d} that computes the \emph{full sum}~$Z \in \field$ over all evaluation points of $F$ defined as
\[
    Z \coloneqq \sum_{x\in\field^n} F(x).
\]
If the new system successfully isolates a single solution, the resulting sum $\widetilde{Z}$ is $0$-$1$-valued and equals $1$ if and only if the original system has a solution.
For more details, refer to Section~\ref{sec:equations-to-sum}.
Note that without the isolation technique, $Z$ may be zero over $\field$ even if the original system has a solution.

\paragraph*{Partial Sum Polynomials.}

Another insight of Björklund et al.~\cite{DBLP:conf/icalp/BjorklundK019} is that any given instance of \SUMqd{q}{d} can be reduced to many smaller instances of \SUMqd{q}{d}.
Each of these instances is obtained by fixing a subset of variables to a particular value.
More precisely, let $\beta = \lceil \kappa n\rceil$ for an appropriately chosen $\kappa \in (0,1)$ and partition the variables into two disjoint subsets of size $n-\beta$ and $\beta$, respectively.
This allows us to express the full sum as $Z = \sum_{y\in\field^{n-\beta}} \sum_{z\in\field^{\beta}} F(y,z)$.
For all fixed~$y \in \field^{n-\beta}$, define $Z_{\beta}(y)$ as the \emph{partial sum}
\[
    Z_{\beta}(y) \coloneqq \sum_{z\in\field^{\beta}} F(y,z)\,,
\]
which can be interpreted as a polynomial~$Z_\beta$ over the variables $y \in \field^{n-\beta}$ (see Section~\ref{sec:sum-to-partial-sum}) and can be computed recursively.

How does this setup help improve the running time?
Clearly, computing~$Z$ naïvely by evaluating $Z_\beta$ at all~$q^{\beta}$ evaluation points and then applying the identity~$Z=\sum_{y\in\field^{n-\beta}} Z_\beta(y)$ in time $\OOstar(q^{n-\beta})$ is no better than an $\OOstar(q^n)$-time exhaustive search.
However, the crucial insight is that the Razborov--Smolensky technique can be applied to obtain probabilistic polynomials $\widetilde{P}_i$, leading to probabilistic indicator and partial sum polynomials $\widetilde{F}$ and $\widetilde{Z}_\beta$ of lower degree.

\paragraph*{Error Correction via Plurality Votes.}

Since we use the technique of Razborov--Smolensky, the obtained probabilistic polynomials~$\widetilde P_i\in\field[X_1,\dots,X_n]$ are \emph{with high probability} correct for each assignment.
Let~$\Delta \in \N$ be the degree of the corresponding indicator polynomial~$\widetilde{F}$.
By definition, the corresponding partial sum polynomials $\widetilde Z_{\beta}$ have degree at most $\Delta$.
Consequently, we can uniquely determine $\widetilde Z_{\beta}$ by its evaluations for all ${\Delta}$-bounded points in $\field^{n-\beta}$, using known techniques for interpolation (see the paragraph on fast trimmed interpolation and multipoint evaluation below, as well as Section~\ref{sec:multipoint}).

However, assuming we obtain all necessary evaluations of $\widetilde{Z}_\beta$, it is not possible to sum over them directly to get the full sum with high probability:
While Razborov--Smolensky guarantees that the polynomial~$Z_\beta$ is point-wise correct with high probability, it in general has lower degree than the original partial sum polynomial, so we cannot expect it to be correct for all assignments with non-zero probability.
Therefore, we boost the probability of success by independently repeating $t\approx n$~times the computation of the partial sum polynomial $\widetilde Z_\beta$.
Separately for all evaluation points of $Z_\beta$, we compute \emph{plurality votes} over the $t$ independently obtained evaluations of $\widetilde{Z}_\beta$, which is very likely to correct all potential errors, and we have $\sum_{y\in\field^{n-\beta}} \widetilde Z_\beta(y) = Z$ with high probability.
Refer to Section~\ref{sec:error-probability} for more details on the error probability of our algorithm.

\paragraph*{Reduced Number of Evaluations via Symbolic Interpolation.}

Another observation of Björklund et al.~\cite{DBLP:conf/icalp/BjorklundK019} is that for every fixed $y \in \field^{n-\beta}$, the computation of $\widetilde Z_{\beta}(y)$ can be further simplified to
\[
    \widetilde Z_{\beta}(y) = \sum_{z\in\field^{\beta}} \left(\widetilde F_1(y) \cdot \prod_{i=1}^{\beta} z_i^{q-1}\right) = (q-1)^{\beta} \cdot \widetilde F_1(y)
\]
for an appropriately chosen polynomial $\widetilde F_1$.
Refer to Lemma~\ref{lem:modqtrick} for the definition of $\widetilde F_1$.
This can be seen as a \emph{symbolic interpolation}, since the polynomial $\widetilde F_1$ can be viewed as the \emph{symbolic coefficient} of the monomial~$\prod_{i=1}^{\beta} z_i^{q-1}$ in $\widetilde{F}$.
In other words, this reduces the number of points on which the partial sum polynomial~$\widetilde Z_{\beta}$ is evaluated on in Lemma~\ref{lem:Zbeta-degree} to all $({\Delta} - \beta)$-bounded points in $\field^{\beta}$ instead of all  $\Delta$-bounded points in $\field^{\beta}$.

\paragraph*{Efficient Computation of Partial Sum Polynomials.}

The key insight of Dinur~\cite{dinur2021improved} is that all of the smaller instances of \SUMqd{q}{d} are actually \emph{related} and solving them independently like Björklund et al.~\cite{DBLP:conf/icalp/BjorklundK019} is suboptimal.
Let $\beta' = \beta - \lceil \lambda n \rceil$ for an appropriately chosen $\lambda \in (0,1)$, then the variable subset of size~$\beta$ is further divided into two disjoint subsets of size~$\beta-\beta'$ and $\beta'$.
How does this additional variable partition relate to Dinur's insight?
Let $\Delta' \in \N$ and define the degree-$\Delta'$ partial sum polynomial as
\[
    \widetilde Z_{\beta'}(y,u) \coloneqq \sum_{v \in \field^{\beta'}} \widetilde F\big((y,u), v \big)
\]
for all fixed $(y,u) \in \field^{n-\beta'}$.
This nicely illustrates the fact that all smaller instances of \SUMqd{q}{d} (each instance corresponds to a fixed assignment~$(y,u) \in \field^{n-\beta'}$) are related, because all instances are asking for sums over all evaluations on assignments $v \in \field^{\beta'}$ of the \emph{same} polynomial.
Instead of considering all assignments of $v \in \field^{\beta'}$ independently for every single instance, it is more efficient to consider them \emph{exactly once} for all $(y,u) \in \field^{n-\beta'}$ combined (for more details, see Section~\ref{sec:PartialSums}).
This leads to the idea to have our recursive algorithm compute the polynomial $\widetilde{Z}_\beta$ (followed by $\widetilde{Z}_{\beta'}$ and so on) as a whole, whose evaluations provide us with \emph{all} the required partial sums.

Recall that $\widetilde  Z_{\beta'}$ of degree $\Delta'$ is uniquely determined by its evaluations for all ${\Delta'}$-bounded points in $\field^{\beta'}$.
As a consequence, the evaluation of $\widetilde Z_{\beta'}$ for all $\Delta'$-bounded points~$(y,u) \in \field^{n-\beta'}$ takes a \emph{single recursive call} to the algorithm that we are constructing.
This stands in stark contrast to the exponentially many recursive calls necessary in the algorithm of Björklund et al.~\cite{DBLP:conf/icalp/BjorklundK019}.
For a detailed visualization of the recursion schemes of Björklund et al.~\cite{DBLP:conf/icalp/BjorklundK019} and Dinur~\cite{dinur2021improved} and by extension our algorithm refer to Figure~\ref{fig:approach-comparison}.

\paragraph*{Fast Trimmed Interpolation and Multipoint Evaluation.}

As further explained in Section~\ref{sec:multipoint}, there are a variety of interpolation and multipoint evaluation algorithms that allow us to switch between the evaluation and the coefficient representation of a polynomial.
The recursion scheme introduced by Dinur~\cite{dinur2021improved} repeatedly interpolates and evaluates the $\Delta'$-bounded polynomial~$\widetilde Z_{\beta'}$ on \emph{exactly} the set of degree-bounded points that is required to uniquely determine $\widetilde  Z_{\beta'}$.
Computing the full evaluation representation on every recursion level like Björklund et al.~\cite{DBLP:conf/icalp/BjorklundK019} is suboptimal.
Instead, we use the \emph{fast trimmed interpolation and multipoint evaluation} algorithms by Van der Hoeven and Schost~\cite{DBLP:journals/aaecc/HoevenS13} with a running time that is linear in the number of provided evaluations or coefficients of a polynomial, respectively (refer to Lemma~\ref{lem:interpolation}).

More precisely, we compute the coefficient representation of $\widetilde  Z_{\beta'}$ from its evaluations for all $\Delta'$-bounded points~$(y,u) \in \field^{n-\beta'}$ in the recursive call.
With its coefficient representation, we can further evaluate $\widetilde Z_{\beta'}$ on all points $\TrimmedFp{n-\beta}{\Delta} \times \field^{\beta - \beta'}$, where $\TrimmedFp{n-\beta}{\Delta}$ is the set of $\Delta$-bounded points in $\field^{n-\beta}$.
After using plurality votes to correct possible mistakes, we compute the evaluations of the polynomial
\[
  \widetilde  Z_{\beta}(y) \coloneqq \sum_{u \in \field^{\beta-\beta'}} \widetilde Z_{\beta'}(y,u)
\]
on all $\Delta$-bounded points~$y \in \field^{n-\beta}$ and finally interpolate to obtain the polynomial~$\widetilde{Z}_\beta$.
This polynomial is with high probability identical to the partial sum polynomial $Z_{\beta}$.


\pgfdeclarepatternformonly{sparse vertical lines}{\pgfqpoint{-0pt}{-0pt}}{\pgfqpoint{6pt}{6pt}}{\pgfqpoint{6pt}{6pt}}{
        \pgfsetlinewidth{0.75pt}
        \pgfpathmoveto{\pgfqpoint{1.5pt}{0pt}}
        \pgfpathlineto{\pgfqpoint{1.5pt}{6pt}}
        \pgfusepath{stroke}}

\pgfdeclarepatternformonly{dense vertical lines}{\pgfqpoint{-0pt}{-0pt}}{\pgfqpoint{3pt}{3pt}}{\pgfqpoint{3pt}{3pt}}{
        \pgfsetlinewidth{0.75pt}
        \pgfpathmoveto{\pgfqpoint{1.5pt}{0pt}}
        \pgfpathlineto{\pgfqpoint{1.5pt}{3pt}}
        \pgfusepath{stroke}}

\begin{figure}
    \begin{tikzpicture}
        [
            auto,
            level 1/.style={sibling distance=15mm},
            level 2/.style={sibling distance=7mm},
            nodey/.style = {
                    font = \tiny,
                    align = center,
                    inner sep = 0.1em,
                    outer sep = 0em
                },
            nodex/.style = {
                    align = center,
                    inner sep = .2em,
                }
        ]
        \matrix [column sep=2em, row sep=1.5em]
        {

            \node []
            {
                \begin{minipage}[t][]{.4\textwidth}
                    \centering
                    {\sffamily\bfseries Björklund, Kaski, and Williams \cite{DBLP:conf/icalp/BjorklundK019}}
                \end{minipage}
            };
             & \node []
               {
                   \begin{minipage}[t][]{.4\textwidth}
                    \centering
                    {\sffamily\bfseries Dinur~\cite{dinur2021improved} \& this paper}
                \end{minipage}
               };                              \\

            \node [] {}
            child {node (b1) [rectangle,draw, fill=lightblue, minimum width=3cm, minimum height=0.58cm, yshift=1.25cm] {$\scriptstyle B_1$} edge from parent[draw=none] }
            child {node [inner sep = 0em, outer sep = 0em] {} edge from parent[draw=none]}
            child {node [rectangle,draw, minimum width=2cm, minimum height=0.58cm, yshift=1.25cm] {$\scriptstyle A_1$} edge from parent[draw=none]
                    child {node [rectangle,draw, fill=lightblue, minimum width=1.25cm, minimum height=0.58cm] {$\scriptstyle B_2$}}
                    child {node [inner sep = 0em, outer sep = 0em] {} edge from parent[draw=none]}
                    child {node [rectangle,draw, minimum width=0.75cm, minimum height=0.58cm] {$\scriptstyle A_2$}
                            child {node [rectangle,draw, fill=lightblue, minimum width=0.5cm, minimum height=0.58cm] {}}
                            child {node  [rectangle,draw, minimum width=0.25cm, minimum height=0.58cm] {}
                                    child {node {} edge from parent[dotted, line width= 0.75, shorten >= 0.7cm]}
                                    child {node {} edge from parent[dotted, line width= 0.75, shorten >= 0.7cm]}
                                }
                        }
                };

            \begin{scope}
                \draw[-stealth] (-1.5,-0.55) |-  (0.17,-1.55);
                \draw[-stealth] (-1.65,-0.55) node[below, xshift=-0.6cm, yshift=-0.5cm] {$\scriptstyle\cdots$} |- node[below, xshift=.7cm, yshift=-0.12cm, rotate=90, font=\fontsize{3}{3.5}\selectfont] {} (0.17,-1.7);
                \draw[-stealth] (-2.9, -0.55) |- node[below, xshift=1.5cm, yshift=-.1cm] {$\binom{B_1}{\downarrow \Delta_1}_q$ {\scriptsize calls}} (0.17,-1.95);

                \draw[-stealth] (0.7,-2.05) |- (1.6,-3.05);
                \draw[-stealth] (0.55,-2.05) node[below, xshift=-0.12cm, yshift=-0.5cm] {} |- node[below, xshift=0.3cm, yshift=-0.12cm, rotate=90, font=\fontsize{3}{3.5}\selectfont] {} (1.6,-3.2);
                \draw[-stealth] (0.3,-2.05) |- node[below, xshift=0.6cm, yshift=-.1cm] (b2){$\binom{B_2}{\downarrow \Delta_2}_q$ {\scriptsize calls}} (1.6,-3.45);
                \pgfresetboundingbox
                \path[use as bounding box] (-3,0.05) rectangle (2.75,-4.25);
            \end{scope}

             &
            \node [] {}
            child {node (d1)  [rectangle, draw, fill=lightblue, minimum width=2.8cm, minimum height=0.58cm, yshift=1.25cm] {$\scriptstyle (1 - \kappa) n$} edge from parent[draw=none]
                    child {node [pattern= dense vertical lines, pattern color=lightblue, rectangle, draw, minimum width=2.8cm, minimum height=0.58cm] {} edge from parent[draw=none]
                            child {node [pattern= sparse vertical lines, pattern color=lightblue, rectangle, draw, minimum width=2.8cm, minimum height=0.58cm] {} edge from parent[draw=none]}
                        }
                }
            child {node [inner sep = 0em, outer sep = 0em] {} edge from parent[draw=none]}
            child {node [rectangle,draw, minimum width=2.5cm, minimum height=0.58cm, yshift=1.25cm] {$\scriptstyle \kappa n$} edge from parent[draw=none]
                    child {node [pattern= dense vertical lines, pattern color=lightblue, rectangle, draw, minimum width=1.4cm, minimum height=0.58cm] {}
                            child {node [pattern=sparse vertical lines, pattern color=lightblue, rectangle, draw, minimum width=1.4cm, minimum height=0.58cm] {} edge from parent[draw=none]}
                        }
                    child {node [inner sep = 0em, outer sep = 0em] {} edge from parent[draw=none]}
                    child {node [rectangle, draw, minimum width=0.9cm, minimum height=0.58cm] {$\scriptstyle (\kappa - \lambda)n$}
                            child {node [pattern= sparse vertical lines, pattern color=lightblue, rectangle, draw, minimum width=0.55cm, minimum height=0.58cm] {}}
                            child {node [rectangle,draw, minimum width=0.35cm, minimum height=0.58cm] {}
                                    child {node {} edge from parent[dotted, line width= 0.75, shorten >= 0.7cm]}
                                    child {node {} edge from parent[dotted, line width= 0.75, shorten >= 0.7cm]}
                                }
                        }
                };

            \begin{scope}
                \draw (-1.5,-0.6) -- (-1.5,-0.6) node[below] {$\scriptstyle (1-\kappa) n$ {\scriptsize evals}};
                \draw (-0.7,-2) -- (-0.7,-2) node[below] {$\scriptstyle \binom{(1 - (\kappa - \lambda))n}{\downarrow \Delta_1}_q$ {\scriptsize evals}};
                \draw (-0.2,-3.5) -- (-0.2,-3.5) node[below] {$\scriptstyle \binom{(1 - (\kappa - 2\lambda))n}{\downarrow \Delta_2}_q$ {\scriptsize evals}};
                \pgfresetboundingbox
                \path[use as bounding box] (-3,0.05) rectangle (2.75,-4.25);
            \end{scope}
            \\

            \node [] {
                \begin{minipage}[t][]{.45\textwidth}
                    For each recursion level $i$:
                    \begin{itemize}[noitemsep]
                        \item The number of variables $B_i$ that have a fixed value \emph{decreases}
                        \item The number of variables $A_i$ that are free \emph{decreases}
                        \item The \emph{full evaluation representation }of a degree-$\Delta_i$ $|A_i|$-variate polynomial $P$
                        can be obtained from all points in $\TrimmedFp{A_i}{\Delta_i}$
                        \item The degree of $P$ \emph{decreases} with each recursion level, that is, $\Delta_{i}> \Delta_{i+1}$
                        \item $t \cdot\binom{B_i}{\downarrow \Delta_i}_q$ recursive calls to level $i\!+\!1$
                    \end{itemize}
                \end{minipage}
            };
             & \node []
               {
                   \begin{minipage}[t][]{.45\textwidth}
                    For each recursion level $i$:
                    \begin{itemize}[noitemsep]
                        \item The number of variables $n' =(1-(\kappa-i\lambda))n$, for which bounded evaluations are computed, \emph{increases}
                        \item The number of variables $n_i = (\kappa -i\lambda) n$ that are free \emph{decreases}
                        \item The $\Delta_{i}$\emph{-bounded representation} of a degree-$\Delta_i$ $n'$-variate polynomial $P$
                        can be obtained from all points in $\TrimmedFp{n'}{\Delta_{i+1}}$
                        \item The degree of $P$ \emph{decreases} with each recursion level, that is, $\Delta_{i}> \Delta_{i+1}$
                        \item $t$ recursive calls to level $i\!+\!1$
                    \end{itemize}
                \end{minipage}
               };                             \\

            \node [] {\PartialSum{\ast}}
            child {node [rectangle,draw] {$j=1$}
                    child {node [nodey] {} edge from parent[dotted, line width= 0.75]}
                    child {node [nodey] {} edge from parent[dotted, line width= 0.75]}
                    child {node [nodey, transform canvas={yshift=2.5mm}] {$\cdots$} edge from parent[draw=none]}
                    child {node [nodey] {} edge from parent[dotted, line width= 0.75]}
                    child {node [nodey] {} edge from parent[dotted, line width= 0.75]}
                }
            child {node {$\cdots$} edge from parent[draw=none]}
            child {node [rectangle,draw] {$j = t$}
                    child {node (11) [nodey] {}
                            child {node {} edge from parent[dotted, line width= 0.75, shorten >= 1.1cm]}
                        }
                    child {node [nodex] {$B_1$}
                            child {node (21) [nodey] {}
                                    child {node {} edge from parent[dotted, line width= 0.75, shorten >= 1.1cm]}
                                }
                            child {node [nodex] {$B_2$}
                                    child {node (31) [nodey] {}
                                            child {node {} edge from parent[dotted, line width= 0.75, shorten >= 1.1cm]}
                                        }
                                    child {node [nodey] {}
                                            child {node {} edge from parent[dotted, line width= 0.75, shorten >= 1.1cm]}
                                        }
                                    child {node [nodey, transform canvas={yshift=2.5mm}] {$\cdots$} edge from parent[draw=none]}
                                    child {node [nodey] {}
                                            child {node {} edge from parent[dotted, line width= 0.75, shorten >= 1.1cm]}
                                        }
                                    child {node (32) [nodey] {}
                                            child {node {} edge from parent[dotted, line width= 0.75, shorten >= 1.1cm]}
                                        }
                                }
                            child {node [nodey, transform canvas={yshift=2.5mm}] {$\cdots$} edge from parent[draw=none]
                                }
                            child {node [nodey] {}
                                    child {node {} edge from parent[dotted, line width= 0.75, shorten >= 1.1cm]}
                                }
                            child {node (22) [nodey] {}
                                    child {node {} edge from parent[dotted, line width= 0.75, shorten >= 1.1cm]}
                                }
                        }
                    child {node [nodey, transform canvas={yshift=2.5mm}] {$\cdots$} edge from parent[draw=none]}
                    child {node [nodey] {}
                            child {node {} edge from parent[dotted, line width= 0.75, shorten >= 1.1cm]}
                        }
                    child {node (12) [nodey] {}
                            child {node {} edge from parent[dotted, line width= 0.75, shorten >= 1.1cm]}
                        }
                };

            \begin{scope}
                \draw[bend right, transform canvas={yshift=10mm}, shorten <=0.5cm, shorten >=0.5cm] (11) edge node[left, pos=0.15] {$\binom{B_1}{\downarrow \Delta_1}_q$} (12);
                \draw[bend right, transform canvas={yshift=10mm}, shorten <=0.5cm, shorten >=0.5cm] (21) edge node[left, pos=0.15] {$t$} (22);
                \draw[bend right, transform canvas={yshift=10mm}, shorten <=0.5cm, shorten >=0.5cm] (31) edge node[left, pos=0.15] {$\binom{B_2}{\downarrow \Delta_2}_q$} (32);
                \pgfresetboundingbox
                \path[use as bounding box] (-2.9,0.2) rectangle (2.95,-6.25);
            \end{scope}

             & \node [] {\PartialSum{\ast}}
            child {node [rectangle,draw] {$j=1$}
                    child {node (1) [nodey] {} edge from parent[draw=none]}
                    child {node [nodex] {$(\kappa - \lambda)n$}
                            child {node [nodex] {$(\kappa - 2\lambda)n$}
                                    child {node {} edge from parent[dotted, line width= 0.75, shorten >= 0.7cm]}
                                }
                        }
                    child {node (2) [nodey] {} edge from parent[draw=none]}
                }
            child {node {$\cdots$} edge from parent[draw=none]}
            child {node [rectangle,draw] {$j = t$}
                    child {node (3) [nodey] {} edge from parent[draw=none]}
                    child {node [nodex] {$(\kappa - \lambda)n$}
                            child {node [nodex] {$(\kappa - 2\lambda)n$}
                                    child {node {} edge from parent[dotted, line width= 0.75, shorten >= 0.7cm]}
                                }
                        }
                    child {node (4) [nodey] {} edge from parent[draw=none]}
                };

            \begin{scope}
                \draw[bend right, transform canvas={yshift=10mm}, shorten <=0.4cm, shorten >=0.4cm] (1) edge node[left, pos=0.25] {$\scriptstyle t$} (2);
                \draw[bend right, transform canvas={yshift=-5mm}, shorten <=0.4cm, shorten >=0.4cm] (1) edge node[left, pos=0.25] {$\scriptstyle t$} (2);

                \draw[bend right, transform canvas={yshift=10mm}, shorten <=0.4cm, shorten >=0.4cm] (3) edge node[right, pos=0.75] {$\scriptstyle t$}  (4);
                \draw[bend right, transform canvas={yshift=-5mm}, shorten <=0.4cm, shorten >=0.4cm] (3) edge node[right, pos=0.75] {$\scriptstyle t$}  (4);
                \pgfresetboundingbox
                \path[use as bounding box] (-2.1,0.2) rectangle (2.1,-5.2);
            \end{scope}                                \\
        };
    \end{tikzpicture}
    \caption{\label{fig:approach-comparison}%
        We compare the algorithm of Björklund, Kaski, and Williams~\cite{DBLP:conf/icalp/BjorklundK019} with the one of Dinur~\cite{dinur2021improved} (which is the basis for our algorithm).
        The main differences lie in how the variables are partitioned and how the partial sum polynomial is computed.
        In the schematic of the variable partitioning, each box with a set of variables represents the monomials of the indicator polynomial in these variables and the colored entries illustrate the computed evaluations at the current recursion level.
        Note that both recursion schemes have to be repeated $t$ times to boost the outcome probability of Razborov--Smolensky.}
\end{figure}

\section{Preliminaries}

In this section, we introduce the required definitions and preliminary results that allow us to construct our main algorithm and prove its error probability and running time.
A large portion of this section is spent on giving an approximation of the number of monomials of any polynomial $P \in \field[X_1, \dots, X_n]$ of degree at most $d$, see~Section~\ref{sec:approx-number-monomials}.

\subsection{Chernoff Bounds.}
We use the following standard Chernoff bound.

\begin{lemma}[e.g., Mitzenmacher and Upfal {\cite[Theorem~4.5]{mitzenmacher2017probability}}]\label{lem:chernoff}
    Let $X_1,\dots,X_n$ be independent random variables on~$\set{0,1}$ and let $X = X_1+\dots+X_n$.
    For all $\delta$ with $0<\delta<1$, we have
    \[
        \Pr\Big(X \le (1-\delta) \, \E(X)\Big) \le \exp \Big(- \frac{\delta^2 \, \E(X)}{2}\Big).
    \]
\end{lemma}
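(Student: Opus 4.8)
The plan is to use the standard exponential-moment (Chernoff--Cram\'er) method. Write $\mu = \E(X) = \sum_{i=1}^n p_i$, where $p_i = \Pr(X_i = 1)$. For any parameter $t > 0$, Markov's inequality applied to the nonnegative random variable $e^{-tX}$ gives
\[
    \Pr\big(X \le (1-\delta)\mu\big) = \Pr\big(e^{-tX} \ge e^{-t(1-\delta)\mu}\big) \le e^{t(1-\delta)\mu}\cdot\E\big(e^{-tX}\big).
\]
By independence of the $X_i$, the expectation factorizes, and for each Bernoulli term $\E(e^{-tX_i}) = 1 + p_i(e^{-t}-1) \le \exp\big(p_i(e^{-t}-1)\big)$, using $1+x \le e^x$. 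Hence $\E(e^{-tX}) \le \exp\big(\mu(e^{-t}-1)\big)$, and collecting terms,
\[
    \Pr\big(X \le (1-\delta)\mu\big) \le \exp\Big(\mu\big(t(1-\delta) + e^{-t} - 1\big)\Big).
\]

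Next I would optimize the bound over $t$. The exponent $t(1-\delta) + e^{-t} - 1$ is minimized at $t = -\ln(1-\delta)$, which is positive since $0 < \delta < 1$; substituting this value yields
\[
    \Pr\big(X \le (1-\delta)\mu\big) \le \exp\Big(-\mu\big((1-\delta)\ln(1-\delta) + \delta\big)\Big).
\]
It then remains to establish the elementary inequality $(1-\delta)\ln(1-\delta) + \delta \ge \tfrac{\delta^2}{2}$ for all $\delta \in (0,1)$. I would prove this by setting $f(\delta) = (1-\delta)\ln(1-\delta) + \delta - \tfrac{\delta^2}{2}$, observing $f(0) = 0$ and $f'(0) = 0$, and checking that $f''(\delta) = \tfrac{\delta}{1-\delta} \ge 0$ on $[0,1)$; convexity of $f$ (more precisely, $f'$ nondecreasing with $f'(0)=0$, hence $f'\ge 0$, hence $f$ nondecreasing with $f(0)=0$) then forces $f \ge 0$ on the whole interval. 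Plugging this in gives the claimed bound $\exp(-\delta^2\mu/2)$.

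There is no real obstacle here: the only mildly delicate point is the final analytic inequality, and even that is a short convexity argument. Since the statement is quoted verbatim from a standard reference, one could alternatively just cite it (e.g., Mitzenmacher and Upfal~\cite[Theorem~4.5]{mitzenmacher2017probability}) rather than reproving it.
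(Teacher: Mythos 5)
Your proof is correct, and it is essentially the textbook argument (the exponential-moment method with the auxiliary inequality $(1-\delta)\ln(1-\delta)+\delta \ge \delta^2/2$). The paper itself does not prove this lemma at all --- it simply cites Mitzenmacher and Upfal, Theorem~4.5, as an external reference --- so there is no in-paper proof to diverge from, and your closing remark that citing the reference would suffice matches exactly what the authors do. All the steps check out: Markov on $e^{-tX}$, the factorization by independence, $1+x\le e^x$, optimizing at $t=-\ln(1-\delta)$, and the convexity argument for the final analytic inequality (where indeed $f'(\delta)=-\ln(1-\delta)-\delta$, $f''(\delta)=\delta/(1-\delta)\ge 0$) are all sound.
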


\subsection{Polynomial Rings.}
In this paper, we consider polynomials in the polynomial ring $\field[X_1,\dots,X_n]$ for some fixed prime power $q$.
If $q$ is prime, Fermat's little theorem states $a^{q}=a$ for every~$a\in\field$; equivalently, $a^{q-1}=1$ holds for all $a\in\field\setminus\set{0}$.
This is known to generalize to prime powers~$q$ by applying Lagrange's Theorem (see Lang~\cite[Chapter~I, Proposition~2.2]{DBLP:books/daglib/0070572}) to the subgroup of elements generated by $a$.
Thus, we can restrict our attention to polynomials that have degree at most~$q-1$ in each variable.
We use the following corollary to Fermat's little theorem.
\begin{lemma}\label{lem:fermat-coro}
    For all prime powers~$q$ and all $k\in\set{0,\dots,q-1}$, we have
    \begin{equation}
        \sum_{x\in\field} x^k =
        \begin{cases}
            q-1 & \text{if $k=q-1$, and} \\
            0   & \text{otherwise.}
        \end{cases}
    \end{equation}
\end{lemma}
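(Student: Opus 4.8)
The plan is to prove Lemma~\ref{lem:fermat-coro} by a direct computation, splitting into the cases $k=0$ and $k\ge 1$.

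For $k=0$, the sum $\sum_{x\in\field}x^0 = \sum_{x\in\field}1 = q = 0$ in $\field$, since the characteristic of $\field$ divides $q$; this handles the ``otherwise'' case when $k=0$.

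For $1 \le k \le q-1$, the key is that the multiplicative group $\field^\times = \field\setminus\{0\}$ is cyclic of order $q-1$. Fix a generator $g$. Then $\sum_{x\in\field} x^k = \sum_{x\in\field^\times} x^k = \sum_{j=0}^{q-2} (g^k)^j$, a geometric sum with ratio $r = g^k$. If $k = q-1$, then $g^k = 1$ (by Fermat/Lagrange as recalled in the excerpt), so the sum is $(q-1)\cdot 1 = q-1$, giving the first case. If $1 \le k \le q-2$, then $g^k \ne 1$ since $g$ has order $q-1$, so the geometric sum evaluates to $\frac{(g^k)^{q-1}-1}{g^k-1} = \frac{1-1}{g^k-1} = 0$, using $(g^k)^{q-1} = (g^{q-1})^k = 1$. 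This finishes the ``otherwise'' case for $k \ge 1$, completing the proof.

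I don't anticipate a genuine obstacle here — it is a standard fact. The only point requiring a little care is invoking cyclicity of $\field^\times$ and the order-$(q-1)$ property of a generator, both of which are already cited in the paragraph preceding the lemma (Lagrange's theorem, Lang). An alternative route that avoids picking a generator is the ``multiply by a shift'' trick: for $k\ge 1$ pick $a \in \field^\times$ with $a^k \ne 1$ (possible exactly when $k \not\equiv 0 \pmod{q-1}$), note the map $x\mapsto ax$ permutes $\field$, so $S \coloneqq \sum_x x^k = \sum_x (ax)^k = a^k S$, whence $(1-a^k)S = 0$ and $S=0$; when $k=q-1$ every nonzero $x$ contributes $1$, so $S = q-1$. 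Either argument is short, so I would present whichever reads more cleanly, probably the generator-based geometric-sum version since it makes the $k=q-1$ boundary case most transparent.
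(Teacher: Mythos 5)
Your proof is correct and follows essentially the same route as the paper's: handle $k=0$ by $\sum_x 1 = q = 0$, then for $k\ge 1$ use that $\field^\times$ is cyclic with generator $g$ and evaluate the geometric sum $\sum_{j=0}^{q-2}(g^k)^j$, which is $q-1$ when $g^k=1$ (i.e.\ $k=q-1$) and $0$ otherwise. The paper phrases the $k=q-1$ case via Fermat's little theorem directly rather than via $g^{q-1}=1$, but that is the same fact; your alternative shift-by-$a$ argument is a nice variant but not what the paper uses.
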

\begin{proof}
    For $k=0$, we have $x^k=1$ for all $x\in\field$, and thus $\sum_x x^0=q=0$ holds as claimed.
    For $k>0$, we have $0^k=0$, so only the $q-1$ summands with~$x\ne 0$ contribute to the sum.
    In particular, for $k=q-1$, Fermat's little theorem implies $x^{q-1}=1$ for all $x\in\field$ with $x\ne 0$, and thus $\sum_{x\in\field} x^{q-1}=q-1$.

    Now suppose $1\le k\le q-2$.
    Note that the multiplicative group $\field\setminus\set{0}$ is cyclic (see Lang~\cite[Chapter~IV, Corollary~1.10]{DBLP:books/daglib/0070572}), that is, it is generated by a single element~$g\in\field\setminus\set{0}$.
    Consequently, we have
    \[
        \sum_{x\in\field} x^{k}=
        \sum_{x\in\field\setminus\set{0}} x^{k}=
        \sum_{i\in\set{0,\dots,q-2}} \paren{g^i}^{k}=
        \sum_{i\in\set{0,\dots,q-2}} \paren{g^{k}}^{i}=
        \frac{\paren{g^{k}}^{q-1}-1}{g^{k}-1}=
        \frac{1-1}{g^k-1}
        =0\,.
    \]
    Here, the penultimate equality again follows from Fermat's little theorem.
    Also note $g^k \neq 1$ for $k < q-1$, because~$g$ generates $\field$.
    This concludes the proof of the lemma.
\end{proof}

For a vector~$M\in\set{0,\dots,q-1}^n$, we write $X^M$ for the monomial $\prod_{i=1}^n X_i^{M_i}$, and analogously for an assignment $x\in\field^n$ to the vector of variables $X$.
Each $n$-variate polynomial $P\in\field[X_1,\dots,X_n]$ has the form
\[
    P(X_1,\dots,X_n)=\sum_{M\in\set{0,\dots,q-1}^n} c_M X^M
\]
for some values $c_M \in \field$.
The values~$c_M$ are called \emph{coefficients} of~$P$.

In our algorithm, we will compute the sum $\sum_{a\in\field^n}P(a)$ over all evaluation points~$a$ of a polynomial~$P$.
The following lemma implies that this is equivalent to determining the coefficient of the monomial $\prod_{i=1}^n x_i^{q-1}$ in~$P$. That is, only this one coefficient contributes to the sum $\sum_{a\in\field^n}P(a)$ and all other coefficients cancel out.
\begin{lemma}\label{lem:modp-sum}
    Let $q$ be a prime power and $M \in \set{0,\dots,q-1}^n$ for $n\in\N$. Then we have
    \begin{align*}
        \sum_{x \in \field^n} \prod_{i=1}^n x_i^{M_i} =
        \begin{cases}
            (q-1)^{n}, & \text{if } M_i = q-1 \text{ for all } i \in [n] \\
            0,         & \text{otherwise.}
        \end{cases}
    \end{align*}
\end{lemma}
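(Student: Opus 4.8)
The plan is to reduce the $n$-variate product to a product of $n$ one-variable sums and then invoke Lemma~\ref{lem:fermat-coro}. First I would observe that the summand $\prod_{i=1}^n x_i^{M_i}$ factors as a product in which the $i$-th factor depends only on the $i$-th coordinate, so by the distributive law (Fubini for finite sums) we can interchange the sum over $\field^n$ with the product:
\[
    \sum_{x\in\field^n}\prod_{i=1}^n x_i^{M_i}
    = \prod_{i=1}^n\paren*{\sum_{x_i\in\field} x_i^{M_i}}.
\]
This is the only genuinely ``structural'' step; everything else is bookkeeping.

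Next I would apply Lemma~\ref{lem:fermat-coro} to each of the $n$ inner sums. Since $M_i\in\set{0,\dots,q-1}$, that lemma tells us $\sum_{x_i\in\field} x_i^{M_i}$ equals $q-1$ when $M_i=q-1$ and equals $0$ otherwise. Plugging this into the product: if every $M_i=q-1$, the product is $(q-1)^n$; if at least one index $i$ has $M_i\ne q-1$, then the corresponding factor is $0$, hence the whole product vanishes. This gives exactly the two cases in the statement.

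One small point I would be careful about is the degenerate case $n=0$: the empty product is $1=(q-1)^0$ and the condition ``$M_i=q-1$ for all $i\in[n]$'' is vacuously true, so the formula still holds; but since the lemma is invoked for $n\ge 1$ in the algorithm, I might simply note $n\in\N$ covers it or not belabor it. I do not anticipate any real obstacle here — the proof is essentially one line of factoring followed by a direct citation of Lemma~\ref{lem:fermat-coro}; the only thing to get right is making sure the hypothesis $M_i\le q-1$ is in force so that Lemma~\ref{lem:fermat-coro} applies coordinatewise, which is guaranteed because we restricted attention to polynomials of per-variable degree at most $q-1$.
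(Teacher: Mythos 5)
Your proof is correct and takes essentially the same approach as the paper: reduce to the single-variable case of Lemma~\ref{lem:fermat-coro} via distributivity. The only cosmetic difference is that you factor the full $n$-fold sum into a product of $n$ one-variable sums in a single step, whereas the paper peels off one coordinate at a time by induction on $n$; the underlying observation is identical.
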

\begin{proof}
    We prove the statement by induction. For $n=1$, the statement is that of Lemma~\ref{lem:fermat-coro}.
    Assume the statement holds for some $n\in\N$.
    Let $M \in \set{0,\dots,q-1}^{n+1}$ and observe that distributivity implies the following:
    \begin{align*}
        \sum_{x\in\field^{n+1}} \prod_{i=1}^{n+1} x_i^{M_i} = \left( \sum_{y\in\field} y^{M_{n+1}} \right) \cdot \left( \sum_{x\in\field^{n}} \prod_{i=1}^{n} x_i^{M_i} \right).
    \end{align*}
    The statement follows by applying Lemma~\ref{lem:fermat-coro} to the first and the induction hypothesis to the second factor of the product of sums above.
\end{proof}

With the insights of Lemma~\ref{lem:fermat-coro} and Lemma~\ref{lem:modp-sum}, we can show that for every fixed $a \in \field^{n_1}$ the sum $\sum_{b \in \field^{n_2}} P(a,b)$ over all evaluation points~$b$ can be computed by $(q-1)^{n_2} \cdot P_1(X)$, where $P_1(X)$ is obtained from $P$ by setting certain coefficients to $0$.
Since the polynomial $P_1$ can be seen as the symbolic coefficient of the monomial $\prod_{i=1}^{n_2} b_i^{q-1}$ in $P$, this can be seen as a symbolic interpolation.

\begin{lemma}[Symbolic Interpolation]\label{lem:modqtrick}
    Let $q$ be a prime power, $n_1, n_2 \in \N$, $X = (X_1, \dots, X_{n_1})$, $Y = (Y_1, \dots, Y_{n_2})$, $P \in \field[X,Y]$ and let $c_M$ for $M \in \{0, \dots, q-1\}^{n_1 + n_2}$ be the coefficients of $P$.
    Define
    \[P_1(X) \coloneqq \sum_{M \in \{0, \dots, q-1\}^{n_1} \times \{q-1\}^{n_2}} c_M \cdot \prod_{i=1}^{n_1} X_i^{M_i}\,.\]
    Then we have
    \[P_1(X) = (q-1)^{n_2}\cdot\sum_{y \in \field^{n_2}} P(X, y)\,.\]
\end{lemma}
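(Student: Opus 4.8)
The plan is to expand $P$ into its monomial representation, interchange the outer sum over $y\in\field^{n_2}$ with the sum over monomials, and then read off the resulting inner sums over $y$ via Lemma~\ref{lem:modp-sum}. Concretely, I would split each multi-index as $M=(M',M'')$ with $M'\in\set{0,\dots,q-1}^{n_1}$ and $M''\in\set{0,\dots,q-1}^{n_2}$, so that $P(X,Y)=\sum_{M',M''} c_{(M',M'')}\prod_{i=1}^{n_1}X_i^{M'_i}\prod_{j=1}^{n_2}Y_j^{M''_j}$. Substituting $Y=y$ and summing, linearity (treating the $X_i$ as formal symbols) gives
\[
  \sum_{y\in\field^{n_2}} P(X,y)
  = \sum_{M',M''} c_{(M',M'')}\Bigl(\prod_{i=1}^{n_1}X_i^{M'_i}\Bigr)\Bigl(\sum_{y\in\field^{n_2}}\prod_{j=1}^{n_2}y_j^{M''_j}\Bigr).
\]

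Applying Lemma~\ref{lem:modp-sum} with $n\coloneqq n_2$, the bracketed inner sum equals $(q-1)^{n_2}$ when $M''=(q-1,\dots,q-1)$ and $0$ otherwise, so only the terms with $M''=(q-1,\dots,q-1)$ survive. By the definition of $P_1$, this yields the intermediate identity $\sum_{y\in\field^{n_2}}P(X,y)=(q-1)^{n_2}\cdot P_1(X)$.

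It then remains to invert this relation, which is the only point that needs any care. Multiplying both sides by $(q-1)^{n_2}$ gives $(q-1)^{n_2}\sum_{y\in\field^{n_2}}P(X,y)=(q-1)^{2n_2}P_1(X)$, and I would finish by observing that $(q-1)^{2n_2}=1$ in $\field$: since $q=p^k$ for a prime $p$, the field element $q\cdot 1$ is zero, hence $q-1=-1$ in $\field$ and $(q-1)^{2n_2}=(-1)^{2n_2}=1$. Equivalently, $(q-1)^{n_2}$ is its own multiplicative inverse, so multiplying the intermediate identity through by $(q-1)^{n_2}$ recovers exactly $P_1(X)=(q-1)^{n_2}\sum_{y\in\field^{n_2}}P(X,y)$. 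Apart from this characteristic-$p$ remark, the argument is just multi-index bookkeeping plus a single invocation of Lemma~\ref{lem:modp-sum}, so I anticipate no genuine obstacle.
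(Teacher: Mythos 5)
Your proof is correct and follows essentially the same route as the paper: expand into monomials, swap the summation order, kill the terms with $M''\neq(q-1,\dots,q-1)$ via Lemma~\ref{lem:modp-sum}, and then cancel the $(q-1)^{n_2}$ factor. If anything, your justification of the final cancellation is cleaner than the paper's: the paper invokes that ``$q-1$ is idempotent in $\field$,'' which is not literally true (idempotence means $a^2=a$, which for $a=q-1$ holds only when $q=2$); what is actually used is that $q-1=-1$ in $\field$, hence $(q-1)^2=1$, i.e.\ $(q-1)^{n_2}$ is self-inverse, exactly as you argue.
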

\begin{proof}
    Define the set
    \(\mathrm{Mon}_0 \coloneqq \{M \in \{0, \dots, q-1\}^{n_1 + n_2} \mid \text{there is } i \in [n_2] \text{ with } M_{n_1 + i} \neq q-1\}\)
    and the polynomial $P_0(X,Y) \coloneqq \sum_{M \in \mathrm{Mon}_0} c_M (X,Y)^M$.
    Now for any fixed $x \in \field^{n_1}$, the sum of all evaluations of $P$ under the partial assignment $x$ can be written as follows:
    \begin{align*}
        \sum_{y \in \field^{n_2}} P(x, y)
         & = \sum_{y \in \field^{n_2}} P_0(x, y) + \sum_{y \in \field^{n_2}} \left( \prod_{i=1}^{n_2} y_i^{q-1} \cdot P_1(x) \right) \\
         & = \sum_{y \in \field^{n_2}} P_0(x, y) + P_1(x) \cdot \sum_{y \in \field^{n_2}} \prod_{i=1}^{n_2} y_i^{q-1}.
    \end{align*}

    We now show that the first sum is actually equal to $0$. To this end, we write

    \begin{align*}
        \sum_{y \in \field^{n_2}} P_0(x, y)
         & = \sum_{y \in \field^{n_2}} \left( \sum_{M \in \mathrm{Mon}_0} c_M \cdot \prod_{i=1}^{n_1} x_i^{M_i} \prod_{i=1}^{n_2} y^{M_{n_1 + i}} \right)        \\
         & = \sum_{M \in \mathrm{Mon}_0} c_M \cdot \prod_{i=1}^{n_1} x_i^{M_i} \cdot \left( \sum_{y \in \field^{n_2}} \prod_{i=1}^{n_2} y_i^{M_{n_1+i}} \right).
    \end{align*}
    Using the fact that every non-zero monomial of $P_0$ has at least one variable in $Y$ whose exponent is not $q-1$, the innermost sum is $0$ by Lemma~\ref{lem:modp-sum}.
    Hence, the whole sum is equal to~$0$, and we obtain
    \[\sum_{y \in \field^{n_2}} P(x, y) = P_1(x) \cdot \sum_{y \in \field^{n_2}} \prod_{i=1}^{n_2} y_i^{q-1} = P_1(x) \cdot (q-1)^{n_2},\]
    where the last equation is obtained by applying Lemma~\ref{lem:modp-sum} again.
    The desired result follows from the fact that $q-1$ is idempotent in $\field$.
\end{proof}

\subsection{Approximation of the Number of Monomials.}
\label{sec:approx-number-monomials}

For the running time analysis of our algorithm, we bound the number of different monomials in $\field[X_1,\dots,X_n]$ with degree exactly $\Delta$.
This number is also known as the \emph{extended binomial coefficient} $\binom{n}{\Delta}_q$.
Eger~\cite[Equation~(2)]{DBLP:journals/tamm/Eger14} defined this number formally using multinomial coefficients.
\begin{Definition}[Eger \cite{DBLP:journals/tamm/Eger14}]
    For $k_1, \dots, k_q\in\N$,
    the multinomial coefficient $\binom{n}{k_1, \dots , k_q}$ satisfies
    \begin{align}
        \label{eq:def-multinomial}
        \binom{n}{k_1,\dots, k_q} 
        &= \frac{n!}{k_1! \cdot \dots \cdot k_q!}
        \,.
        \intertext{We define the extended binomial coefficient $\binom{n}{\Delta}_q$ via}
        \binom{n}{\Delta}_q
         &\label{eq:def-extended-binomial} =
        \sum_{k_1,\dots,k_q}
        \binom{n}{k_1,\dots,k_q}
        \,,
        \intertext{where the sum is taken over all $k_1, \dots, k_q\in\N$ that satisfy the constraints $\sum_{i=1}^q k_i=n$ and $\sum_{i=1}^q k_i\cdot (i-1)=\Delta$.
        For notational convenience, we also define $\binom{n}{\downarrow\Delta}_q$ via}
        \label{eq:def-extended-binomial-downarrow}
        \binom{n}{\downarrow\Delta}_q 
        &= \sum_{k=0}^\Delta \binom{n}{k}_q\,.
    \end{align}
\end{Definition}
Each multinomial coefficient $\binom{n}{k_1,\dots,k_q}$ on the right side of \eqref{eq:def-extended-binomial} corresponds to the number of monomials over $\field[X_1,\dots,X_n]$ that, for all $i\in\set{1,\dots,q}$, have $k_i$ variables of individual degree $i-1$.
Thus,
$\binom{n}{\Delta}_q$ is the number of monomials with total degree exactly $\Delta$ and
$\binom{n}{\downarrow\Delta}_q$ is the number of monomials with total degree at most~$\Delta$.
For $q=2$, we observe $\binom{n}{\Delta}_2=\binom{n}{\Delta}$, and thus we obtain the classical binomial coefficient.

Let $\TrimmedFp{n}{\Delta}$ be the set of vectors $(x_1, \dots, x_n) \in \field^n$ that satisfy $\sum_{i=1}^n x_i \leq \Delta$ over $\N$. Then each such vector corresponds to a monomial from $\field[X_1,\dots,X_n]$ of degree at most~$\Delta$, and we have $\binom{n}{\downarrow\Delta}_q=\abs{\TrimmedFp{n}{\Delta}}$.
Moreover, it is easy to see that $\binom{n}{\Delta}_q$ is increasing in~$n$ and increasing in~$\Delta$ for $\Delta\in\set{0,\dots,\floor{n(q-1)/2}}$, and that the symmetry $\binom{n}{\Delta}_q=\binom{n}{n-\Delta}_q$ holds.
Furthermore, we have $\binom{n}{\Delta}_q\leq\binom{n}{\downarrow n}_q=q^n$.

It is well-known that the classical binomial coefficient can be bounded using the binary entropy function (e.g., see \cite[Theorem~1]{ArriataGordon}):
\begin{align}
    \binom{n}{\Delta} \leq 2^{n\cdot H(\Delta/n)}
    \text{ where }
    H(p) = -p\log_2(p) - (1-p)\log_2(1-p)\,.
\end{align}
Moreover, this bound is tight up to a factor of $\Theta(\sqrt{n})$ (e.g., see \cite[Theorem~2]{ArriataGordon}).

The extended binomial coefficient has an analogous bound, but the bound does not generally have a closed-form expression. The following lemma appears to be folklore (see \cite[Proposition~4.12]{Blasiak2017On}).
We provide a proof for completeness; the proof can be seen as a generalization of a known proof for the binomial coefficient (see \cite[Proof~1]{ArriataGordon}).
\begin{lemma}\label{lem:ext-binom-bound}
    Let $q\geq 2$ be an integer and let $\alpha \in (0, \frac{1}{2})$.
    For all $n\in\N$, we have
    \begin{align*}
        \binom{n}{\downarrow\alpha(q-1)n}_q
        &\le
        \left(\inf_{0 < x < 1} \frac{x^0 + \dots + x^{q-1}}{x^{\alpha(q-1)}} \right)^n
        =q^{H(q,\alpha)\cdot n}\,,
        \,\text{where}\\
        H(q,\alpha)
        &\coloneqq
        \inf_{\theta < 0} \left( -\alpha\theta + \log_q  \frac{1 - q^{\theta q/(q-1)}}{1- q^{\theta/(q-1)}}\right)\,.
    \end{align*}
\end{lemma}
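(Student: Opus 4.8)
The plan is to use the standard Chernoff-style / exponential-moment (Bernstein–Chernoff) bound on a sum of i.i.d.\ random variables, exactly as one does for the classical binomial coefficient, and then recognize the resulting optimized bound as the claimed infimum, which in turn is rewritten in the ``$\theta$-form'' by the substitution $x = q^{\theta/(q-1)}$. Concretely, let $Y_1,\dots,Y_n$ be i.i.d.\ random variables, each uniform on $\{0,1,\dots,q-1\}$, so that the vector $(Y_1,\dots,Y_n)$ is uniform on $\{0,\dots,q-1\}^n$ and hence $\Pr\big(\sum_i Y_i \le \Delta\big) = \binom{n}{\downarrow\Delta}_q / q^n$. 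Setting $\Delta = \alpha(q-1)n$ and noting $\alpha(q-1) < (q-1)/2 = \E(Y_i)$, this is a genuine lower-tail probability, so Markov's inequality applied to $x^{-\sum_i Y_i}$ for any $0 < x < 1$ gives
\[
  \Pr\Big(\textstyle\sum_i Y_i \le \alpha(q-1)n\Big)
  \;=\;
  \Pr\big(x^{-\sum_i Y_i} \ge x^{-\alpha(q-1)n}\big)
  \;\le\;
  x^{\alpha(q-1)n}\cdot \E\big(x^{-\sum_i Y_i}\big)
  \;=\;
  x^{\alpha(q-1)n}\cdot\Big(\tfrac1q\sum_{j=0}^{q-1} x^{-j}\Big)^{n}.
\]
Replacing $x$ by $1/x$ (still ranging over $(0,1)$, after flipping which endpoint is which — I would just write the moment bound directly with $x^{+\sum Y_i}$ for $x\in(0,1)$ to avoid this cosmetic flip) turns the right-hand side into $q^{-n}\big((x^0+\dots+x^{q-1})/x^{\alpha(q-1)}\big)^n$. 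Multiplying through by $q^n$ yields $\binom{n}{\downarrow\alpha(q-1)n}_q \le \big((x^0+\dots+x^{q-1})/x^{\alpha(q-1)}\big)^n$ for every $x\in(0,1)$; taking the infimum over $x$ gives the first inequality in the lemma.

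For the equality, I would substitute $x = q^{\theta/(q-1)}$, which is a bijection from $\theta \in (-\infty,0)$ to $x \in (0,1)$. Under this substitution $x^{\alpha(q-1)} = q^{\alpha\theta}$, and the geometric sum is $x^0+\dots+x^{q-1} = \frac{x^q - 1}{x - 1} = \frac{q^{\theta q/(q-1)} - 1}{q^{\theta/(q-1)} - 1} = \frac{1 - q^{\theta q/(q-1)}}{1 - q^{\theta/(q-1)}}$ (multiplying numerator and denominator by $-1$ so both factors are positive for $\theta<0$, since $0<q^{\theta/(q-1)}<1$ and $0<q^{\theta q/(q-1)}<1$). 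Therefore
\[
  \frac{x^0+\dots+x^{q-1}}{x^{\alpha(q-1)}}
  = q^{-\alpha\theta}\cdot \frac{1 - q^{\theta q/(q-1)}}{1 - q^{\theta/(q-1)}}
  = q^{\,-\alpha\theta + \log_q\!\frac{1 - q^{\theta q/(q-1)}}{1 - q^{\theta/(q-1)}}}\,,
\]
and taking $\log_q$ of the infimum over $x$ (equivalently the infimum over $\theta<0$ of the exponent, since $\log_q$ and $x\mapsto q^x$ are monotone) gives exactly $H(q,\alpha)$, hence $\inf_x \big(\cdots\big)^n = q^{H(q,\alpha)n}$.

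I do not expect a serious obstacle here; the argument is a textbook exponential-moment bound. The only points requiring a little care are bookkeeping: (i) making sure the infimum is taken over the correct half-open interval $x\in(0,1)$ (equivalently $\theta<0$) and that this regime is the relevant one because $\alpha(q-1) < \E(Y_i)$, so that the optimal $x$ indeed lies in $(0,1)$ rather than being a degenerate boundary value; (ii) checking that the numerator and denominator of the geometric-sum identity are both negative for $\theta<0$ so that their ratio (written as $\frac{1 - q^{\theta q/(q-1)}}{1 - q^{\theta/(q-1)}}$) is manifestly positive and the $\log_q$ is well-defined; and (iii) the harmless direction-flip between the ``$x$'' in the Markov step and the ``$x$'' in the statement. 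None of these is deep. One could alternatively cite \cite[Proposition~4.12]{Blasiak2017On} directly, but giving the self-contained Chernoff proof keeps the paper closed under its own dependencies.
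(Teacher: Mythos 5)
Your argument is correct and takes essentially the same route as the paper: both interpret $\binom{n}{\downarrow\Delta}_q/q^n$ as a lower-tail probability for a sum of i.i.d.\ uniform $\{0,\dots,q-1\}$-valued variables, apply Markov to an exponential moment with a parameter in $(0,1)$ (the paper writes it as $e^{\theta}$ with $\theta<0$), and pass to the $\theta$-form via $x=q^{\theta/(q-1)}$. Your one slip, which you flag yourself as the ``cosmetic flip,'' is that the $x^{-\sum_i Y_i}$ formulation needs $x>1$ (not $x\in(0,1)$) for $\sum_i Y_i\le\Delta$ to coincide with $x^{-\sum_i Y_i}\ge x^{-\Delta}$; the $x^{+\sum_i Y_i}$, $x\in(0,1)$ version you propose as the clean fix is exactly what the paper does.
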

We remark that for constant~$q$ and~$\alpha$, as $n$ tends to infinity, the bound in Lemma~\ref{lem:ext-binom-bound} is tight up to factors subexponential in~$n$, which follows from Cramér's theorem~\cite[§2.4]{RAS15} in the theory of large deviations.
\begin{proof}
    Let $\Delta\coloneqq\alpha(q-1)n$.
    Since $\binom{n}{\downarrow\Delta}_q$ is the number of monomials with total degree at most~$\Delta$, it is equal to~$q^n$ times the probability that a uniformly random monomial has total degree at most~$\Delta$.
    We prove the inequality by bounding this probability.
    \begin{align*}
        \binom{n}{\downarrow\Delta}_q
        &=
        q^n\cdot
        \Pr_{d_1, \dots, d_n \in \set{0, \dots, q-1}} \left( \sum_{i=1}^n d_i \le \Delta \right) 
        =
        q^n\cdot
        \inf_{\theta<0}\;\Pr_{d_1, \dots, d_n \in \set{0, \dots, q-1}} \left( e^{\theta \sum_{i=1}^n d_i} \ge e^{\Delta \theta } \right)
        \intertext{The second equality trivially holds for all $\theta<0$; we apply Markov's inequality next.}
        &\le
        q^n\cdot
        \inf_{\theta<0}\;\frac{\E_{d_1, \dots, d_n \in \set{0, \dots, q-1}} \left( e^{\theta \sum_{i=1}^n d_i} \right)}{e^{\Delta \theta }}
        \intertext{Since all $d_i$'s are independent, the expected value is multiplicative.}
        &=
        q^n\cdot
        \inf_{\theta<0}\;\left( \frac{\E_{d\in \set{0, \dots, q-1}} \left( e^{\theta d} \right)}{e^{\alpha(q-1)\theta}} \right)^n
        \intertext{Next we apply the definition of the expected value, which cancels the $q^n$ term.}
        &= \left(\inf_{\theta < 0}  \frac{e^{0\cdot\theta} + \dots + e^{(q-1)\cdot \theta}}{e^{\alpha(q-1)\theta}}\right)^n
        \intertext{We substitute $x = e^{\theta}$; by $\theta < 0$, we have $x \in (0,1)$.}
        &= \left(\inf_{0 < x < 1} \frac{x^0 + \dots + x^{q-1}}{x^{\alpha(q-1)}} \right)^n.
    \end{align*}
    This proves the desired inequality.
    To prove that this is equal to $q^{H(q,\alpha)\cdot n}$, we use the closed form for $x^0+\dots+x^{q-1}$ and perform the substitution $x=q^{\theta/(q-1)}$ to see the following with a straightforward calculation:
    \begin{align*}
        \log_q\paren*{\inf_{0 < x < 1} \frac{x^0 + \dots + x^{q-1}}{x^{\alpha(q-1)}}}
        &=
        \inf_{0 < x < 1} \log_q\paren*{\frac{1-x^q}{(1-x) x^{\alpha(q-1)}}}\\
        &=
        \inf_{\theta<0} - \alpha\theta + \log_q\paren*{
            \frac%
            {1-q^{\theta q/(q-1)}}%
            {1-q^{\theta/(q-1)}}%
            }
        = H(q,\alpha)\,.
    \end{align*}
    This concludes the proof.
\end{proof}

We now observe that $H(q,\alpha)$ specializes to the binary entropy for $q=2$.
\begin{lemma}
    For all $\alpha\in[0,1]$, we have $H(2,\alpha)=H(\alpha)$.
\end{lemma}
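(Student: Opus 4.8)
The plan is to reduce the identity to a one-variable optimization solvable by elementary calculus. By the closed form established in Lemma~\ref{lem:ext-binom-bound}, specialized to $q=2$ (where $x^0+\dots+x^{q-1}=1+x$ and $\alpha(q-1)=\alpha$), we have $2^{H(2,\alpha)}=\inf_{0<x<1}\frac{1+x}{x^{\alpha}}$. So it suffices to prove
\[
    \inf_{0<x<1}\frac{1+x}{x^{\alpha}}=\frac{1}{\alpha^{\alpha}(1-\alpha)^{1-\alpha}}\,,
\]
since taking $\log_2$ of the right-hand side yields precisely $H(\alpha)=-\alpha\log_2\alpha-(1-\alpha)\log_2(1-\alpha)$, under the usual convention $0\log_2 0=0$.

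First I would treat the generic regime $0<\alpha<\tfrac12$. Set $\phi(x):=(1+x)x^{-\alpha}$ on $(0,\infty)$; its logarithmic derivative is $\phi'(x)/\phi(x)=\frac{1}{1+x}-\frac{\alpha}{x}$, which tends to $-\infty$ as $x\to0^+$, is positive for large $x$, and has the unique zero $x^{\star}=\frac{\alpha}{1-\alpha}$. Hence $\phi$ strictly decreases on $(0,x^{\star})$ and strictly increases afterwards, so $x^{\star}$ is its unique global minimizer; the hypothesis $\alpha<\tfrac12$ is exactly what places $x^{\star}$ inside $(0,1)$. Plugging in,
\[
    \phi(x^{\star})=\Bigl(1+\tfrac{\alpha}{1-\alpha}\Bigr)\Bigl(\tfrac{\alpha}{1-\alpha}\Bigr)^{-\alpha}=\frac{1}{1-\alpha}\Bigl(\tfrac{1-\alpha}{\alpha}\Bigr)^{\alpha}=\alpha^{-\alpha}(1-\alpha)^{\alpha-1}\,,
\]
which is exactly the claimed value.

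It then remains to deal with the degenerate values of $\alpha$ by the same template, reading off the infimum from the monotonicity of $\phi$. For $\alpha=0$, $\phi(x)=1+x$ is increasing on $(0,1)$, so the infimum is $\lim_{x\to0^+}(1+x)=1=2^{H(0)}$. For $\alpha=\tfrac12$, the quantity $\frac{1}{1+x}-\frac{\alpha}{x}$ is negative throughout $(0,1)$, so $\phi$ is strictly decreasing there and the infimum is the (unattained) limit $\lim_{x\to1^-}\phi(x)=2=2^{H(1/2)}$. I expect the only delicate point to be this bookkeeping: since the infimum is over the \emph{open} interval $(0,1)$, it need not be attained, so one must verify in each regime whether the critical point $x^{\star}$ lies in $(0,1)$ or whether the value is only approached as $x\to1^-$; beyond that, the proof is the routine algebraic simplification carried out above, and the appearance of the binary entropy is immediate from the closed form, so no further identity needs to be checked.
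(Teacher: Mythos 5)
Your proof is correct and is essentially the same optimization carried out by the paper, just in the equivalent variable $x=2^\theta$ rather than $\theta$ (so your critical point $x^\star=\tfrac{\alpha}{1-\alpha}$ is exactly the paper's $\theta^\ast=\log_2\alpha-\log_2(1-\alpha)$). The only substantive differences are cosmetic: you do the calculus by hand where the paper defers to a computer algebra system, and you explicitly treat the boundary cases $\alpha=0$ and $\alpha=\tfrac12$ where the minimizer lies at or outside the boundary of the open domain --- a point the paper's proof elides. Note that both your proof and the paper's in fact only establish the identity on $\alpha\in[0,\tfrac12]$ (the critical point leaves the feasible region for $\alpha>\tfrac12$, and indeed $H(2,\alpha)$ as defined by the infimum stays equal to $1$ there while $H(\alpha)$ drops below $1$), so the lemma's stated range $[0,1]$ is broader than what either argument supports; but since the paper only ever invokes the lemma for $\alpha\in(0,\tfrac12)$, this is a pre-existing imprecision you have faithfully reproduced rather than a gap you introduced.
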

\begin{proof}
    The proof follows from basic calculus. First we define the function
    \[
    f(\theta)=-\alpha\theta + \log_2\paren*{1 - 4^{\theta}} - \log_2\paren*{1- 2^{\theta}}
    \]
    and note 
    \(
    H(2,\alpha)
    =
    \inf_{\theta < 0} f(\theta)
    \).
    Using a computer algebra system such as Wolfram Alpha, we verify
    $\lim_{\theta\to-\infty} f(\theta)=\infty$ and
    $\lim_{\theta\to 0^-} f(\theta)=1$, that the derivative $f'(\theta)$ is zero if and only if $\theta=\theta^\ast\coloneqq\log_2(\alpha) - \log_2(1-\alpha)<0$ holds, and that $f(\theta^*)\leq1$ holds at this value $\theta^\ast$. Thus, this is where the infimum is attained, and we have:
    \begin{align*}
        H(2,\alpha)&=f(\theta^\ast)=-\alpha\log_2\paren*{\frac{\alpha}{1-\alpha}}
        +\log_2\paren*{\frac{1-2\alpha}{(1-\alpha)^2}} - \log_2\paren*{\frac{1-2\alpha}{1-\alpha}}\\
        &=
        -\alpha\log_2(\alpha) -(1-\alpha)\log_2(1-\alpha) = H(\alpha)\,.
    \end{align*}
    This concludes the proof.
\end{proof}
Blasiak et al.~{\cite[Proposition~4.12]{Blasiak2017On}} studied analytic properties of the function $I(q,\alpha)$ defined via $H(q,\alpha) = 1 - I(q-1,\alpha)/\ln q$.
We state their result as follows.\footnote{We remark that their statement contains a small mistake in that they write $e^{\theta-1}$ in place of $e^{\theta}-1$.}
\begin{lemma}[Blasiak et al.~{\cite[Proposition~4.12]{Blasiak2017On}}]%
    \label{lem:exponent-properties}%
    Let $\alpha \in (0, \frac{1}{2})$ be fixed.
    The function $I(q, \alpha)$ is positive, increasing in $q$ and converges to $I^\ast_\alpha\coloneqq\sup_{\theta < 0} \big(\alpha\theta - \ln(\frac{e^{\theta}-1}{\theta})\big)$ for $q \rightarrow \infty$.
\end{lemma}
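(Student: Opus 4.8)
The plan is to recast $I(r,\alpha)$ as a Cram\'er-type supremum, from which positivity, monotonicity, and the claimed limit all follow by elementary means. By the defining relation, $I(r,\alpha)=\ln(r+1)\bigl(1-H(r+1,\alpha)\bigr)$, and Lemma~\ref{lem:ext-binom-bound} (applied with $q=r+1$) gives $\ln(r+1)\cdot H(r+1,\alpha)=\inf_{0<x<1}\ln\bigl(\tfrac{x^0+\dots+x^{r}}{x^{\alpha r}}\bigr)$. Substituting this in, pulling the infimum out of the logarithm, and then substituting $x=e^{\theta/r}$ (so that $x\in(0,1)\iff\theta<0$) yields the identity
\[
  I(r,\alpha)\;=\;\sup_{\theta<0}\Bigl(\alpha\theta-\ln\bigl(\tfrac1{r+1}\textstyle\sum_{j=0}^{r}e^{j\theta/r}\bigr)\Bigr)\,.
\]
Conceptually, $\tfrac1{r+1}\sum_{j=0}^{r}e^{j\theta/r}$ is the moment generating function at~$\theta$ of a uniform draw from $\set{0,\dots,r}$ rescaled to $[0,1]$, a distribution that converges to the uniform distribution on $[0,1]$ as $r\to\infty$; the latter has moment generating function $\tfrac{e^{\theta}-1}{\theta}=\int_0^1 e^{\theta s}\,ds$, which already exhibits the shape of $I^\ast_\alpha$. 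I would also sanity-check this identity against the earlier computation $H(2,\alpha)=H(\alpha)$ (i.e.\ the case $r=1$).

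For positivity, write $\psi_r(\theta)\coloneqq\alpha\theta-\ln\bigl(\tfrac1{r+1}\sum_{j=0}^{r}e^{j\theta/r}\bigr)$, so $I(r,\alpha)=\sup_{\theta<0}\psi_r(\theta)$. One computes $\psi_r(0)=0$ and $\psi_r'(0)=\alpha-\tfrac1{r+1}\sum_{j=0}^{r}\tfrac jr=\alpha-\tfrac12$, which is negative since $\alpha<\tfrac12$; hence $\psi_r(\theta)>0$ for $\theta$ just below~$0$ and $I(r,\alpha)>0$. Since $\psi_r(\theta)\to-\infty$ as $\theta\to-\infty$, the supremum is finite and attained at some interior point $\theta^\ast_r<0$. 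Monotonicity of $r\mapsto I(r,\alpha)$ (hence of $I(q,\alpha)$ in~$q$) then reduces to the following clean statement, which I expect to be the main obstacle: \emph{if $h\colon[0,1]\to\R$ is convex, then $A_r(h)\coloneqq\tfrac1{r+1}\sum_{j=0}^{r}h(j/r)$ is non-increasing in $r\ge1$, and strictly decreasing when $h$ is strictly convex.} Granting this and applying it to the convex function $h(s)=e^{\theta s}$, the inner term of the identity above is non-increasing in~$r$ for each fixed $\theta<0$, so $\psi_r(\theta)$ is non-decreasing in~$r$ pointwise, and therefore so is $I(r,\alpha)=\sup_{\theta<0}\psi_r(\theta)$; evaluating at $\theta^\ast_r$, where $e^{\theta^\ast_r s}$ is genuinely strictly convex, promotes this to strict monotonicity.

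To prove the convexity statement, I would use, for each $1\le j\le r$, the convex combination $\tfrac{j}{r+1}=\tfrac{j}{r+1}\cdot\tfrac{j-1}{r}+\tfrac{r+1-j}{r+1}\cdot\tfrac{j}{r}$, whose two base points $\tfrac{j-1}{r}$ and $\tfrac{j}{r}$ both lie in $[0,1]$; convexity then gives $h\bigl(\tfrac{j}{r+1}\bigr)\le\tfrac{j}{r+1}h\bigl(\tfrac{j-1}{r}\bigr)+\tfrac{r+1-j}{r+1}h\bigl(\tfrac{j}{r}\bigr)$. Summing this over $j=1,\dots,r$ and adding the trivial equalities at $j=0$ and $j=r+1$, a short reindexing shows that the total coefficient of $h(0)$, of $h(1)$, and of each $h(j/r)$ on the right equals exactly $\tfrac{r+2}{r+1}$ times its coefficient in $\sum_{j=0}^{r}h(j/r)$; hence $\sum_{j=0}^{r+1}h\bigl(\tfrac{j}{r+1}\bigr)\le\tfrac{r+2}{r+1}\sum_{j=0}^{r}h\bigl(\tfrac{j}{r}\bigr)$, which is $A_{r+1}(h)\le A_r(h)$ after dividing by $r+2$. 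Strictness is immediate because for $1\le j\le r$ the combination is non-trivial and the two base points are distinct.

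For the limit $r\to\infty$, fix $\theta<0$: the quantity $\tfrac1{r+1}\sum_{j=0}^{r}e^{j\theta/r}$ is $\tfrac{r}{r+1}$ times a Riemann sum of $s\mapsto e^{\theta s}$ on $[0,1]$ and hence converges to $\int_0^1 e^{\theta s}\,ds=\tfrac{e^{\theta}-1}{\theta}$, so $\psi_r(\theta)\to\alpha\theta-\ln\tfrac{e^{\theta}-1}{\theta}$; taking the supremum over $\theta<0$ of the limit gives $\liminf_r I(r,\alpha)\ge I^\ast_\alpha$. For the matching bound, the convexity statement shows $A_r(h)$ decreases to its limit $\int_0^1 e^{\theta s}\,ds$, so it stays above it: $\tfrac1{r+1}\sum_{j=0}^{r}e^{j\theta/r}\ge\tfrac{e^{\theta}-1}{\theta}$ for all $r$ and all $\theta<0$, whence $\psi_r(\theta)\le\alpha\theta-\ln\tfrac{e^{\theta}-1}{\theta}\le I^\ast_\alpha$ and so $I(r,\alpha)\le I^\ast_\alpha$ for every~$r$. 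Combining the two bounds gives $\lim_{r\to\infty}I(r,\alpha)=I^\ast_\alpha$, and in particular $I^\ast_\alpha$ is finite. Apart from the convexity lemma, every step here is a routine calculus or bookkeeping exercise, so I anticipate the write-up being short.
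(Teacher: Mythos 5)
Your proof is correct. The paper does not give a proof of this lemma; it is imported verbatim from Blasiak et al.~\cite[Proposition~4.12]{Blasiak2017On}, so there is no in-paper argument to compare against. Your reduction via Lemma~\ref{lem:ext-binom-bound} to the identity
\[
  I(r,\alpha)\;=\;\sup_{\theta<0}\Bigl(\alpha\theta-\ln\bigl(\tfrac1{r+1}\textstyle\sum_{j=0}^{r}e^{j\theta/r}\bigr)\Bigr)
\]
is a clean restatement as a Cram\'er-type rate function, and all three claims fall out of it: positivity from $\psi_r(0)=0$, $\psi_r'(0)=\alpha-\tfrac12<0$; the limit from Riemann-sum convergence of the inner average to $\int_0^1 e^{\theta s}\,ds=\tfrac{e^\theta-1}{\theta}$ (the lower bound by pointwise convergence under the supremum, the matching upper bound because the averages decrease to their limit and hence dominate it); and monotonicity from your averaging inequality $A_{r+1}(h)\le A_r(h)$ for convex~$h$. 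I checked that inequality in detail: the convex combination $\tfrac{j}{r+1}=\tfrac{j}{r+1}\cdot\tfrac{j-1}{r}+\tfrac{r+1-j}{r+1}\cdot\tfrac{j}{r}$ is exact, and after summing and collecting coefficients, each $h(i/r)$ for $i\in\{0,\dots,r\}$ picks up exactly the factor $\tfrac{r+2}{r+1}$, which gives the inequality upon dividing by $r+2$. The strictness argument (strictly convex $e^{\theta s}$ with $\theta<0$, plus a nontrivial combination at some $1\le j\le r$) correctly upgrades weak to strict monotonicity of $I(\cdot,\alpha)$. One remark: rather than inferring finiteness of the maximizer $\theta_r^\ast$ and of $I^\ast_\alpha$ indirectly, it is slightly cleaner to observe that $\psi_r$ (and its pointwise limit) is concave --- $-\ln$ of a sum of exponentials plus a linear term --- with $\psi_r(0)=0$, $\psi_r'(0)<0$, and $\psi_r(\theta)\to-\infty$ as $\theta\to-\infty$; concavity then immediately gives a finite interior maximizer and a finite supremum. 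Your proof is self-contained and elementary, which is a genuine gain over the paper's bare citation.
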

We make some observations for each fixed $\alpha\in(0,\tfrac12)$. Since $I(q,\alpha)\geq0$ holds, we have ${H(q,\alpha)\leq1}$.
Moreover, the inequality in Lemma~\ref{lem:ext-binom-bound} implies $H(q,\alpha)\geq0$. Since $I(q,\alpha)$ is increasing in~$q$ and bounded by a fixed constant~$I^\ast_\alpha$, we have $\lim_{q\to\infty}H(q,\alpha) = 1$, as well as
\(
\lim_{q\to\infty} q^{H(q,\alpha)-1}=e^{-I^\ast_\alpha}
\).
Following the proof of Lemma~\ref{lem:exponent-properties}, $H(q,\alpha)$ can be seen to be increasing in $q$ for every fixed $\alpha\in(0,\tfrac12)$.

\subsection{Machine Model, Complexity, and Representations of Polynomials.}
For our algorithms, we silently use an extension of standard word-RAM machines with words of $\OO(\log N)$ bits as our machine model, where $N$ is the input length.
The time complexity of the machine is defined as usual via the number of elementary operations performed by the machine.
We remark already here that for our main algorithm, a prime power $q$ will be fixed, and consequently all arithmetic operations on $\field$ can be performed in constant time.

Since our algorithms use arrays of exponential length and since we will not want to worry about the overhead that this causes, we silently assume that the word-RAM machine in addition has access to an abstract dictionary data structure:
In particular, the algorithm can initialize a new dictionary, read a value from the dictionary, or write a value to the dictionary, and we assume each of these operations to incur unit cost. During the initialization, we can also specify a default value for the dictionary---this value will be returned if we are trying to read the value for a key that has not been written to yet. To allow for keys with $\poly(N)$ bits, we assume that the key must be written to a special query array of the word-RAM before the dictionary's read or write operation is called.

Throughout this paper, we silently represent polynomials using dictionaries.
There are two representations that we will use to store an $n$-variate degree-$d$ polynomial $P\in\field[X_1,\dots,X_n]$:
\begin{itemize}
    \item In the \emph{coefficient representation},
          we represent~$P$ as a dictionary that stores each non-zero coefficient~$c_M$ of~$P$ under the key~$M\in\set{0,\dots,q-1}^n$. Since $P$ has degree at most~$d$, each key satisfies $\sum_i M_i\le d$.
    \item In the \emph{evaluation representation},
          we represent~$P\in\field[X_1,\dots,X_n]$ as a dictionary that stores evaluations~$P(x)$ under all keys~$x\in\TrimmedFp{n}{d}$ (recall that this is the set of tuples in $\field^n$ whose entries sum up to at most $d$ over $\N$).
\end{itemize}
In the next section, we will show how to efficiently switch between these two representations.
If no representation is specified, we silently use the coefficient representation by default.

\subsection{Fast Multipoint Evaluation and Interpolation over Finite Fields.}\label{sec:multipoint}

As an important subroutine of Theorem~\ref{thm:pes-algorithm}, we use a fast algorithm for multipoint evaluation and interpolation for bounded-degree multivariate polynomials over finite fields.
There is a natural bijection between elements of $\TrimmedFp{n}{\Delta}$ and the monomials of a degree-$\Delta$ polynomial $P\in\field[X_1,\dots,X_n]$.
The polynomial~$P$ can be represented either by providing the coefficient of each monomial of degree at most~$\Delta$, or by providing the evaluations of~$P$ at all points in $\TrimmedFp{n}{\Delta}$.
Moreover, it is possible to efficiently switch between these two representations as is proven in the following lemma.

\begin{lemma}[{Van der Hoeven and Schost~{\cite[Theorem~1]{DBLP:journals/aaecc/HoevenS13}}}]%
    \label{lem:interpolation}%
    Let $q$ be a constant prime power.
    Given an integer~$b\in\set{0,\dots,n}$ and an $n$-variate polynomial $P\in\field[X_1, \dots, X_n]$ with total degree at most~$\Delta$, we can compute the vector of evaluations $P(x)$ for all $x\in\TrimmedFp{n-b}{\Delta} \times \field^{b}$ in time $\OO(n\cdot \abs{\TrimmedFp{n-b}{\Delta}}\cdot q^b)$.
    Conversely, given $b$, and a vector of evaluations~$P(x)$ for all $x\in\TrimmedFp{n-b}{\Delta} \times \field^{b}$, we can compute the corresponding unique degree-$\Delta$ polynomial~$P$ in the same time.
\end{lemma}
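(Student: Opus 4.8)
The plan is to invoke this directly as Theorem~1 of Van der Hoeven and Schost~\cite{DBLP:journals/aaecc/HoevenS13}; for orientation I sketch why a bound of this shape is to be expected and which ingredient carries the weight. Split the variables into a \emph{trimmed block} $Y=(Y_1,\dots,Y_{n-b})$ and a \emph{full block} $Z=(Z_1,\dots,Z_b)$, and treat the two blocks differently. For the full block, first reduce $P$ modulo the relations $Z_i^{q}=Z_i$; this does not increase the total degree, and now $P=\sum_{M''\in\set{0,\dots,q-1}^{b}}Q_{M''}(Y)\,Z^{M''}$ with each $Q_{M''}$ of total degree at most~$\Delta$. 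For a fixed $y$, passing from $\paren*{Q_{M''}(y)}_{M''}$ to $\paren*{P(y,z)}_{z\in\field^{b}}$ is the $b$-fold tensor power of the univariate map ``evaluate a polynomial of degree $<q$ at all $q$ points of~$\field$''; over a \emph{fixed} field this is a single invertible $q\times q$ Vandermonde matrix, so applying it along each of the $b$ axes costs $\OO(q^{b})$ per~$y$ (the factor~$q$ is absorbed as a constant), i.e.\ $\OO\paren*{q^{b}\cdot\abs{\TrimmedFp{n-b}{\Delta}}}$ in total, and likewise for its inverse.

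After this reduction it suffices to treat the case $b=0$: evaluate an $m$-variate polynomial of total degree at most~$\Delta$ at all points of $\TrimmedFp{m}{\Delta}$, once for each of the $q^{b}$ polynomials $Q_{M''}$ (here $m=n-b$). This is the core of~\cite{DBLP:journals/aaecc/HoevenS13}: the set $\TrimmedFp{m}{\Delta}$ is an initial segment for a monomial order, and one peels off one variable at a time. Writing $P=\sum_{j\ge 0}P_{j}(Y_{1},\dots,Y_{m-1})\,Y_{m}^{\,j}$ with $\deg P_{j}\le\Delta-j$, the evaluations of $P$ on $\TrimmedFp{m}{\Delta}$ are assembled from the evaluations of the $P_{j}$ on the smaller initial segments $\TrimmedFp{m-1}{\Delta-j}$ together with univariate evaluations in $Y_{m}$; carrying the intermediate data in a Newton-type basis adapted to the shrinking degree bound is what lets the assembly reuse the recursive output instead of re-evaluating. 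At each of its at most $m\le n$ levels the recursion does work linear in the number of points it produces, so the $b=0$ case costs $\OO\paren*{n\cdot\abs{\TrimmedFp{n-b}{\Delta}}}$ and the whole procedure costs $\OO\paren*{n\cdot q^{b}\cdot\abs{\TrimmedFp{n-b}{\Delta}}}$.

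The interpolation direction is the same computation run in reverse: invert the Vandermonde transforms along the full axes, and replace every univariate multipoint evaluation in a peeled trimmed variable by the corresponding univariate interpolation. This is well-defined on the promised inputs (evaluation vectors of polynomials of total degree at most~$\Delta$) because $\TrimmedFp{n}{\Delta}\subseteq\TrimmedFp{n-b}{\Delta}\times\field^{b}$, and the restriction of $P$ to $\TrimmedFp{n}{\Delta}$ already pins $P$ down among all polynomials of total degree at most~$\Delta$ — the classical unisolvence of a staircase point set, which is itself the $b=0$ case of the statement.

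The step I expect to be the real obstacle is the accounting for the trimmed recursion: one must check that the recursive calls with the decremented bounds $\Delta-j$ carry enough information to rebuild every evaluation that the assembly needs, that the Newton-basis organisation truly avoids re-evaluating the $P_{j}$, and that the sum over the degree decrements collapses to a bound linear in $\abs{\TrimmedFp{n-b}{\Delta}}$ with only the stated factor~$n$, rather than picking up an extra factor of $\Delta$ or of the recursion depth. This careful analysis is precisely the content of~\cite[Theorem~1]{DBLP:journals/aaecc/HoevenS13}, which we invoke.
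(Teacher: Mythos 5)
The paper does not prove this lemma; it cites it as a special case of Van der Hoeven and Schost's Theorem~1, and your proposal does exactly the same, with an (accurate, and more informative than what the paper offers) sketch of why the bound specializes to the stated $\OO(n\cdot\abs{\TrimmedFp{n-b}{\Delta}}\cdot q^{b})$ when $q$ is held constant and the evaluation set is $\TrimmedFp{n-b}{\Delta}\times\field^{b}$. You are also right to flag that the only genuinely nontrivial content — the accounting for the trimmed recursion and the linear-in-output cost of the Newton-basis assembly — is precisely what is being outsourced to the cited theorem, so the proposal matches the paper's treatment.
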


The first algorithm in the lemma is called \emph{multipoint evaluation} and the second algorithm is called \emph{interpolation}.
The most natural cases of this lemma are $b=n$ and $b=0$, but we need the more general version for our algorithm.
The original formulation of the lemma in \cite[Theorem~1]{DBLP:journals/aaecc/HoevenS13} is even more general in that the set of evaluation points can be chosen more flexibly and the field~$\field$ can have super-constant size---this, however, affects the running time.

For $q=2$, Lemma~\ref{lem:interpolation} can be seen as the linear transformation over $\mathbb{F}_2$ that is known as the Möbius Transform.
Björklund et al. \cite{DBLP:journals/mst/BjorklundHKK10} showed that assuming a bounded-degree polynomial, there exists a Trimmed Möbius Transform that only requires a bounded number of evaluations to compute the coefficients of a polynomial (and vice versa).
For $q \ge 2$, there are several papers~\cite{DBLP:conf/stoc/Umans08,DBLP:journals/siamcomp/KedlayaU11,DBLP:journals/jacm/BhargavaGKM23,DBLP:journals/jacm/BhargavaGGKU24} that devised algorithms for multipoint evaluation (not interpolation) over $\field$, where the individual degree of each variable is bounded.
In our setting, where only the total degree of monomials is restricted, and the degree of individual variables can be up to $q-1$, these multipoint evaluation algorithms have exponential running time.
Instead we use a special case of a multipoint evaluation \cite[Theorem~1]{DBLP:journals/aaecc/HoevenS13} and interpolation \cite[Proposition~3]{DBLP:journals/aaecc/HoevenS13} algorithm (also see \cite[Theorem~4.4]{DBLP:journals/jc/HoevenL20}), because our algorithm requires to repeatedly switch between the two representations of a polynomial.

\subsection{Isolation Lemma and Low-degree Approximations.}

First, we state a version of the isolation lemma of Valiant and Vazirani~\cite{valiant1985np}.
This lemma allows us to probabilistically transform a polynomial equation system into an equivalent system that has at most one solution with high probability.
\begin{lemma}[Valiant and Vazirani~\cite{valiant1985np}]\label{lem:ValiantVazirani}
    Let $q$ be a prime power.
    There exists a randomized algorithm \Call{ValiantVazirani}{P_1,\dots,P_m} that for $n$-variate polynomials $P_1, \dots, P_m$ over $\field$ runs in time~$\OO(n^2)$
    and samples a uniformly random number~$\ell\in\set{0,\dots,n}$ of uniformly random $n$-variate linear functions $P_{m+1}, \dots, P_{m+\ell}$ that satisfy the following condition:
    \begin{itemize}
        \item \emph{(Uniqueness)} If there exists some $x\in\field^n$ with $P_i({x})=0$ for all $i\in[m]$, then with probability at least $\Omega(\frac1n)$, there exists exactly one~$x\in\field^n$ with $P_i({x})=0$ for all $i\in[m+\ell]$.
    \end{itemize}
\end{lemma}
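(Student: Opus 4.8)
The plan is to mimic the classical affine hashing argument of Valiant and Vazirani, working over $\field$ rather than $\F_2$. Let $S \subseteq \field^n$ denote the set of common roots of $P_1,\dots,P_m$, and assume $S \neq \emptyset$. First I would sample a threshold $\ell \in \set{0,\dots,n}$ uniformly at random, and then sample $\ell$ independent uniformly random affine functions $P_{m+1},\dots,P_{m+\ell}$, where each $P_{m+j}(X) = a_{j,0} + \sum_{i=1}^n a_{j,i} X_i$ with all coefficients $a_{j,i}$ drawn uniformly and independently from $\field$. Adding the equations $P_{m+j}(x) = 0$ cannot create new solutions, so the new solution set is $S \cap H$ where $H$ is the (random) affine subspace cut out by the $\ell$ new linear constraints; this handles the trivial direction where $S = \emptyset$. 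The running time $\OO(n^2)$ is clear since we write down $\ell \le n$ functions each with $n+1$ coefficients and perform no further computation.

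The heart of the proof is the \emph{uniqueness} clause. The key observation is that the family $\mathcal{H}_\ell = \set{x \mapsto (P_{m+1}(x),\dots,P_{m+\ell}(x))}$ of random maps $\field^n \to \field^\ell$ is pairwise independent: for any fixed distinct $x, x' \in \field^n$ and any target $(u,u') \in \field^\ell \times \field^\ell$, the probability that the map sends $x \mapsto u$ and $x' \mapsto u'$ is exactly $q^{-2\ell}$. This follows because a single random affine function $a_0 + \sum a_i X_i$ satisfies the analogous $2$-point property over any field: the two evaluations at $x \ne x'$ are jointly uniform on $\field^2$ (the difference $a\cdot(x-x')$ is uniform on $\field$ since $x - x' \ne 0$ and we can pick a coordinate where it is nonzero, and $a_0$ then shifts the value at $x$ to anything), and the $\ell$ coordinates are independent. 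Given pairwise independence, fix $\ell$ such that $q^{\ell-1} \le |S| < q^\ell$ (this happens for one specific value of $\ell \in \set{0,\dots,n}$, chosen with probability $\ge \tfrac{1}{n+1}$, as long as $|S| \le q^n$, which always holds, and we treat $|S|=1$ i.e. the case where even $\ell=0$ works, separately or absorb it). Then for the event $E$ that $|S \cap H| = 1$, where $H$ is the preimage of $0 \in \field^\ell$, I would use inclusion–exclusion / the second-moment method: writing $N = |S \cap H|$ as a sum of indicators $\mathbf{1}[P_{m+j}(x)=0 \ \forall j]$ over $x \in S$, we have $\E[N] = |S| q^{-\ell} \in [q^{-1}, 1)$ and, by pairwise independence, $\E[N(N-1)] = |S|(|S|-1) q^{-2\ell} < \E[N]^2 \le \E[N]$. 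Hence $\Pr[N = 1] = \E[N] - \E[N(N-1)]/1 \cdot(\text{correction})$; more carefully, $\Pr[N \ge 1] \ge \E[N] - \tfrac12 \E[N(N-1)] \ge \E[N](1 - \tfrac12 \E[N]) \ge \tfrac12 \E[N] \ge \tfrac{1}{2q}$ and $\Pr[N \ge 2] \le \tfrac12 \E[N(N-1)] \le \tfrac12 \E[N]^2$, so $\Pr[N = 1] \ge \tfrac12\E[N] - \tfrac12\E[N]^2 = \tfrac12 \E[N](1-\E[N])$, which is bounded below by a positive absolute constant (depending on $q$ only) once we know $\E[N] \in [q^{-1}, 1)$; combined with the $\tfrac{1}{n+1}$ probability of hitting the right $\ell$, this gives the claimed $\Omega(1/n)$ bound.

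The main obstacle, and the only place where genuine care is needed, is the boundary bookkeeping: ensuring that for every possible value of $|S| \in \set{1,\dots,q^n}$ there really is a choice of $\ell \in \set{0,\dots,n}$ with $\E[N] = |S|q^{-\ell}$ landing in an interval like $[\tfrac1q, 1)$ on which the second-moment lower bound $\tfrac12\E[N](1-\E[N])$ is bounded away from $0$ — in particular handling $|S| = 1$ (take $\ell = 0$, then $N = 1$ deterministically) and making the constant in $\Omega(\cdot)$ uniform. A clean way to package this is: pick $\ell$ to be the unique integer with $q^{\ell - 1} \le |S| \le q^\ell - 1$ when $|S| \ge q-1$, wait — simpler: choose $\ell$ minimal with $|S| \le q^\ell$; then $q^{\ell-1} < |S| \le q^\ell$ (for $|S|>1$), so $\E[N] = |S|q^{-\ell} \in (q^{-1}, 1]$, and one checks $\Pr[N=1]$ is $\ge$ some $c_q > 0$ directly from the two moment bounds. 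I would then simply note that this particular $\ell$ is sampled with probability $\tfrac{1}{n+1} = \Omega(1/n)$, and multiply. Everything else is a routine transcription of the $\F_2$ argument, using only that $\field$ is a field (for the pairwise independence of random affine maps) and the elementary second-moment inequality.
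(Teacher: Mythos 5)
The paper cites this lemma to Valiant and Vazirani without supplying a proof, so there is no in-paper argument to compare against; I am reviewing your write-up on its own terms. Your overall architecture is the standard and correct one: sample $\ell$ uniformly from $\set{0,\dots,n}$, add $\ell$ independent uniformly random affine forms over $\field$, observe pairwise independence of the resulting hash $\field^n\to\field^\ell$, and use a second-moment (Bonferroni) bound to show isolation with constant probability for the one ``good'' value of $\ell$, which is hit with probability $\tfrac{1}{n+1}=\Omega(\tfrac1n)$. The running-time and trivial-direction remarks and the pairwise-independence computation over $\field$ are all fine.

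The gap is precisely in the boundary bookkeeping you flag but do not actually resolve. With your ``simpler'' choice --- $\ell$ minimal with $|S|\le q^\ell$, so $\mu\coloneqq\E[N]=|S|\,q^{-\ell}\in(q^{-1},1]$ --- the second-moment bound does \emph{not} yield a constant $c_q>0$. Concretely, what the two moments give is
\[
\Pr[N=1]\;\ge\;\E[N]-\E[N(N-1)]\;=\;\mu\bigl(1-\mu+q^{-\ell}\bigr),
\]
and when $|S|=q^{\ell}$ (so $\mu=1$) this is exactly $q^{-\ell}$, which tends to $0$; more generally it degenerates whenever $\mu$ is close to $1$. Your displayed bound $\tfrac12\E[N](1-\E[N])$ has the same defect, so ``one checks $\Pr[N=1]\ge c_q$ directly from the two moment bounds'' is simply false for this choice of $\ell$. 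Pairwise independence alone controls only the regime where $\E[N]$ is bounded away from~$1$. (That the \emph{true} probability here is constant is a fact about random affine subspaces, not something pairwise independence can see: for $|S|=q^n$ and $\ell=n$, the second-moment bound gives $q^{-n}$ while the truth is $\prod_{i\ge1}(1-q^{-i})$.) The fix is to shift $\ell$ by one, i.e.\ take $\ell$ minimal with $|S|\le q^{\ell-1}$, giving $\E[N]\in(q^{-2},q^{-1}]$ and hence $\Pr[N=1]\ge\E[N](1-\E[N])>q^{-2}(1-q^{-1})>0$; but then for $|S|>q^{n-1}$ this $\ell$ would be $n+1$, outside the sampling range, and you must argue that residual case separately. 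There, take $\ell=n$ and use the structure of random affine maps rather than bare pairwise independence: with probability $\prod_{i=1}^{n}(1-q^{-i})\ge\prod_{i\ge1}(1-q^{-i})>0$ the $n\times n$ coefficient matrix is invertible, the common zero of the $n$ affine forms is then a single point that is uniform on $\field^n$ (since the constant term is uniform and independent), and that point lies in $S$ with probability $|S|/q^n>q^{-1}$. With these two corrections in place, the rest of your argument goes through.
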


Next, we approximate the polynomial equation system by one with a smaller number of polynomials, which are obtained probabilistically.

\begin{algorithm}[RazborovSmolensky]\label{algo:RazborovSmolensky}\upshape
    This algorithm receives as input $n$-variate degree-$d$ polynomials $P_1, \dots, P_m$ over $\field$, and a positive integer $\mu$. It outputs $\mu$ random linear combinations of the $P_j$'s.
    \begin{algorithmic}[1]
        \Function{RazborovSmolensky}{$P_1,\dots,P_m;\mu$}
        \For{$i = 1, \dots, \mu$}
            \State Independently and uniformly sample $m$ coefficients $\rho_{i,1}, \dots, \rho_{i,m}\in\field$.
            \State Let $\widetilde P_i(X_1,\dots,X_n) = \sum_{j=1}^m \rho_{i,j} \cdot P_j(X_1,\dots,X_n)$.\label{line:a}
        \EndFor
        \State\Return $\widetilde P_1, \dots, \widetilde P_\mu$
        \EndFunction
    \end{algorithmic}
\end{algorithm}

We state the resulting lemma as follows.
\begin{lemma}[\cite{razborov1987lower,smolensky1987algebraic}]\label{lem:RazborovSmolensky}
    \RazborovSmolensky{P_1, \dots, P_m; \mu} is a randomized algorithm that for $n$-variate polynomials $P_1, \dots, P_m$ over $\field$ runs in time $\OO(m\mu)\cdot\max_{i\in[m]}|P_i|$, where $|P_i|$ is the number of non-zero coefficients of $P_i$.
    For all ${x}\in\field^n$, the output $\widetilde P_1, \dots, \widetilde P_{\mu}$ consists of $n$-variate polynomials over $\field$ and satisfies the following:
    \begin{itemize}
        \item \emph{(Completeness)} If $P_i({x})=0$ holds for all $i\in[m]$, then $\widetilde P_j({x})=0$ holds for all $j\in[\mu]$.
        \item \emph{(Soundness)} If $P_i({x})\ne0$ holds for some $i\in[m]$, then with probability at least~$1-q^{-\mu}$, we have $\widetilde P_j({x})\ne0$ for some $j\in[\mu]$.
    \end{itemize}
    Moreover, each $\widetilde P_j$ has degree at most $d\coloneqq\max_{i\in[m]}\deg(P_i)$.
\end{lemma}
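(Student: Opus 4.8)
The plan is to verify each claimed property of the algorithm $\Call{RazborovSmolensky}{P_1,\dots,P_m;\mu}$ directly from the construction in Algorithm~\ref{algo:RazborovSmolensky}. First, for the running time: the algorithm runs $\mu$ iterations, and in each iteration (line~\ref{line:a}) it forms the linear combination $\widetilde P_i = \sum_{j=1}^m \rho_{i,j} P_j$. Forming each such combination means scaling and adding $m$ polynomials, each of which has at most $\max_{i\in[m]}|P_i|$ non-zero coefficients; since arithmetic over the fixed field $\field$ is constant-time, this costs $\OO(m)\cdot\max_{i\in[m]}|P_i|$ per iteration, for a total of $\OO(m\mu)\cdot\max_{i\in[m]}|P_i|$. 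The degree bound is immediate: a linear combination of polynomials of degree at most $d=\max_{i\in[m]}\deg(P_i)$ again has degree at most $d$.

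Next I would handle \emph{completeness}. Fix $x\in\field^n$ and suppose $P_i(x)=0$ for all $i\in[m]$. Then for every $j\in[\mu]$ we have $\widetilde P_j(x) = \sum_{i=1}^m \rho_{j,i} P_i(x) = \sum_{i=1}^m \rho_{j,i}\cdot 0 = 0$, regardless of the sampled coefficients. This is a deterministic statement, so it holds with probability~$1$.

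For \emph{soundness}, fix $x\in\field^n$ and suppose $P_{i_0}(x)\ne 0$ for some $i_0\in[m]$. The key observation is that for a single random linear combination $\widetilde P_j$, the value $\widetilde P_j(x) = \sum_{i=1}^m \rho_{j,i} P_i(x)$ is a fixed $\field$-linear form in the independent uniform coefficients $\rho_{j,1},\dots,\rho_{j,m}$, and this form is not identically zero because the coefficient of $\rho_{j,i_0}$ is $P_{i_0}(x)\ne 0$. A standard fact about uniform random evaluations of a non-zero affine form over $\field$ (fix all variables except $\rho_{j,i_0}$ arbitrarily; then $\widetilde P_j(x)$ is a non-constant affine function of the single uniform variable $\rho_{j,i_0}$, hence uniform over $\field$) gives $\Pr[\widetilde P_j(x)=0]=1/q$. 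Since the $\mu$ rows of coefficients are sampled independently across $j$, the events $\{\widetilde P_j(x)=0\}$ are independent, so $\Pr[\widetilde P_j(x)=0 \text{ for all }j\in[\mu]] = q^{-\mu}$, and therefore with probability at least $1-q^{-\mu}$ we have $\widetilde P_j(x)\ne 0$ for some $j\in[\mu]$.

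The only mild subtlety—and the one place I would be slightly careful—is the claim $\Pr[\widetilde P_j(x)=0]=1/q$ when $q$ is a prime \emph{power} rather than a prime; one must argue over the field $\field$ and not confuse it with $\Z/q\Z$. The clean argument is the conditioning one above: condition on all $\rho_{j,i}$ for $i\ne i_0$; then $\widetilde P_j(x) = P_{i_0}(x)\,\rho_{j,i_0} + c$ for a constant $c\in\field$ depending on the conditioned values, and since $P_{i_0}(x)\ne 0$ this is a bijection $\rho_{j,i_0}\mapsto\widetilde P_j(x)$ on $\field$, so exactly one value of $\rho_{j,i_0}$ makes it vanish; averaging over the conditioning preserves $\Pr[\widetilde P_j(x)=0]=1/q$. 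Everything else is bookkeeping.
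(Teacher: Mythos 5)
The paper states this lemma with a citation to Razborov and Smolensky and does not supply its own proof, so there is no in-paper argument to compare against; your proof is a correct, complete proof of the standard result. The running time and degree bound are immediate from the construction, completeness is a one-line linearity calculation, and for soundness your conditioning argument is exactly right: fixing all $\rho_{j,i}$ with $i\ne i_0$ makes $\widetilde P_j(x)$ a bijective affine function of the remaining uniform $\rho_{j,i_0}\in\field$, so $\Pr(\widetilde P_j(x)=0)=1/q$, and independence of the rows across $j\in[\mu]$ gives the $q^{-\mu}$ bound. Your caution about arguing over the field $\field$ rather than $\Z/q\Z$ when $q$ is a prime power is well placed, and the conditioning argument handles it cleanly.
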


\section{Algorithm for Polynomial Equation Systems over Finite Fields}
\label{sec:sum-algorithm}

Recall that \PESqd{q}{d} is the problem of deciding whether some given degree-$d$ polynomials over~$\field$ have a common root, or equivalently, whether the corresponding polynomial equation system has a solution.
For convenience, we restate our main theorem here.
\PESalgorithm*
This section is dedicated to constructing the claimed algorithm and proving its claimed properties.
At the end of~Section~\ref{sec:running-time} we are finally in position to prove the theorem.
For the remainder of this section, we fix $q$ and~$d$ to be integer constants such that $q$ is a prime power and $d\ge1$.
For the proof, we follow the outline given in Section~\ref{sec:technical-overview}.

\subsection{From Equations to a Sum.}\label{sec:equations-to-sum}
We start by reducing the problem of determining whether a given polynomial equation system has a solution to that of computing the sum over all evaluation points of a particular polynomial over~$\field$.
For polynomials $P_1,\dots,P_m\in\field[X_1,\dots,X_n]$, we define the associated \emph{indicator polynomial $F\in\field[X_1,\dots,X_n]$} and the \emph{full sum} $Z\in\field$ as follows:
\begin{equation}\label{eq:counting-polynomial}
    F \coloneqq \prod_{i=1}^m \paren*{1-(P_i)^{q-1}}
    \quad
    \text{and}
    \quad
    Z \coloneqq \sum_{x\in\field^n} F(x)
    \,.
\end{equation}
By Fermat's little theorem, we have $F(x)\ne 0$ for a vector~$x\in\field^n$ if and only if $P_1(x)=\dots=P_m(x)=0$ holds.
Thus, if the polynomials do not have a common root, it is guaranteed that~$F$ is identically zero and the full sum satisfies $Z=0$.
However, the sum is taken over~$\field$ and may be zero also if the polynomials do have a common root. We avoid this situation by using the isolation lemma to ensure that $F(x)\ne 0$ holds for at most one vector $x\in\field^n$.
More formally, we use the following intermediate problem:
\defproblemtalt{\SUMqd{q}{d}}{Polynomials $P_1, \dots, P_m \in \field[X_1,\dots,X_n]$ of degree at most~$d$}{Compute the full sum~$Z$, where $Z\in\field$ is defined as in \eqref{eq:counting-polynomial}}

We use the isolation lemma to efficiently reduce from \PESqd{q}{d} to \SUMqd{q}{d}.
\begin{lemma}\label{lem:isolation-reduction}
    If \SUMqd{q}{d} can be computed in bounded-error randomized time $T(n,m)$, then \PESqd{q}{d} can be computed in bounded-error randomized time $\OO(nT(n,m+n))$.
\end{lemma}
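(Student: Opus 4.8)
The plan is to combine the isolation lemma (Lemma~\ref{lem:ValiantVazirani}) with the observation that the indicator polynomial $F$ is in fact $0$-$1$-valued: by Fermat's little theorem, $F(x)=1$ when $x$ is a common root of $P_1,\dots,P_m$ and $F(x)=0$ otherwise, so the full sum $Z$ returned by \SUMqd{q}{d} equals the number of common roots reduced modulo the characteristic of \field. In particular, if the system has \emph{exactly one} common root then $Z=1\neq 0$, and if it has \emph{no} common root then $Z=0$; moreover adding equations can only remove solutions, never create them. This gives the skeleton of the reduction: run \Call{ValiantVazirani}{P_1,\dots,P_m} to obtain an augmented system $P_1,\dots,P_{m+\ell}$ with $\ell\le n$, which (a) is unsatisfiable whenever the original system is, and (b) with probability $\Omega(1/n)$ has a unique solution whenever the original system is satisfiable; then call the \SUMqd{q}{d} oracle on this augmented system (still at most $m+n$ polynomials, each of degree at most $d$ since the added linear functions have degree $1\le d$) and output ``yes'' if and only if the returned value is nonzero.

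First I would check soundness: on a no-instance, every augmented system produced is again unsatisfiable, hence its true full sum is $0$, so the only way to falsely output ``yes'' is an error of the \SUMqd{q}{d} oracle. For completeness, on a yes-instance the augmented system has a unique solution with probability $\Omega(1/n)$, in which case its true full sum is $1\neq 0$ and (up to oracle error) we output ``yes''. To boost the $\Omega(1/n)$ isolation probability to a constant I would repeat the whole process $\Theta(n)$ times with fresh randomness and output ``yes'' iff \emph{some} repetition does; independence of the repetitions guarantees that, say with probability at least $5/6$, at least one repetition isolates a solution when one exists.

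The delicate point — and the main obstacle — is controlling the two-sided error of the \SUMqd{q}{d} oracle across the $\Theta(n)$ calls: since the combiner is an OR, a single spurious ``nonzero'' answer on a no-instance causes a false positive, and one cannot afford the constant per-call error over $\Theta(n)$ calls (an argument via Chernoff bounds shows that no threshold rule can separate the two cases either, because the yes-instance ``signal'' scales only like (\#calls)$/n$ whereas the no-instance ``noise'' scales like (\#calls)). I would therefore amplify the \SUMqd{q}{d} subroutine first: running it $\OO(\log n)$ times and returning the most frequent field element drives its error below $\tfrac{1}{6}\cdot\tfrac{1}{\Theta(n)}$, so a union bound over all $\Theta(n)$ calls bounds the oracle-induced error by $\tfrac16$, which together with the $\tfrac16$ failure probability of the isolation step gives overall error at most $\tfrac13$. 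The running time is dominated by the $\Theta(n)$ oracle calls, giving $\OO(n\,T(n,m+n))$ — the $\OO(\log n)$ amplification overhead and the $\OO(n^2)$ cost of each \Call{ValiantVazirani}{} call being lower-order and absorbed into the $\OO$-notation (in particular negligible against the exponential running time the \SUMqd{q}{d} algorithm will eventually have).
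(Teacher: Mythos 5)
Your approach matches the paper's: apply Valiant--Vazirani, query the \SUMqd{q}{d} algorithm on the augmented system, and repeat $\Theta(n)$ times with an OR combiner. In fact you go further than the paper's one-line proof and correctly spot the point it elides: with a constant-error oracle, a naive OR over $\Theta(n)$ calls drives the false-positive rate on a no-instance toward~$1$, and (as you argue) no threshold rule saves the day either, since the expected number of nonzero answers on a yes-instance exceeds that on a no-instance by only $\Theta(1)$ while the binomial fluctuations are $\Theta(\sqrt{n})$. Your fix --- amplify the \SUMqd{q}{d} oracle to error $o(1/n)$ via $\OO(\log n)$ repetitions and a plurality vote before OR-ing --- is sound. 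The one blemish is that it yields running time $\OO(n\log n\cdot T(n,m+n))$, a $\log n$ factor more than the lemma's stated $\OO(n\,T(n,m+n))$; your aside that the $\log n$ is ``absorbed into the $\OO$-notation'' is not literally true for an arbitrary $T$, although it is harmless here, where the lemma is only ever invoked with $T$ exponential, so the extra factor disappears into $\OOstar(\cdot)$. For what it is worth, the paper's own proof is no more explicit on this interplay --- it simply says ``by repeating \dots $\OO(n)$ independent queries \dots we can amplify'' --- and the $\OO(nT)$ bound as literally stated really does presuppose a \SUMqd{q}{d} algorithm whose error is already $\OO(1/n)$, a condition that the eventual \textsc{FullSum} satisfies with ample room to spare at error $q^{-n}$.
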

\begin{proof}
    Let $A$ be a bounded-error randomized algorithm for \SUMqd{q}{d}.
    We use the isolation lemma, Lemma~\ref{lem:ValiantVazirani}, and add up to~$n$ random linear equations to the polynomial equation system. We then feed the at most $m+n$ resulting polynomials as input to~$A$. If the original system did not have any solutions, the new system does not have a solution either, and so~$A$ returns~$0$ with high probability.
    However, if the original system has at least one solution, then the new system has a unique solution with probability $\Omega(\tfrac1n)$, in which case~$A$ returns~$1$.
    By repeating this procedure and thus making $\OO(n)$ independent queries to~$A$, we can amplify the success probability to $99\%$ as required.
\end{proof}

\subsection{From a Sum to a Partial Sum.}\label{sec:sum-to-partial-sum}
In the following, we will describe how to compute \SUMqd{q}{d}.
One straightforward way to do this would be to use Multipoint Evaluation, that is, to evaluate~$F$ at \emph{all} points~$x\in\field^n$ and compute the sum.
Doing so would take time $\Omega(q^n)$ and provide no gains over exhaustive search.
The main idea for computing the full sum~$Z$ more efficiently is to iteratively compute polynomials~$Z_\beta$ that represent partial sums:
\begin{equation}\label{eq:def:partialsum}
    Z_\beta(Y_1,\dots,Y_{n-\beta}) \coloneqq \sum_{z\in\field^\beta} F(Y_1,\dots,Y_{n-\beta},z)
    \,.
\end{equation}
In this expression, the first $n-\beta$ variables remain untouched, and the sum is taken over all possible settings~$z$ for the last~$\beta$ variables.
In particular, we have $Z_0=F$ and $Z_n=Z$.
We remark that these partial sums can also be defined via partial derivatives of~$F$; for example, we have $\frac{\partial^{q-1}}{\partial x_n^{q-1}} F=(q-1)(q-2)\cdots 2\cdot Z_1$.
In Section~\ref{sec:PartialSums}, we describe a randomized algorithm~\PartialSum{} to compute $Z_\beta$, and in Section~\ref{sec:error-probability}, we prove the following lemma.
\begin{restatable}{lemma}{LemmaPartialSumError}%
    \label{lem:PartialSum-error}%
    For all $\beta\in\N$, for all~$m,n\in\N$ with $n\ge\beta$, and for all $n$-variate degree-$d$ polynomials $P_1,\dots,P_m$ over $\field$, the probability that $\PartialSum{P_1,\dots,P_m;\beta}$ returns a polynomial~$\widetilde Z_\beta$ with $\widetilde Z_\beta\ne Z_\beta$ is at most $\PartialSumError{n}$.
\end{restatable}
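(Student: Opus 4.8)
The plan is to prove the lemma by induction on the recursion depth of $\PartialSum$, which is finite because every recursive call strictly decreases the number $\beta$ of summed-out variables. In the base case (small $\beta$, or $n$ below a fixed constant) the algorithm computes $Z_\beta$ directly, by expanding the indicator polynomial $F=\prod_{i=1}^m(1-(P_i)^{q-1})$ and summing out the last $\beta$ variables; this is deterministic and exact, so the failure probability is $0\le\PartialSumError{n}$. For the inductive step I would analyze a single call $\PartialSum{P_1,\dots,P_m;\beta}$, which runs $t=\Theta(n)$ independent sub-computations: the $j$-th one applies $\widetilde P^{(j)}=\RazborovSmolensky{P_1,\dots,P_m;\mu}$ (writing $\widetilde F^{(j)}=\prod_{i=1}^{\mu}(1-(\widetilde P^{(j)}_i)^{q-1})$), recurses via $\PartialSum{\widetilde P^{(j)};\beta'}$ to obtain a polynomial $\widetilde Z^{(j)}_{\beta'}$, sums out $\beta-\beta'$ further variables to produce evaluations $\widetilde Z^{(j)}_\beta(y)=\sum_{u}\widetilde Z^{(j)}_{\beta'}(y,u)$ at all $\Delta$-bounded points $y\in\TrimmedFp{n-\beta}{\Delta}$, and then the algorithm plurality-votes over $j$ at each such point and interpolates the result. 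The three error sources to control are a wrong polynomial returned by a recursive sub-call, a Razborov--Smolensky sample disagreeing with $F$ at a relevant point, and a plurality vote that picks a wrong value.

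For the quantitative part, fix a point $y\in\TrimmedFp{n-\beta}{\Delta}$ and call the $j$-th sub-computation \emph{$y$-good} if the recursive call returns the correct partial-sum polynomial of $\widetilde F^{(j)}$ \emph{and} $\widetilde F^{(j)}(y,z)=F(y,z)$ holds for every $z\in\field^\beta$. If the $j$-th sub-computation is $y$-good then, unrolling the definitions, $\widetilde Z^{(j)}_\beta(y)=\sum_{z\in\field^\beta}\widetilde F^{(j)}(y,z)=\sum_{z\in\field^\beta}F(y,z)=Z_\beta(y)$. The induction hypothesis bounds the probability that the recursive call errs by $\PartialSumError{n}$, and the completeness and soundness parts of Lemma~\ref{lem:RazborovSmolensky} together with a union bound over the $q^\beta$ choices of $z$ bound the probability that $\widetilde F^{(j)}$ disagrees with $F$ at $y$ by $q^{\beta-\mu}$; hence the $j$-th sub-computation fails to be $y$-good with probability at most $\PartialSumError{n}+q^{\beta-\mu}\le\tfrac13$, using that $\mu$ exceeds $\beta$ by a suitable constant and that small $n$ is handled by the base case. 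Since the $t$ sub-computations use independent randomness, Lemma~\ref{lem:chernoff} implies that with overwhelming probability more than half of them are $y$-good, and then a strict majority of the values $\widetilde Z^{(j)}_\beta(y)$ equal $Z_\beta(y)$, so the plurality vote at $y$ returns $Z_\beta(y)$; the failure probability here is $e^{-\Omega(t)}$. A union bound over the at most $\binom{n-\beta}{\downarrow\Delta}_q\le q^{n-\beta}\le q^n$ points $y\in\TrimmedFp{n-\beta}{\Delta}$ bounds by $q^n\cdot e^{-\Omega(t)}$ the probability that the plurality-corrected evaluation vector disagrees with $Z_\beta$ anywhere on $\TrimmedFp{n-\beta}{\Delta}$, and choosing $t=\Theta(n)$ with a large enough constant makes this at most $\PartialSumError{n}$. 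Finally, whenever the corrected evaluation vector agrees with $Z_\beta$ on all of $\TrimmedFp{n-\beta}{\Delta}$, the interpolation step of Lemma~\ref{lem:interpolation} recovers $\widetilde Z_\beta=Z_\beta$ exactly, using the invariant $\deg Z_\beta\le\deg F\le m(q-1)d\le\Delta$; this closes the induction.

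The main obstacle I expect is the bookkeeping that makes the induction self-supporting, rather than any single estimate. First, $\PartialSum$ issues $t=\Theta(n)$ recursive calls, so a naive union bound over them would yield only an error of $\Theta(n)\cdot\PartialSumError{n}$, which is too weak to re-derive $\PartialSumError{n}$ one level up; the resolution is that these calls feed a plurality vote, so only a \emph{majority} of them must succeed, which is exactly what the Chernoff bound provides and why the generous slack down to constant failure probability per call is harmless. Second, the parameters $\mu$, $\beta$, and the degree bound $\Delta$ must be threaded consistently through all recursion levels: $\Delta$ must be at least $\deg Z_\beta$ so that the interpolation in Lemma~\ref{lem:interpolation} is lossless and the returned polynomial genuinely equals $Z_\beta$ rather than a degree-truncation, while $\mu$ must exceed $\beta$ by enough that the term $q^{\beta-\mu}$ stays a small constant. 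Checking that the algorithm's parameter choices satisfy both requirements simultaneously at every level (while still yielding the claimed running time, which is treated separately) is the delicate point; once this is in place, the inductive step above goes through and a single call to $\PartialSum$ fails with probability at most $\PartialSumError{n}$.
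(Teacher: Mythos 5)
Your high-level plan (induction, Chernoff over the $t$ trials, union bound over evaluation points, losslessness of the degree-$\Delta$ interpolation) is essentially the paper's, but there is a genuine gap in the middle that stems from a misdescription of where the plurality vote sits in \PartialSum{}.

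You describe the algorithm as first forming $\widetilde Z^{(j)}_\beta(y)=\sum_{u}\widetilde Z^{(j)}_{\beta'}(y,u)$ for each trial~$j$ and \emph{then} taking a plurality vote over~$j$ at each $y\in\TrimmedFp{n-\beta}{\Delta}$. The actual algorithm does the opposite: line~\ref{line:plurality} takes the plurality vote separately at each pair $(y,u)$ with $y\in\TrimmedFp{n-\beta}{\Delta}$ and $u\in\field^{\beta-\beta'}$, and only \emph{afterwards} (line~\ref{line:Zsum}) sums the corrected values over~$u$. This is not a cosmetic difference. Your notion of the $j$-th trial being ``$y$-good'' requires $\widetilde F^{(j)}(y,z)=F(y,z)$ for \emph{all} $z\in\field^{\beta}$, so your union bound over $z$ produces the failure probability $q^{\beta}\cdot q^{-\mu}=q^{\beta-\mu}$. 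You then need $\mu$ to exceed $\beta$ by a constant for this to be at most $\tfrac13$. But the algorithm calls \RazborovSmolensky{} with $\mu=\beta'+2=\beta-\ceil{\lambda n}+2$, which is \emph{smaller} than $\beta$ once $\lambda n\geq 3$; your bound becomes $q^{\beta-\mu}=q^{\ceil{\lambda n}-2}$, which is exponentially large and makes the Chernoff step vacuous. In other words, your argument silently assumes a stronger parameter choice ($\mu\approx\beta$) than the algorithm actually uses.

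The paper avoids this precisely by placing the vote at the $(y,u)$ level. With the plurality done per point $(y,u)$, it suffices that the $j$-th trial's \emph{recursive} partial sum $Z_{\beta',j}$ agrees with $Z_{\beta'}$ at that single point $(y,u)$; Lemma~\ref{lem:RazborovSmolensky-our-application} bounds this failure by a union bound over only the $q^{\beta'}$ settings of the \emph{innermost} variables $v$, giving $q^{\beta'}\cdot q^{-(\beta'+2)}=q^{-2}$ per point and per trial. The remaining $\beta-\beta'$ variables $u$ are then handled by a deterministic sum of already-corrected values, which introduces no error. So the per-trial, per-point failure probability is $q^{-2}+q^{-n}\leq\tfrac13$ for the algorithm's actual $\mu=\beta'+2$, which is what the Chernoff and union-bound steps need. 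This placement of the vote is exactly the structural change Dinur introduced to keep $\mu$ small while still obtaining a per-point constant error; your version would force $\mu$ (and hence the degree $\Delta'$ in the recursive calls) to grow, undoing the running-time gain even if you patched the correctness argument.

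To repair your proof you would define $y,u$-goodness of trial~$j$ to mean that the recursive call is correct and that $Z_{\beta',j}(y,u)=Z_{\beta'}(y,u)$, apply Lemma~\ref{lem:RazborovSmolensky-our-application} to bound the latter by $q^{-2}$, run the Chernoff argument separately at each of the at most $q^{n-\beta'}$ pairs $(y,u)$ to get failure $\leq q^{-2n}$ per pair, and then union-bound over all pairs. The rest of your write-up (base case, induction hypothesis contributing $q^{-n}$, Chernoff with $\delta=\tfrac14$, and interpolation recovering $Z_\beta$ from its degree-$\Delta$ evaluations) is consistent with the paper and would go through once this is fixed.
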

Given this lemma, we are ready to state \FullSum{} as a straightforward deterministic reduction to \PartialSum{}.
The algorithm has a parameter~$\kappa\in(0,1)$ that we will set to an optimal value later.%
\begin{algorithm}[FullSum]\label{algo:FullSum}\upshape
    This algorithm receives as input $n$-variate degree-$d$ polynomials $P_1, \dots, P_{m}$ over~$\field$, and depends on a global parameter $\kappa\in\R$ with $0<\kappa<\kappaUPPER$.
    It outputs an element $\widetilde Z\in\field$ that, with probability at least $1 - \PartialSumError{n}$, is equal to the full sum $Z$ defined in \eqref{eq:counting-polynomial}.
    \begin{algorithmic}[1]
    \Function{FullSum}{$P_1,\dots,P_m$}
        \State Set $\beta \gets \floor{\kappa n}$.

        \State Let $\widetilde Z_\beta$ be the polynomial returned by $\PartialSum{P_1,\dots,P_m;\beta}$.\label{line:fullsum:call}

        \State Use multipoint evaluation (Lemma~\ref{lem:interpolation}) to compute $\widetilde Z_\beta(y)$ for all $y\in\field^{n-\beta}$.\label{line:fullsum:evaluate}

        \State\Return $\widetilde Z \coloneqq \sum_{y \in \field^{n-\beta}} \widetilde Z_\beta(y)$\label{line:fullsum:return}
    \EndFunction
    \end{algorithmic}
\end{algorithm}%
\begin{lemma}\label{lem:FullSum}
    Let $\kappa$ be a real number with $0<\kappa<\kappaUPPER$.
    There is a randomized algorithm \FullSum{} that solves \SUMqd{q}{d} with error probability at most~$\PartialSumError{n}$ and in time
    \(\OO( T(m,n,\floor{\kappa n}) + q^{(1-\kappa) n} n )\),
    where $T(m,n,\beta)$ is the running time of $\PartialSum{P_1,\dots,P_m;\beta}$.
\end{lemma}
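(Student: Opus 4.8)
The plan is to verify that Algorithm~\ref{algo:FullSum} (\FullSum{}) is correct and meets the claimed time bound, essentially by reading off the three lines of the pseudocode and invoking the results already established. The correctness argument has two ingredients: first, that the polynomial~$\widetilde Z_\beta$ returned on line~\ref{line:fullsum:call} equals the true partial sum polynomial~$Z_\beta$ except with probability at most~$\PartialSumError{n}=q^{-n}$, which is exactly Lemma~\ref{lem:PartialSum-error} applied with the parameter~$\beta=\floor{\kappa n}$ (note $\kappa<\kappaUPPER<1$ guarantees $\beta\le n$, so the hypothesis $n\ge\beta$ of that lemma is met); and second, that conditioned on $\widetilde Z_\beta=Z_\beta$, the output on line~\ref{line:fullsum:return} equals the full sum~$Z$. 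For the second point I would just unfold the definitions: by~\eqref{eq:def:partialsum}, for every $y\in\field^{n-\beta}$ we have $Z_\beta(y)=\sum_{z\in\field^\beta}F(y,z)$, and hence $\sum_{y\in\field^{n-\beta}}Z_\beta(y)=\sum_{y\in\field^{n-\beta}}\sum_{z\in\field^\beta}F(y,z)=\sum_{x\in\field^n}F(x)=Z$ by~\eqref{eq:counting-polynomial}. So whenever the (single) call to \PartialSum{} succeeds, \FullSum{} outputs~$Z$ exactly; the total error probability is therefore at most~$\PartialSumError{n}$, as claimed.

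For the running time I would account for each line separately. Line~\ref{line:fullsum:call} costs $T(m,n,\floor{\kappa n})$ by definition. Line~\ref{line:fullsum:evaluate} invokes multipoint evaluation (Lemma~\ref{lem:interpolation}) with $b=n-\beta$ on the polynomial~$\widetilde Z_\beta$, which has $n-\beta$ variables and degree at most~$\Delta\le\deg\widetilde Z_\beta$; since we want evaluations at \emph{all} of $\field^{n-\beta}$, we are in the case where the set of free variables is all of them, i.e. $\TrimmedFp{0}{\Delta}\times\field^{n-\beta}$, so Lemma~\ref{lem:interpolation} gives running time $\OO\paren*{(n-\beta)\cdot 1\cdot q^{n-\beta}}=\OO(q^{(1-\kappa)n}n)$. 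Line~\ref{line:fullsum:return} sums $q^{n-\beta}$ field elements, costing $\OO(q^{(1-\kappa)n})$, which is absorbed. Adding the three contributions gives the stated bound $\OO\paren*{T(m,n,\floor{\kappa n})+q^{(1-\kappa)n}n}$.

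The one point that needs a little care — and which I expect to be the main (minor) obstacle — is making sure the invocation of Lemma~\ref{lem:interpolation} on line~\ref{line:fullsum:evaluate} is legitimate: that lemma takes as input a polynomial in \emph{coefficient representation} of known total degree, whereas \PartialSum{} returns $\widetilde Z_\beta$; I would note that \PartialSum{}'s output is by convention in coefficient representation (as stated in the Preliminaries) and that its degree is bounded by the degree~$\Delta$ of the indicator polynomial~$\widetilde F$, which in turn is governed by the Razborov--Smolensky construction, so the degree parameter fed to Lemma~\ref{lem:interpolation} is well-defined and the cost estimate $\abs{\TrimmedFp{n-b}{\Delta}}=1$ with $b=n-\beta$ is valid regardless of the precise value of~$\Delta$. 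Everything else is bookkeeping, and the lemma follows.
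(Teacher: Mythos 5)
Your proposal is correct and follows essentially the same route as the paper's proof: invoke Lemma~\ref{lem:PartialSum-error} for the error bound, unfold \eqref{eq:def:partialsum} and \eqref{eq:counting-polynomial} to see that $\sum_y Z_\beta(y)=Z$, and account for the three lines of \FullSum{} line by line, with the multipoint evaluation cost from Lemma~\ref{lem:interpolation}. Your extra care in identifying the parameter choice $b=n-\beta$ (so that $\abs{\TrimmedFp{0}{\Delta}}=1$) and in noting the representation convention is valid bookkeeping that the paper's terser proof leaves implicit; the only slip is the typo ``$\Delta\le\deg\widetilde Z_\beta$'' where the inequality should point the other way.
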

\begin{proof}
    By Lemma~\ref{lem:PartialSum-error}, with probability at least $1-\PartialSumError{n}$, the polynomial $\widetilde Z_\beta$ returned by \PartialSum{} satisfies
    \(\widetilde Z_\beta=Z_\beta\).
    Conditioned on this event, we have \(\widetilde Z=Z\) and \FullSum{} returns the correct value.
    For the running time, note that line~\ref{line:fullsum:call} of \FullSum{} takes time $T(m,n,\floor{\kappa n})$ and line~\ref{line:fullsum:evaluate} takes time~$\OO(q^{n-\kappa n} n)$ by Lemma~\ref{lem:interpolation}.
    This concludes the proof.
\end{proof}

\subsection{Algorithm for Partial Sums}\label{sec:PartialSums}
In this section, we describe the algorithm~\PartialSum{} for computing~$Z_\beta$.
We stress that the goal of this algorithm is to compute a representation of all monomial-coefficient pairs for which the coefficient is non-zero.
To get an algorithm that is more efficient than brute force, we need two key insights.
The first is that we can bound the degree of $Z_\beta$ from above as observed in the following lemma.

\begin{lemma}\label{lem:Zbeta-degree}
    Let $P_1,\dots,P_m$ be $n$-variate degree-$d$ polynomials.
    If~$\beta\in\set{0,\dots,n}$, then the partial sum polynomial~$Z_\beta$ has degree at most~$\Zdegree{m}{\beta}$, where
    \(
    \Zdegree{m}{\beta}
    \coloneqq(\min(md,n)-\beta)(q-1)
    \).
\end{lemma}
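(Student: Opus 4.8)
The plan is to combine two ingredients: (i) an a priori upper bound on the total degree of the indicator polynomial~$F$, and (ii) the fact, already isolated in Lemma~\ref{lem:modqtrick}, that summing $F$ over the last $\beta$ variables keeps only the monomials in which each of these $\beta$ variables occurs with exponent exactly $q-1$, which shaves $\beta(q-1)$ off the degree.

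For (i) I would bound $\deg F$ in two complementary ways. Since $F=\prod_{i=1}^m\paren*{1-(P_i)^{q-1}}$ with $\deg P_i\le d$, each of the $m$ factors has degree at most $d(q-1)$, so $\deg F\le md(q-1)$. On the other hand, as all polynomials are kept in \emph{reduced form} (individual degree at most $q-1$ in each of the $n$ variables), we also have $\deg F\le n(q-1)$. Taking the smaller of the two bounds gives $\deg F\le\min(md,n)(q-1)=\Zdegree{m}{0}$, which already settles the case $\beta=0$.

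For (ii) and general $\beta\in\set{0,\dots,n}$, I would apply Lemma~\ref{lem:modqtrick} with $n_1=n-\beta$, $n_2=\beta$, and $P=F$, the last $\beta$ variables of $F$ playing the role of the summed-out variables. It tells us that $(q-1)^{\beta}\cdot Z_\beta$ equals the polynomial $P_1$ obtained from $F$ by keeping exactly those monomials whose last $\beta$ exponents all equal $q-1$ and then deleting these last $\beta$ variables. Any monomial of $F$ contributing to $P_1$ has total degree of the form $r+\beta(q-1)$ and contributes a degree-$r$ monomial to $P_1$; from $r+\beta(q-1)\le\deg F\le\min(md,n)(q-1)$ we get $r\le(\min(md,n)-\beta)(q-1)=\Zdegree{m}{\beta}$. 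Since $(q-1)^{\beta}$ is a nonzero scalar in $\field$, it follows that $\deg Z_\beta=\deg P_1\le\Zdegree{m}{\beta}$, as claimed. (If $\beta>\min(md,n)$ there is no monomial of $F$ of total degree $\ge\beta(q-1)$ at all, so $P_1=Z_\beta=0$ and the bound holds trivially. One could also bypass Lemma~\ref{lem:modqtrick} entirely, expanding $Z_\beta$ according to its definition and using Lemma~\ref{lem:fermat-coro} to check that, after substituting $z\in\field^\beta$ and summing, only the monomials of $F$ with all last $\beta$ exponents equal to $q-1$ survive.)

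The argument has no serious obstacle; the one point requiring care is the interplay with the reduced representation, namely making sure that the product bound $md(q-1)$ and the per-variable bound $n(q-1)$ both refer to the same polynomial $F$, and that the step ``summing out the last $\beta$ variables forces exponent $q-1$ there'' is applied to the reduced $F$ rather than to the formal product (otherwise a monomial such as $z_1^{2(q-1)}$ would also survive the summation and the degree drop would be weaker).
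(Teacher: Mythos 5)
Your proof is correct and follows essentially the same route as the paper's: bound $\deg F$ by both $md(q-1)$ and $n(q-1)$, then invoke Lemma~\ref{lem:modqtrick} to identify $(q-1)^\beta Z_\beta$ with the sub-polynomial of $F$ whose last $\beta$ exponents all equal $q-1$, which shaves $\beta(q-1)$ off the degree. The one observation you flag as requiring care (working with the reduced $F$, not the formal product) is indeed the right thing to watch, and the paper handles it the same way by stating $\deg F \le n(q-1)$ as a consequence of $F$ being an $n$-variate polynomial over $\field$.
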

\begin{proof}
    The degree of $F$ is trivially at most $n(q-1)$, because $F$ is an $n$-variate polynomial over~$\field$. Moreover, the degree of~$F$ is at most $md(q-1)$, because we have $F = \prod_{i=1}^m (1- P_i^{q-1})$ and the degree of each $P_i$ is bounded by~$d$.
    Let $c_M$ for $M \in \{0, \dots, q-1\}^n$ be the coefficients of $F$ and define
    \[F_1(Y) \coloneqq \sum_{M \in \{0, \dots, q-1\}^{n-\beta} \times \{q-1\}^\beta} c_M \cdot \prod_{i=1}^{n-\beta} Y_i^{M_i}.\]
    By Lemma~\ref{lem:modqtrick}, we have $F_1 = (q-1)^\beta \cdot Z_\beta$, and so $F_1$ and $Z_\beta$ have the same degree.

    Consider any monomial $M = (M_1, \dots, M_{n-\beta})$ with non-zero coefficient in $F_1$.
    By definition of $F_1$, the monomial
    $M' = (M_1, \dots, M_{n-\beta}, {q-1, \dots, q-1})\in\set{0,\dots,q-1}^n$
    has a non-zero coefficient $c_{M'}$ in $F$.
    Since the degree of $M'$ is at most the degree of $F$, we obtain
    \[\deg M = \deg M' - (q-1)\beta \leq \deg F - (q-1)\beta \leq (\min(md,n)-\beta)(q-1)=\Zdegree{m}{\beta}\,.\]
    As this applies to every monomial~$M$ of $F_1$, the same bound applies to $\deg F_1$.
\end{proof}
The second key insight is that the degree of~$Z_\beta$ can be reduced to $\Zdegree{\beta+2}{\beta}$, by replacing the polynomials~$P_1,\dots,P_m$ with $\beta+2$ random polynomials~$\widetilde{P}_1,\dots,\widetilde{P}_{\beta+2}$ that are returned by \RazborovSmolensky{}.
Doing so will introduce quite a lot of errors, so our algorithm will have to call \RazborovSmolensky{} several times and correct these errors; in the following lemma, we analyze the errors incurred in the $j$-th call.

\begin{lemma}\label{lem:RazborovSmolensky-our-application}
    Let $j,\beta\in\N$, and let $P_1,\dots,P_m$ be $n$-variate degree-$d$ polynomials.
    Let $\widetilde P_{j,1},\dots,\widetilde P_{j,\beta+2}$ be the polynomials returned by $\RazborovSmolensky{P_1, \dots, P_m; \beta+2}$.
    We define the indicator and partial sum polynomials for $\widetilde P_{j,1},\dots,\widetilde P_{j,\beta+2}$ analogously to \eqref{eq:counting-polynomial} and~\eqref{eq:def:partialsum}:
    \begin{equation}\label{eq:approximate-F-Z-definition}
        F_{j}
        \coloneqq
        \prod_{i=1}^{\beta+2} (1-(\widetilde P_{j,i})^{q-1})
        \qquad
        Z_{\beta,j}(Y_1,\dots,Y_{n-\beta})\coloneqq\sum_{z\in\field^{\beta}} F_j(Y_1,\dots,Y_{n-\beta},z)
    \end{equation}
    Then for all $y\in\field^{n-\beta}$, we have
    \(
    \Pr\paren*{Z_{\beta,j}(y) \ne Z_{\beta}(y)} \le q^{-2}
    \).
\end{lemma}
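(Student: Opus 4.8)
The statement to prove is: for the Razborov–Smolensky-derived polynomials $\widetilde P_{j,1},\dots,\widetilde P_{j,\beta+2}$ and the associated indicator polynomial $F_j$ with its partial sum $Z_{\beta,j}$, we have $\Pr(Z_{\beta,j}(y)\ne Z_\beta(y))\le q^{-2}$ for every fixed $y\in\field^{n-\beta}$. The natural approach is to observe that $Z_{\beta,j}(y)$ is a sum, over the $q^\beta$ extensions $(y,z)$, of $F_j(y,z)$, so it suffices to compare $F_j$ and $F$ pointwise on the slice $\{y\}\times\field^\beta$; if $F_j$ agrees with $F$ on all of this slice, then certainly the sums agree.

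\smallskip

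\textbf{Key steps.} First I would fix $y$ and, for each $z\in\field^\beta$, apply the soundness/completeness guarantee of Lemma~\ref{lem:RazborovSmolensky} to the point $x=(y,z)\in\field^n$ with $\mu=\beta+2$. Completeness says that if $(y,z)$ is a common root of all $P_i$, then it is a common root of all $\widetilde P_{j,i}$, so $F(y,z)=1=F_j(y,z)$ in that case. Soundness says that if $(y,z)$ is \emph{not} a common root of the $P_i$, then $F(y,z)=0$, and with probability at least $1-q^{-(\beta+2)}$ some $\widetilde P_{j,i}(y,z)\ne0$, hence $F_j(y,z)=0=F(y,z)$. So for each individual $z$, the ``bad'' event $F_j(y,z)\ne F(y,z)$ has probability at most $q^{-(\beta+2)}$. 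Second, I would take a union bound over the $q^\beta$ values of $z$: the probability that $F_j$ differs from $F$ somewhere on the slice $\{y\}\times\field^\beta$ is at most $q^\beta\cdot q^{-(\beta+2)}=q^{-2}$. Finally, since $Z_{\beta,j}(y)=\sum_{z\in\field^\beta}F_j(y,z)$ and $Z_\beta(y)=\sum_{z\in\field^\beta}F(y,z)$, agreement of $F_j$ and $F$ on the whole slice implies $Z_{\beta,j}(y)=Z_\beta(y)$; contrapositively, $\Pr(Z_{\beta,j}(y)\ne Z_\beta(y))$ is at most the union-bound probability $q^{-2}$, as claimed.

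\smallskip

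\textbf{Main obstacle.} There is no real obstacle here — the bound is a clean union bound — but the one subtlety worth being careful about is that the soundness guarantee of Lemma~\ref{lem:RazborovSmolensky} is stated \emph{pointwise} (for each fixed $x$, with its own probability), and the randomness $\rho_{i,j}$ is shared across all points $x$. This is fine for a union bound: we are not claiming independence across the $q^\beta$ points, only that each individual failure event is small, and subadditivity of probability does the rest. One should also note that the factor ``$+2$'' in $\beta+2$ is exactly what makes the union bound close: $q^\beta$ points each failing with probability $q^{-(\beta+2)}$ leaves the slack $q^{-2}$. I would state the per-point argument once, invoke the union bound, and conclude.
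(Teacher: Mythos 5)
Your proposal is correct and follows essentially the same route as the paper: use completeness and soundness of Razborov--Smolensky together with Fermat's little theorem to get $\Pr(F_j(y,z)\ne F(y,z))\le q^{-(\beta+2)}$ pointwise, then union-bound over the $q^\beta$ points $z$ to conclude. Your added commentary about the shared randomness being compatible with the union bound is a sound (and correctly handled) observation that the paper does not spell out.
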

\begin{proof}
    Combining soundness and completeness of Lemma~\ref{lem:RazborovSmolensky} with Fermat's little theorem directly implies \(
    \Pr\left(
    F_j(x)\ne F(x)
    \right) \le q^{-(\beta+2)}
    \) for all~$x\in\field^n$.
    We write $x=(y,z)$ with $y\in\field^{n-\beta}$ and $z\in\field^\beta$.
    For all $y\in\field^{n-\beta}$, we obtain the claim by a union bound:
    \[
        \Pr\left(
        Z_{\beta,j}(y)\ne Z_\beta(y)
        \right)
        \le
        \Pr\left(
        \exists z\in\field^\beta\colon
        F_j(y,z)\ne F(y,z)
        \right) \le q^\beta \cdot q^{-(\beta+2)} = q^{-2}\,.
    \]
    This completes the proof.
\end{proof}

Our algorithm \PartialSum{} to compute the partial sum $Z_\beta$ is laid out as Algorithm~\ref{algo:PartialSum}.
We use an additional parameter~$\lambda$ with $0\le\lambda\le1$ to tune the running time of the algorithm.
The base case of the algorithm is $\beta\le\lambda n$ or $n\le 3$, in which case it computes~$Z_\beta$ directly from its definition in~\eqref{eq:def:partialsum} using brute force.
Otherwise, we have $\beta > \lambda n$ and $n\ge 4$, and the algorithm aims to recursively compute $Z_{\beta-\lambda n}$, and then uses brute force to compute~$Z_{\beta}$ from $Z_{\beta-\lambda n}$. To compute~$Z_{\beta-\lambda n}$, it applies the process suggested by Lemma~\ref{lem:RazborovSmolensky-our-application}, and passes the polynomials $\widetilde P_{j,1},\dots,\widetilde P_{j,\beta+2}$ as input to the recursive call of \PartialSum{}. It repeats this process~$t$ times independently to obtain~$t$ possibly erroneous partial sum polynomials $\widetilde Z_{\beta - \lambda n}$. It then corrects all errors using plurality votes on all evaluation points of the~$t$ returned polynomials, thus reconstructing $Z_{\beta - \lambda n}$ with high probability.

\begin{algorithm}[PartialSum]\label{algo:PartialSum}\upshape
    This algorithm receives as input $n$-variate degree-$d$ polynomials $P_1, \dots, P_{m}$ over $\field$ and an integer~$\beta\ge0$, and it depends on a global parameter $\lambda\in\R$ with $0<\lambda\leq\kappa$.
    It outputs a polynomial~$\widetilde Z_{\beta}$ of degree at most~$\Zdegree{m}{\beta}$ that, with probability at least $1-\PartialSumError{n}$, is identical to $Z_{\beta}$ from \eqref{eq:def:partialsum}.
    \begin{algorithmic}[1]
    \Function{PartialSum}{$P_1,\dots,P_m,\beta$}
    \State Define the following parameters:
    \State\quad $t \coloneqq \lceil 96 n \ln q \rceil$ \Comment{---will make this many recursive calls.}
    \State\quad $\beta' \coloneqq \beta - \ceil{\lambda n}$ \Comment{---will use this value for $\beta$ in each recursive call.}
    \State\quad $\Delta \coloneqq \Zdegree{m}{\beta}$
    \label{line:Delta}%
    \Comment{---the degree of $Z_\beta$ by Lemma~\ref{lem:Zbeta-degree}.}%
    \State\quad $\Delta' \coloneqq \Zdegree{\beta'+2}{\beta'}$
    \Comment{---the degree of $Z_{\beta',j}$ by Lemma~\ref{lem:Zbeta-degree}.}%

    \If{$\beta<\ceil{\lambda n}$ \textup{or} $n\le 3$}
        \State Use Lemma~\ref{lem:interpolation} to compute the evaluations $P_1(y,z),\dots,P_m(y,z)$ for all $y\in\TrimmedFp{n-\beta}{\Delta}$ and all $z\in\field^\beta$.
        \label{line:base-case-evaluate}

        \State Use this information to compute the corresponding evaluations~$Z_\beta(y)$ via \eqref{eq:def:partialsum}.%
        \label{line:base-case-compute}

        \State Use Lemma~\ref{lem:interpolation} to interpolate $Z_\beta$ from these evaluations
        and \Return $Z_\beta$.%
        \label{line:base-case-return}
    \EndIf

    \For{$j \in \{1,\dots, t\}$\label{line:for-j}}
    \State Call~$\RazborovSmolensky{P_1, \dots, P_m; \beta'+2}$
    to obtain $\widetilde P_{j,1}, \dots, \widetilde P_{j,\beta'+2}$.%
    \label{line:razborov}

    \Comment{By Lemma~\ref{lem:RazborovSmolensky-our-application}, each evaluation of the partial sum polynomial $Z_{\beta',j}$ is equal to the corresponding evaluation of $Z_{\beta'}$ with probability at least $1-q^{-2}$.}

    \State Let $\widetilde Z_{\beta',j}$ be the polynomial returned by $\PartialSum{\widetilde P_{j,1}, \dots, \widetilde P_{j,\beta'+2}; \beta'}$.%
    \label{line:recursive-call}

    \Comment{This recursive call causes the error $\widetilde Z_{\beta',j} \ne Z_{\beta',j}$ with probability at most~$\PartialSumError{n}$.}

    \State Evaluate the polynomial $\widetilde Z_{\beta',j}$ of degree at most~$\Delta'$ on all points in $\TrimmedFp{n-\beta}{\Delta} \times \field^{\beta-\beta'}$ using Lemma~\ref{lem:interpolation}\label{line:evaluation}%
    \Comment{---this is possible by $\Delta'\le\Delta$.}%
    \EndFor

    \ForAll{$y \in \TrimmedFp{n-\beta}{\Delta}$}
        \ForAll{$u\in\field^{\beta-\beta'}$}
            \State Let $\widetilde Z_{\beta'}(y,u) \coloneqq \Call{Plurality}{\widetilde Z_{\beta',1}(y,u), \dots , \widetilde Z_{\beta',t}(y,u)}$.%
            \label{line:plurality}

            \Comment{The plurality vote selects the value that occurs most frequently, breaking ties arbitrarily. In the proof of Lemma~\ref{lem:PartialSum-error}, we show that this is very likely to correct all errors introduced in lines~\ref{line:razborov} and~\ref{line:recursive-call}.}
        \EndFor
        \State Let $\widetilde Z_{\beta}(y) \coloneqq \sum_{u \in \field^{\beta-\beta'}} \widetilde Z_{\beta'}(y,u)$.%
        \label{line:Zsum}
    \EndFor

    \State Interpolate the polynomial~$\widetilde Z_{\beta}$ from its evaluations on all~$y\in\TrimmedFp{n-\beta}{\Delta}$ using Lemma~\ref{lem:interpolation}.%
    \label{line:interpolation}

    \State\Return $\widetilde Z_{\beta}$.\label{line:return}
    \EndFunction
    \end{algorithmic}
\end{algorithm}

\subsection{Error Probability of the Algorithm for Partial Sum.}\label{sec:error-probability}

In this section, we prove Lemma~\ref{lem:PartialSum-error}, our bound on the error probability of \PartialSum{}.

\LemmaPartialSumError*
\begin{proof}
    We prove the claim by induction on~$\beta$, so let $\beta\in\N$.
    Let $P_1,\dots,P_m$ be the given degree-$d$ polynomials, let $Z_\beta$ be the partial sum polynomial defined in~\eqref{eq:def:partialsum}.
    In the base case, we have $\beta\le\ceil{\lambda n}$ or $n\le 3$, and the polynomial that is returned in line~\ref{line:base-case-return} is equal to the partial sum polynomial~$Z_\beta$, because $Z_\beta$ has degree at most~$\Delta$ by Lemma~\ref{lem:Zbeta-degree} and hence the interpolation is able to recover all non-zero coefficients of $Z_\beta$ by Lemma~\ref{lem:interpolation}.
    For the inductive case, suppose we have $\beta>\ceil{\lambda n}$ and $n\ge 4$, and that the claim is true for all~$\beta'$ with $0\le\beta'<\beta$.
    Let $\widetilde{Z}_\beta$ be the polynomial constructed in line \ref{line:interpolation}.
    We need to prove $\Pr\paren[\big]{\widetilde{Z}_\beta\ne Z_\beta}\le q^{-n}$.

    We start by analyzing the effects of line~\ref{line:razborov} on the partial sum polynomial~$Z_{\beta',j}$ defined from $\widetilde{P}_{j,1},\dots,\widetilde{P}_{j,\beta'+2}$ in~\eqref{eq:approximate-F-Z-definition}.
    We apply Lemma~\ref{lem:RazborovSmolensky-our-application} with~$\beta'$ in place of~$\beta$ and get:
    \begin{equation}\label{eq:prob-y-u-evaluation-differ}
        \forall j\in\set{1,\dots,t}\;\;
        \forall y\in\field^{\beta}\;\;
        \forall u\in\field^{\beta-\beta'}
        \colon
        \Pr(Z_{\beta',j}(y,u) \ne Z_{\beta'}(y,u))
        \le
        q^{-2}\,.
    \end{equation}

    In line~\ref{line:recursive-call}, the algorithm makes a recursive call to $\PartialSum{\widetilde{P}_{j,1},\dots,\widetilde{P}_{j,\beta'+2};\beta'}$, which returns a polynomial $\widetilde{Z}_{\beta',j}$ that is supposed to be identical with $Z_{\beta',j}$.
    Indeed, since we have $\beta'<\beta$, the induction hypothesis implies
    \begin{equation}\label{eq:prob-recursion-fails}
        \forall j\in\set{1,\dots,t}
        \colon
        \Pr(\widetilde{Z}_{\beta',j}\ne Z_{\beta',j})
        \le q^{-n}
        \,.
    \end{equation}

    We now argue that if for all $y \in \TrimmedFp{n-\beta}{\Delta}$ and $u \in \field^{\beta-\beta'}$, the event $\widetilde{Z}_{\beta'}(y,u) = Z_{\beta'}(y,u)$ occurs, then we also have $\widetilde{Z}_\beta = Z_\beta$.
    To prove this, assume that the former holds.
    By construction of $\widetilde{Z}_\beta$ and definition of $Z_\beta$ and $Z_{\beta'}$, we have for all $y \in \TrimmedFp{n-\beta}{\Delta}$:
    \[\widetilde{Z}_\beta(y) = \sum_{u \in \field^{\beta-\beta'}} \widetilde{Z}_{\beta'}(y,u) = \sum_{u \in \field^{\beta-\beta'}} Z_{\beta'}(y,u).\]    
    Since $\widetilde{Z}_\beta$ is constructed in line \ref{line:interpolation} by interpolating from evaluations on $y\in\TrimmedFp{n-\beta}{\Delta}$, the polynomial $\widetilde{Z}_\beta$ must have degree at most~$\Delta$ by~Lemma~\ref{lem:interpolation}.
    Again by~Lemma~\ref{lem:interpolation}, as the two degree-$\Delta$ polynomials $\widetilde{Z}_\beta$ and $Z_\beta$ agree on all points in $\TrimmedFp{n-\beta}{\Delta}$, they must be identical.
    This proves the claim, and we also obtain the contrapositive, that is:
    If $\widetilde{Z}_\beta \ne Z_\beta$, there exist~$y\in\TrimmedFp{n-\beta}{\Delta}$ and $u\in\field^{\beta-\beta'}$ such that the event $\widetilde{Z}_{\beta'}(y,u)\ne Z_{\beta'}(y,u)$ occurs.
    We will prove the following regarding the probability of that event:
    \begin{equation}\label{eq:forall-y-u}
        \text{For all $y\in\TrimmedFp{n-\beta}{\Delta}$ and $u\in\field^{\beta-\beta'}$, we have \(\Pr\paren[\big]{\widetilde{Z}_{\beta'}(y,u)\ne Z_{\beta'}(y,u)}\le q^{-2n}\)}\,.
    \end{equation}
    Once this is established, a union bound over the at most~$q^n$ pairs~$(y,u)$ leads to the claimed final bound:
    \[
        \Pr\paren[\big]{\widetilde{Z}_\beta \ne Z_\beta}
        \le
        \Pr\paren[\big]{\exists y,u\colon\widetilde{Z}_{\beta'}(y,u)\ne Z_{\beta'}(y,u)}\le q^{n-\beta'}\cdot q^{-2n}\le q^{-n}
        \,.
    \]

    It remains to prove \eqref{eq:forall-y-u}, so let~$y\in\TrimmedFp{n-\beta}{\Delta}$ and $u\in\field^{\beta-\beta'}$ be arbitrary.
    Recall that the value $\widetilde{Z}_{\beta'}(y,u)$ is constructed in line~\ref{line:plurality} by a plurality vote over the evaluations $\widetilde{Z}_{\beta',1}(y,u),\dots,\widetilde{Z}_{\beta',t}(y,u)$.
    For each~$j\in\set{1,\dots,t}$, we define the random variable~$X_j$ that indicates whether the $j$-th value $\widetilde{Z}_{\beta',j}(y,u)$ in this plurality vote was computed correctly in lines~\ref{line:razborov} and~\ref{line:recursive-call}, that is, we have
    \[
        X_j =
        \begin{cases}
            1, & \text{if $\widetilde{Z}_{\beta',j}(y,u)=Z_{\beta'}(y,u)$;} \\
            0, & \text{otherwise.}
        \end{cases}
    \]
    Moreover, let $X\coloneqq\sum_{j=1}^t X_j$.
    If the plurality fails to produce the correct result~$Z_{\beta'}(y,u)$, then the event $X\le t/2$ must occur.
    By \eqref{eq:prob-y-u-evaluation-differ} and \eqref{eq:prob-recursion-fails}, we have $\Pr(X_j=0)\le q^{-2} + q^{-n}\le 1/3$ when $q \geq 2$ and $n\ge 4$, and thus $\E(X)\ge \tfrac23 t$.
    Since the random variables are independent and identically distributed, we can apply the Chernoff bound stated in Lemma~\ref{lem:chernoff} with $\delta=\tfrac14$:
    \begin{align*}
        \Pr\Big( X \le \tfrac t2 \Big)
         & \le
        \Pr\Big( X \le (1 - \tfrac14) \, \E(X) \Big)
        \le
        \exp\Big(- \tfrac{\tfrac{1}{16} \E(X)}{2}\Big)
        \\
         & \leq
        \exp\Big(- \tfrac {t}{48}\Big)
        =
        \exp\Big(- \tfrac {\lceil 96 n\ln q \rceil}{48}\Big)
        \leq q^{-2n}
        \,.
    \end{align*}
    We obtain $\Pr\paren[\big]{\widetilde{Z}_{\beta'}(y,u)\ne Z_{\beta'}(y,u)}\le q^{-2n}$ and \eqref{eq:forall-y-u} follows. This concludes the proof.
\end{proof}

\subsection{Running Time of the Algorithm for Partial Sum.}\label{sec:running-time}
In this section, we prove an upper bound on the running time of~\PartialSum{}.
Let $T(m,n,\beta)$ be the worst-case running time of \PartialSum{} when the input consists of {$n$-variate} polynomials $P_1,\dots,P_m$ and parameter~$\beta$.
\FullSum{} uses an initial value of $\floor{\kappa n}$ for~$\beta$, where $\kappa < 1/(2d-1)$, which is why we only consider this setting of $\beta$ in the following lemma.
\begin{lemma}\label{lem:PartialSum-time}%
    Let $\kappa,\lambda\in(0,1)$ be real constants with $0<\lambda\leq\kappa<\kappaUPPER$.
    For all positive integers~$D$, let $\Delta_D\coloneqq(\floor{\kappa n}-D\ceil{\lambda n}) (d-1)(q-1) + 2d(q-1)$.
    Then the running time~$T(m,n,\floor{\kappa n})$ of \PartialSum{} satisfies
    \begin{align}\label{eq:T-bound}
        T(m,n,\floor{\kappa n})&\leq\OOstar\paren[\Big]{
        \max\setc*{n^D T(D)}{D\in\N\text{ and }D\ceil{\lambda n}\leq\floor{\kappa n}}
        }
    \end{align}
    for a function $T(D)$ with
    \begin{align}\label{eq:T(0)-bound}
    T(0)
    &\leq
    \OOstar(q^{n-\floor{\kappa n}+\ceil{\lambda n}})\,\text{ and}\\
    \label{eq:T(D)-bound}
    T(D)
    &\leq
    \OOstar\paren*{\binom{n-\floor{\kappa n}+D\ceil{\lambda n}}{\Delta_D}_q \cdot q^{\ceil{\lambda n}}}
    \text{ for all $D>0$}\,.
    \end{align}
\end{lemma}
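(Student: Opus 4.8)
The plan is to unroll \PartialSum{} into a recursion tree and account for the total work by summing over the recursion levels. Since \RazborovSmolensky{} preserves the number of variables, a call at depth $D$ is always on $n$-variate polynomials, with parameter $\beta_D\coloneqq\floor{\kappa n}-D\ceil{\lambda n}$ and with $m_0=m$, resp.\ $m_D=\beta_D+2$ for $D\ge1$ (the recursive call in line~\ref{line:recursive-call} is handed the $\beta'+2$ outputs of \RazborovSmolensky{}); moreover, every polynomial occurring anywhere in the recursion has degree at most~$d$, hence at most $\poly(n)$ monomials. For $n$ below a threshold depending on $q,d,\kappa,\lambda$ everything is $\OOstar(1)$ and the bound is immediate, so I would assume $n$ is large; then the base case is entered exactly at depth $D^\ast\coloneqq\max\setc{D\in\N}{D\ceil{\lambda n}\le\floor{\kappa n}}$ (for $n\ge4$ the base-case test is $\beta<\ceil{\lambda n}$), which is both $\OO(1)$ (since $D^\ast\le\kappa/\lambda$) and precisely the index set appearing in~\eqref{eq:T-bound}. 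As each internal node makes exactly $t=\ceil{96n\ln q}$ recursive calls, depth $D$ contains $t^D=\OO(n^D)$ nodes (the hidden constant depending only on $q,\kappa,\lambda$), so $T(m,n,\floor{\kappa n})=\sum_{D=0}^{D^\ast}t^D\cdot(\text{local work per node at depth }D)$.

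The heart of the argument is bounding the local work per node. For $D\ge1$ I would first invoke $\kappa<\kappaUPPER$ (which gives $\kappa d<1$) to get $m_Dd=(\beta_D+2)d\le n$, so that the $\min$ in Lemma~\ref{lem:Zbeta-degree} collapses and $\Zdegree{m_D}{\beta_D}=(m_Dd-\beta_D)(q-1)=(\beta_D(d-1)+2d)(q-1)=\Delta_D$; the same hypothesis also yields $\Delta_D\le\floor{(n-\beta_D)(q-1)/2}$, so by monotonicity of the extended binomial coefficient below its midpoint we may replace $\binom{n-\beta_D}{\downarrow\Delta_D}_q$ by $\OOstar\bigl(\binom{n-\beta_D}{\Delta_D}_q\bigr)$ throughout. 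For an internal node at depth $1\le D<D^\ast$ I would then bound the per-line costs: line~\ref{line:razborov} costs $\poly(n)$; line~\ref{line:evaluation} feeds the $(n-\beta_{D+1})$-variate polynomial $\widetilde Z_{\beta_{D+1},j}$, whose degree is at most $\Delta_{D+1}\le\Delta_D$, into Lemma~\ref{lem:interpolation} with $b=\ceil{\lambda n}$, obtaining its values on $\TrimmedFp{n-\beta_D}{\Delta_D}\times\field^{\ceil{\lambda n}}$ in time $\OO\bigl(n\cdot\binom{n-\beta_D}{\downarrow\Delta_D}_q\cdot q^{\ceil{\lambda n}}\bigr)$; both lines are executed $t$ times. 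The plurality votes (line~\ref{line:plurality}), the inner sum (line~\ref{line:Zsum}), and the interpolation of $\widetilde Z_{\beta_D}$ from its values on $\TrimmedFp{n-\beta_D}{\Delta_D}$ (line~\ref{line:interpolation}) are each $\OOstar\bigl(\binom{n-\beta_D}{\downarrow\Delta_D}_q\cdot q^{\ceil{\lambda n}}\bigr)$ as well. Thus the local work of such a node is $\OOstar(T(D))$ with $T(D)\coloneqq\binom{n-\beta_D}{\Delta_D}_q\cdot q^{\ceil{\lambda n}}$, which is~\eqref{eq:T(D)-bound} since $n-\beta_D=n-\floor{\kappa n}+D\ceil{\lambda n}$; and a leaf at depth $D^\ast\ge1$ runs lines~\ref{line:base-case-evaluate}--\ref{line:base-case-return} at cost $\OOstar\bigl(m_{D^\ast}\cdot\binom{n-\beta_{D^\ast}}{\downarrow\Delta_{D^\ast}}_q\cdot q^{\beta_{D^\ast}}\bigr)$, which also fits in $\OOstar(T(D^\ast))$ via $m_{D^\ast}=\OO(n)$ and $q^{\beta_{D^\ast}}\le q^{\ceil{\lambda n}}$ (as $\beta_{D^\ast}<\ceil{\lambda n}$).

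Depth $0$ I would handle separately, as $m$ there is unrestricted: every trimmed point set occurring at depth~$0$ lies inside $\field^{n-\beta_0}$, so the dominant line~\ref{line:evaluation} costs at most $\OO\bigl(n\cdot q^{n-\beta_0}\cdot q^{\ceil{\lambda n}}\bigr)=\OOstar(q^{n-\floor{\kappa n}+\ceil{\lambda n}})$, lines~\ref{line:plurality},~\ref{line:Zsum},~\ref{line:interpolation} are no larger, and the single $m$-dependent term (line~\ref{line:razborov}, of cost $\poly(n)\cdot m$) is absorbed into $\OOstar$; if instead $\floor{\kappa n}<\ceil{\lambda n}$, then depth $0$ is itself a base case costing $\OOstar(m\cdot q^n)$, which is again at most $\OOstar(q^{n-\floor{\kappa n}+\ceil{\lambda n}})$ since then $n-\floor{\kappa n}+\ceil{\lambda n}>n$. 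Setting $T(0)\coloneqq q^{n-\floor{\kappa n}+\ceil{\lambda n}}$ therefore yields~\eqref{eq:T(0)-bound}, and summing over the tree,
\[
T(m,n,\floor{\kappa n})\le\sum_{D=0}^{D^\ast}t^D\cdot\OOstar(T(D))\le(D^\ast+1)\cdot\OOstar\Bigl(\max_{0\le D\le D^\ast}n^D T(D)\Bigr)=\OOstar\Bigl(\max_{0\le D\le D^\ast}n^D T(D)\Bigr),
\]
which is~\eqref{eq:T-bound} after absorbing $D^\ast+1=\OO(1)$ and the $\poly(m,n)$ slack into $\OOstar$. I expect the real difficulty to lie not in any single estimate but in the \emph{bookkeeping}: tracking how the variable count, the polynomial count, the parameter~$\beta$, and the degree bound~$\Delta$ evolve along the recursion, and applying Lemma~\ref{lem:interpolation} correctly in its two regimes ($b=\ceil{\lambda n}$ with a trimmed first block, and $b=0$). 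The two essential uses of $\kappa<\kappaUPPER$ are to collapse the $\min$ in Lemma~\ref{lem:Zbeta-degree} to $md$, and to keep $\Delta_D$ on the increasing side of the extended binomial coefficient.
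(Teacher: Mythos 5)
Your proposal is correct and follows essentially the same route as the paper's proof: you unroll the recursion tree, track the depth-$D$ quantities $\beta_D$, $m_D$, $\Delta_D$, observe that level $D$ has $t^D = \OO(n^D)$ nodes, collapse $\min(m_Dd,n)$ to $m_Dd$ via $\kappa<\kappaUPPER$, and argue line-by-line that the multipoint-evaluation step dominates the local cost. The only presentational difference is that you separate the depth-$0$ case (unrestricted $m$) and the depth-$D^\ast$ leaf case into their own paragraphs, while the paper folds these into the base-case/recursive-case split and absorbs the $m$-dependence into $\OOstar$; the mathematical content is identical.
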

\begin{proof}
    If $n$ is at most a constant, then the claim is trivial, so we can assume without loss of generality that $n$ is large enough, such that $\ceil{\lambda n}\leq\floor{\kappa n}$ holds.
    In order to bound $T(m,n,\floor{\kappa n})$, we consider the recursion tree of \PartialSum{}.
    We introduce some notation:
\begin{itemize}
    \item
    Let $\beta_D$ be the value of~$\beta$ at depth~$D$, so at the root we have $\beta_0=\floor{\kappa n}$.
    Each recursive call subtracts $\ceil{\lambda n}$ from $\beta$.
    Thus, we have $\beta_D=\floor{\kappa n} - D \ceil{\lambda n}$.
    \item
    Let~$D^\ast$ be the depth of the recursion tree. The leaves are reached when $0\le\beta_D<\ceil{\lambda n}$ holds, which is equivalent to
    $D \ceil{\lambda n} \leq \floor{\kappa n}<(D+1) \ceil{\lambda n}$.
    Thus, $D^\ast$ is the largest integer~$D$ with~$D \ceil{\lambda n} \leq \floor{\kappa n}$ and thereby bounded by a constant.
    \item
    Let $m_D$ be the number of polynomials at depth~$D$. Then $m_0$ is the initial number $m$ of polynomials and $m_D=\beta_D+2$ holds for all $D>0$.
    \item
    Let $\Delta_D$ be the value of~$\Delta$ at depth~$D$.
    By definition of $\Zdegree{m}{\beta}$ in Lemma~\ref{lem:Zbeta-degree}, we have
    $\Delta_D=\Zdegree{m_D}{\beta_D}=(\min(m_D d,n)-\beta_D)(q-1)$.
    If $D>0$, then by $\lambda\le\kappa<\kappaUPPER$, we have $m_Dd\le n$ for large enough~$n$, and so $\Delta_D=(m_Dd-\beta_D)(q-1)$, which can be easily verified to coincide with the definition of $\Delta_D$ in the lemma statement.
    Moreover, by $\beta_D \leq \floor{\kappa n} < n/(2d-1)$, we have $\Delta_D\leq n(d-1)(q-1)/(2d-1)+2d(q-1) < n(q-1)/2$ for large enough~$n$.
    For $D=0$, we have $m_0d\ge n$ without loss of generality, and so $\Delta_0=(n - \floor{\kappa n})(q-1)$.
    We further assume without loss of generality that~$n$ is at least a large enough constant depending only on~$q,d,\kappa,\lambda$, so that~$\Delta_D$ is decreasing with~$D$, that is, we have $\Delta_0\geq\Delta_1\geq\dots\geq\Delta_{D^\ast}$.
\end{itemize}

In order to bound $T(m,n,\floor{\kappa n})$ as in \eqref{eq:T-bound}, let $T(D)$ be the running time contribution of a single node at level~$D$ in the recursion tree of \PartialSum{}.
Since $D^\ast$ is the depth of the recursion tree, we have $0\leq D\leq D^\ast$.
Moreover, each non-leaf of the tree has exactly~$t$ children, thus the number of nodes at depth~$D$ is equal to~$t^D$.
Since $D^\ast$ is the largest integer~$D$ that satisfies $D\ceil{\lambda n}\leq\floor{\kappa n}$ and $t \in \OO(n)$, the bound in \eqref{eq:T-bound} follows from the definition of~$T(D)$.

To prove \eqref{eq:T(0)-bound} and \eqref{eq:T(D)-bound}, we distinguish the base case ($D=D^\ast$) and the recursive cases ($0\leq D<D^\ast$) of the recursion tree of \PartialSum{}.

\emph{Base Case ($D=D^\ast$).}
Recall that $D^\ast>0$ holds. We claim that \eqref{eq:T(D)-bound} holds for $D=D^\ast$.
The leaves of the recursion tree of \PartialSum{} are at depth $D^\ast$ and correspond to the base case of \PartialSum{}, that is, Lines~\ref{line:base-case-evaluate}--\ref{line:base-case-return}.
These lines are only executed if $n\leq 3$ or $\beta<\ceil{\lambda n}$ holds.
If $n\le 3$, then the algorithm takes constant time, so we assume $n\ge 4$ and $\beta=\beta_{D^\ast}<\ceil{\lambda n}$ without loss of generality.
Writing $\Delta\coloneqq\Delta_{D^\ast}$, we analyze the running time~$T(D^\ast)$ as follows:
\begin{itemize}
    \item Line~\ref{line:base-case-evaluate} takes time $\OO(\binom{n-\beta}{\downarrow\Delta}_q\cdot q^\beta\cdot n\cdot m)$ by Lemma~\ref{lem:interpolation}.
    \item Line~\ref{line:base-case-compute} takes time $\OO(q^\beta m)$ for each of the~$\binom{n-\beta}{\downarrow\Delta}_q$ evaluation points~$y$.
    \item Line~\ref{line:base-case-return} takes time $\OO(\binom{n-\beta}{\downarrow\Delta}_q n)$, again by Lemma~\ref{lem:interpolation}.
\end{itemize}
Thus, the running time of the base case is dominated by Line~\ref{line:base-case-evaluate}.
By monotonicity of the extended binomial coefficient in~$\Delta$ for $\Delta\leq n(q-1)/2$, we have $\binom{n-\beta}{\downarrow\Delta}_q\leq \Delta\binom{n-\beta}{\Delta}_q$.
This establishes the running time bound \eqref{eq:T(D)-bound} for $D=D^\ast$.

\emph{Recursive Case.}
The non-leaves of the recursion tree of \PartialSum{} occur at depth~$D$ for $0 \leq D < D^\ast$ and correspond to the recursive case of the algorithm, that is, Lines~\ref{line:for-j}--\ref{line:return}.
We remark that $\beta=\beta_D$ and $\beta'=\beta_{D+1}$ hold at depth~$D$.
In order to show~\eqref{eq:T(0)-bound} for $D=0$ and~\eqref{eq:T(D)-bound} for $0<D<D^\ast$, we consider the running time contribution of each line of the recursive case:
\begin{itemize}
    \item Line~\ref{line:razborov} is executed~$t$ times and calls~\RazborovSmolensky{}, the running time of which is stated in Lemma~\ref{lem:RazborovSmolensky}.
    This leads to a contribution of $\OO(t m_D \beta' \binom{n}{d}_q)\le\OOstar(\binom{n}{\Delta}_q)$ by $d\le \Delta$.
    \item Line~\ref{line:recursive-call} issues a total of~$t$ recursive calls, and thus contributes~$\OOstar(1)$ to the running time~$T(D)$ of any node at level~$D$.
    \item Line~\ref{line:evaluation} causes a total of~$t=\OO(n)$ calls to Lemma~\ref{lem:interpolation} with $\max(\Delta,\Delta')$ in place of~``$\Delta$'', but we can use $\Delta'\leq\Delta$ to simplify the expression. Thus, Line~\ref{line:evaluation} contributes
    $\OO(tn\binom{n-\beta}{\downarrow\Delta}_q \cdot q^{\beta-\beta'})$ to the running time.
    \item Each execution of Line~\ref{line:plurality} contributes $\OO(t)\le\OO(n)$, leading to a total contribution of
          $\OO(n \binom{n-\beta}{\downarrow\Delta}_q \cdot q^{\beta-\beta'})$.
    \item The contribution of Lines~\ref{line:Zsum} and~\ref{line:interpolation} is dominated by the one of Line~\ref{line:plurality}; for Line~\ref{line:Zsum} this is trivial and for Line~\ref{line:interpolation} this follows from Lemma~\ref{lem:interpolation}.
\end{itemize}
The running time is dominated by Line~\ref{line:evaluation},
which with $\binom{n-\beta}{\downarrow\Delta}_q\leq \Delta\binom{n-\beta}{\Delta}_q$ establishes~\eqref{eq:T(D)-bound} for $0<D<D^\ast$.
To prove \eqref{eq:T(0)-bound} for $D=0$, we note $\binom{n-\beta}{\downarrow\Delta}_q \cdot q^{\beta-\beta'} \leq q^{n-\beta+\beta-\beta'}\leq q^{n-\floor{\kappa n} + \ceil{\lambda n}}$.

In summary, we have established \eqref{eq:T(0)-bound} for $D=0$ and \eqref{eq:T(D)-bound} for all $D$ with $0<D\leq D^\ast$, and thus \eqref{eq:T-bound} follows as claimed.
\end{proof}

\begin{proof}(Proof of Theorem~\ref{thm:pes-algorithm})
    We construct our algorithm for \PESqd{q}{d} as follows:
    First, we observe that \FullSum{} of Lemma~\ref{lem:FullSum} is a bounded-error randomized algorithm for \SUMqd{q}{d}.
    By Lemma~\ref{lem:isolation-reduction}, we can thus construct a bounded-error randomized algorithm for \PESqd{q}{d} whose running time is the running time of \FullSum{} times~$n$; since we do not care about polynomial factors here, it thus remains to bound the running time of \FullSum{}.

    Let $0<\lambda\leq\kappa < \frac{1}{2d-1}$. We claim that the running time of \FullSum{} is at most $\OOstar(q^{(\zeta_{q,d}(\kappa)+\lambda) \cdot n})$, where we define $\zeta_{q,d}(\kappa)$ as follows:
    \begin{align}\label{def:zetaqd-kappa}
        \zeta_{q,d}(\kappa)\coloneqq\max\set[\Big]{1-\kappa,\;\sup_{0\leq\delta\leq\kappa} H(q,\alpha)\cdot(1-\delta)}\,,\text{ where $\alpha\coloneqq\frac{\delta(d-1)}{1-\delta}$}
        \,.
    \end{align}
    Then since $\lambda>0$ can be an arbitrarily small constant, setting $\zeta_{q,d}$ to be any constant bigger than $\inf_{0<\kappa < 1/(2d-1)} \zeta_{q,d}(\kappa)$ gives the exponent in the running time of \FullSum{}.
    It remains to prove the claim on $\zeta_{q,d}(\kappa)$.
    Recall that \FullSum{} calls $\PartialSum{P_1,\dots,P_m;\floor{\kappa n}}$ and that $T(m,n,\floor{\kappa n})$ is the running time of \PartialSum{}.
    By Lemma~\ref{lem:FullSum}, the running time of \FullSum{} is $\OOstar(q^{(1-\kappa)n} + T(m,n,\floor{\kappa n}))$.
    By $1-\kappa \leq \zeta_{q,d}(\kappa)$, it remains to bound~$T(m,n,\floor{\kappa n})$.

    Since the algorithms $\FullSum{}$ and $\PartialSum{}$ as well as the running time bounds in Lemma~\ref{lem:PartialSum-time} do not depend on the precise values of $\kappa$ and $\lambda$, but only on the rounded values $\floor{\kappa n}$ and $\ceil{\lambda n}$, we can assume without loss of generality that $\kappa n$ and $\lambda n$ are integers, which simplifies notation.
    Moreover, we will choose $\kappa,\lambda\geq0$ such that $D^\ast=\kappa/\lambda$ is an integer.
    By~\eqref{eq:T-bound}, the running time of \PartialSum{} is at most $\OOstar(\max\setc{T(D)}{0\leq D\leq \kappa/\lambda})$.
    By \eqref{eq:T(0)-bound}, we have $T(0)=\OOstar(q^{(1-\kappa+\lambda)n})\leq\OOstar(q^{(\zeta_{q,d}(\kappa)+\lambda)n})$, so it remains to prove this bound for $D>0$.

    Indeed, if we write $\delta\coloneqq\kappa-D\lambda$ and $\widetilde\Delta\coloneqq n \delta (d-1)(q-1)+2d(q-1)$, we have
    \[
        T(D)\leq\OOstar\paren*{\binom{n(1-\delta)}{\widetilde\Delta}_q \cdot q^{\lambda n}}\text{for all $D>0$.}
    \]
    Now our assumption $0\leq\delta\leq\kappa<\kappaUPPER$ implies $\delta(d-1) < (1-\delta)/2$, and thus, if $n$ is at least a large enough constant, we have
    $\widetilde\Delta < n(1-\delta)(q-1)/2$. Thus, $\widetilde\Delta$ is in the increasing part of the extended binomial coefficient, that is, $\binom{n(1-\delta)}{\Delta'}_q < \binom{n(1-\delta)}{\widetilde\Delta}_q$ holds for all $\Delta'$ with $\Delta'<\widetilde\Delta$.
    On the other hand, we have $\binom{n(1-\delta)}{\Delta'-1}_q \geq \Omega(\tfrac{1}{n}\cdot\binom{n(1-\delta)}{\Delta'}_q)$, and so we can ignore the constant $2d(q-1)$ term in~$\widetilde\Delta$, since this term affects only the polynomial factors of the asymptotics.

    Let $\Delta\coloneqq n\delta (d-1)(q-1)=\alpha(q-1)(1-\delta)n$ and $\alpha\coloneqq \delta(d-1)/(1-\delta)$.
    By Lemma~\ref{lem:ext-binom-bound} (applied with $(1-\delta)n$ in place of $n$), we thus have: 
    \begin{align*}
        \binom{n(1-\delta)}{\Delta}_q \cdot q^{\lambda n}
        &\leq
        q^{H(q,\alpha)\cdot(1-\delta)n + \lambda n}
        \leq
        q^{(\zeta_{q,d}(\kappa)+\lambda)\cdot n}
        \,.
    \end{align*}
    Thus, for small enough $\lambda>0$ and the best choice of $\kappa$, the running time of \FullSum{} is at most $\OOstar(q^{\zeta_{q,d} n})$ as claimed.

Finally, we show the claimed bound $\zeta_{q,d}\leq 1 - \min\paren*{\tfrac{1}{8\ln q}, \tfrac{1}{4d}}$ on the exponent.
To this end, we use the definition of $\zeta_{q,d}(\kappa)$ in~\eqref{def:zetaqd-kappa}.
Let $\kappa=\tfrac{1}{4d}$.
Then $1-\kappa=1-\tfrac{1}{4d}$, and so it remains to bound the second term in the definition of $\zeta_{q,d}(\kappa)$.
We bound that term as follows:
\begin{align*}
    \sup_{0\leq\delta\leq\kappa} H(q,\alpha)\cdot(1-\delta) &
    \leq
    \sup_{0\leq\delta\leq\kappa} H(q,\alpha)
    \\
    \intertext{Since $\alpha\coloneqq\alpha(\delta)\coloneqq\frac{\delta(d-1)}{1-\delta}<\tfrac12$ is increasing in $\delta$ for $\delta\in[0,\kappa]$ and $H(q,\tilde\alpha)$ is increasing in~$\tilde\alpha$ for $\tilde\alpha\in[0,\tfrac12]$, we know that the supremum is attained at $\delta=\kappa$:}
    &
    = H\paren*{q,\alpha(\kappa)}\\
    \intertext{Moreover, we have $\alpha(\kappa)=\frac{\kappa (d-1)}{1-\kappa}=\frac{(d-1)}{(4d)(1-1/(4d))} < \frac14$, and thus we can bound the entropy as follows:}
    &< H(q,\tfrac14)=1-{I(q-1,\tfrac14)}/{\ln q}\,.
\end{align*}
Numerically, we see 
$I(q-1,\tfrac14)\approx 0.1308\geq\tfrac{1}{8}$ for $q=2$ and by Lemma~\ref{lem:exponent-properties}, $I(q-1,\tfrac14)$ is increasing in $q$.
This proves the claim on the exponent and concludes the proof.
\end{proof}
We remark that, in the final calculation, the limit satisfies
$\lim_{q\to\infty} I(q-1,\tfrac14)=I^\ast_{1/4}\approx 0.408639$, so our bound remains of the form $1-c/\ln q$ even for large~$q$. 

\section{Conditional Lower Bounds}

Our conditional hardness result of Theorem~\ref{thm:pes-hardness} relies on a hypothesis by Impagliazzo and Paturi~\cite{impagliazzo2001complexity}.

\begin{GrayBox}{Strong Exponential Time Hypothesis (\pp{SETH})}
    For all $\eps > 0$, there is some $k \ge 3$ such that \pp{$k$-SAT} cannot be solved in time $\OO(2^{(1-\eps) n})$.
\end{GrayBox}

In order to prove Theorem~\ref{thm:pes-hardness}, we devise a suitable mapping reduction from \pp{$k$-SAT} to \PESqd{q}{d} for all fixed~$k$.
Recall that a \emph{mapping reduction from $L\subseteq\zo^\ast$ to $L'\subseteq\zo^\ast$} is an algorithm~$f\colon\zo^\ast\to\zo^\ast$ that satisfies $x\in L$ if and only if $f(x)\in L'$.
Moreover, a mapping reduction is \emph{parsimonious} if it preserves the number of solutions.
It should be noted that we need a somewhat precise bound on the number of variables and cannot just use $\OO$-notation, as will become clear in the proof of Theorem~\ref{thm:pes-hardness}.
\begin{lemma}[Reduction from \pp{$k$-SAT} to \PESqd{q}{d}]\label{lem:PESreduction}
    Let $k \in \N$, $q$ be a prime power and let $\delta>0$ be rational.
    There is a parsimonious mapping reduction from \pp{$k$-SAT} to \PESqd{q}{d} that is given an $n$-variable \pp{$k$-CNF} formula with $m$ clauses and produces a polynomial equation system over~$\field$ with at most $\frac{n}{\log q} \cdot \left( 1 + \frac{\delta}{2} + o(1) \right)$ variables,
    $m$ equations, and degree at most $k \cdot (\frac{2}{\delta} + \frac{1}{\log q} + 1) \cdot (q-1) \in \OO(1)$.
    Moreover, this reduction runs in time $\OO(nm)$.
\end{lemma}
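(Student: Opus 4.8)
The plan is to pack several Boolean variables of the input $k$-CNF into each $\field$-variable. Since a tuple of $v$ field elements can carry up to $\lfloor v\log q\rfloor$ bits, I would fix a block size $s:=\ceil{\tfrac2\delta\log q}$ (a constant depending only on $q,\delta$), split the $n$ Boolean variables of $\varphi$ into $b:=\ceil{n/s}$ consecutive blocks of at most $s$ variables, and replace block $\ell$ by $v:=\ceil{s/\log q}$ fresh $\field$-variables $Y_{\ell,1},\dots,Y_{\ell,v}$ (the last block has only $s'\le s$ real variables and uses $v'=\ceil{s'/\log q}\le v$ of them, so \emph{no} dummy padding is introduced — important for parsimony). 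Because $q^{v}\ge 2^{s}$, I can fix an injection $\iota_\ell$ from the $2^{s}$ truth assignments of block $\ell$ into $\field^{v}$, and for each original variable $x_i$ at position $r$ of block $\ell$ let $g_i(Y_{\ell,1},\dots,Y_{\ell,v})\in\field[Y]$ be the unique polynomial of individual degree $\le q-1$ (hence total degree $\le v(q-1)$) interpolating the rule ``$g_i(\iota_\ell(w))=w_r\in\{0,1\}$'' on $\operatorname{im}\iota_\ell$, completed on the ``junk'' tuples $\field^{v}\setminus\operatorname{im}\iota_\ell$ so that $g_i$ takes a value in $\field\setminus\{0,1\}$ there (possible whenever $q\ge3$; for $q=2$ the map is a bijection and there is no junk).

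Next I would translate the formula clause by clause. For a clause $C=\ell_1\vee\dots\vee\ell_k$, each literal $\ell_j$ names a variable in some block, and I set $u_j:=g_i-1$ for a positive literal $x_i$ and $u_j:=g_i$ for a negative literal $\lnot x_i$; then $u_j=0$ precisely when the literal is satisfied on an encoded point, and $u_j\ne0$ both when the literal is falsified and on every junk tuple of its block. The clause polynomial $P_C:=\prod_{j=1}^{k}u_j$ therefore has total degree at most $k\cdot v(q-1)\le k\bigl(\tfrac2\delta+\tfrac1{\log q}+1\bigr)(q-1)=\OO(1)$ and vanishes exactly when $C$ is satisfied. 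To rule out spurious roots at junk tuples I would add, for each block $\ell$, a validity polynomial $V_\ell$ of degree $\le v(q-1)$ that is the indicator of $\field^{v}\setminus\operatorname{im}\iota_\ell$ (these can be absorbed into the clause polynomials so that only one equation per clause remains). The output system is produced in time $\OO(nm)$: there are $b=\OO(n)$ blocks, each needing $\OO(1)$ work to build the constant-size interpolation polynomials $g_i$ and $V_\ell$ (interpolation over $q^{v}=\OO(1)$ points), and $m$ clauses, each needing $\OO(1)$ work to multiply $k=\OO(1)$ constant-size polynomials.

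To finish I would check the two quantitative claims and parsimony. For the variable count, the system has at most $bv\le (n/s+1)v\le nv/s+v\le \tfrac n{\log q}+n/s+v\le\tfrac n{\log q}\bigl(1+\tfrac\delta2\bigr)+v$ field variables, using $v\le s/\log q+1$ and then $s\ge\tfrac2\delta\log q$ to get $n/s\le\tfrac\delta2\cdot\tfrac n{\log q}$; since $v$ is a constant this is $\tfrac n{\log q}(1+\tfrac\delta2+o(1))$. The degree bound was checked above, and is $\OO(1)$ because $k,q,\delta$ are fixed. For parsimony: a satisfying assignment $w$ of $\varphi$ maps under $(\iota_\ell)_\ell$ to a point where every $V_\ell$ vanishes (the block is in the image) and every $P_C$ vanishes (the clause is satisfied, so some $u_j=0$); conversely any common root must have each block in $\operatorname{im}\iota_\ell$ (otherwise that $V_\ell$, or some $P_C$ touching a junk block, is nonzero), hence decodes to a well-defined assignment $w$, and $P_C=0$ for all $C$ forces $w$ to satisfy $\varphi$. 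Thus common roots are in bijection with satisfying assignments, which in particular makes this a mapping reduction from \pp{$k$-SAT}.

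The step I expect to be the main obstacle is reconciling parsimony with the exact parameter budget. Because $q$ need not be a power of $2$, the encoding cannot be a bijection $\field^{v}\to\{0,1\}^{s}$, so one is forced to deal with junk tuples; killing them cleanly calls for extra validity equations, and pushing these into the clause polynomials while keeping (i) exactly $m$ equations and (ii) the degree at $k(\tfrac2\delta+\tfrac1{\log q}+1)(q-1)$ — rather than blowing it up by a factor $\Theta(q)$ through the naive ``$1-(1-A^{q-1})(1-B^{q-1})$'' conjunction trick, or even by a factor $2$ — is where the construction is genuinely delicate. Everything else is bookkeeping with the ceilings in $s$, $b$, $v$.
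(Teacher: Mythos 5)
Your construction tracks the paper's closely: group the Boolean variables into blocks of $s=\lceil\tfrac2\delta\log q\rceil$, encode each block by $v=\lceil s/\log q\rceil$ variables over $\field$, build decoder polynomials by interpolation on $q^v$ constant-size points, form each clause polynomial as a product of $k$ degree-$\le v(q-1)$ literal polynomials, and verify the same parameter bookkeeping. (Your handling of the last block without dummy padding is, if anything, cleaner than the paper's, which silently pads with dummy variables and then relies on the BOUND constraints to fix parsimony.)

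The step you flag as ``genuinely delicate'' is indeed where your proposal fails, but the resolution is easier than you think because you are solving a problem the paper does not set itself. Your claim that the validity polynomials $V_\ell$ ``can be absorbed into the clause polynomials so that only one equation per clause remains'' does not work. Two concrete failure modes: (i) a block whose variables appear in no clause has no clause polynomial to absorb $V_\ell$ into, so junk tuples for that block multiply the solution count freely; (ii) even when a block is touched, $P_C=\prod_j u_j$ vanishes as soon as \emph{any one} literal of $C$ vanishes, so a clause $C$ spanning two blocks can have $P_C=0$ while one of the two blocks sits on a junk tuple (your ``$g_i$ takes a value outside $\{0,1\}$ on junk'' trick makes the literal in the junk block nonzero, but does nothing to the other literals). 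Encoding ``$C$ satisfied AND all touched blocks valid'' in a single polynomial of the same degree is exactly the conjunction you warn against, and it has no cheap fix.

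The paper does not absorb anything: it simply appends $\mathsf{blocks}=\OO(n)$ extra equations $\mathrm{BOUND}_1,\dots,\mathrm{BOUND}_{\mathsf{blocks}}$, each of degree at most $v(q-1)\le k v(q-1)$, so the stated degree bound is untouched. This means the parsimonious system has $m+\OO(n)$ equations, not $m$; the lemma statement is slightly loose here, but the downstream uses are consistent (the proof of Theorem~\ref{thm:roots-hardness} explicitly writes out the extra $\lceil n/\lceil\delta/2\cdot\log q\rceil\rceil$ equations). So the obstacle you identified dissolves: do not absorb, just add the $V_\ell$'s as additional equations of low degree, accept that the equation count is $m+\OO(n)$, and the rest of your argument goes through as the paper's does.
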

\begin{proof}
    We first show that there is a mapping reduction from \pp{$k$-SAT} to \PESqd{q}{d} with the desired properties.
    In the end, we argue that this reduction can also be made parsimonious.
    Let $\FF = C_1 \land \dots \land C_m$ be a propositional formula in $k$-CNF and let $C_i = \ell_{i,1} \lor \dots \lor \ell_{i,k}$ for all $i \in [m]$.
    We construct a polynomial equation system $E$ such that $E$ has a solution if and only if $\FF$ is satisfiable.
    The idea is to encode blocks of Boolean variables by blocks of variables over $\field$ and use interpolation to obtain polynomials that decode these blocks and output the Boolean value of individual variables.
    Then, we construct polynomials $P_i$ for the clauses of~$\FF$ such that, for all $i \in [m]$, the polynomial~$P_i$ is $\{0,1\}$-valued and evaluates to $1$ on an assignment~$\hat X$ over~$\field$ if and only if the Boolean assignment encoded by $\hat X$ satisfies $C_i$.

    We now give the details.
    Let $\mathsf{vars}_1 \coloneqq \lceil \frac{2}{\delta} \cdot \log q \rceil$ and $\mathsf{blocks} \coloneqq \lceil \frac{n}{\mathsf{vars}_1} \rceil$.
    For simplicity of the construction, we assume that $\FF$ has exactly $\mathsf{blocks} \cdot \mathsf{vars}_1$ variables by introducing additional dummy variables if necessary.
    We will still consider $n$ to be the original number of variables, allowing us to precisely analyze the parameters of the reduction.
    Furthermore, we assume without loss of generality that the variables in $\FF$ are named $x = x_{1,1}, \dots, x_{1,\mathsf{vars}_1}, \dots, x_{\mathsf{blocks},1}, \dots, x_{\mathsf{blocks},\mathsf{vars}_1}$, that is, the variables are grouped into $\mathsf{blocks}$ many blocks of $\mathsf{vars}_1$ variables each.
    Each block of $\mathsf{vars}_1$ Boolean variables is now encoded by a block of variables over $\field$ of suitable arity.
    This arity will be $\mathsf{vars}_2 \coloneqq \lceil \frac{\mathsf{vars}_1}{\log q} \rceil$, which is sufficient to encode all assignments to the block of Boolean variables, as we have
    \[q^{\mathsf{vars}_2} = 2^{\log q  \cdot \mathsf{vars}_2} \geq 2^{\log q \cdot \frac{\mathsf{vars}_1}{\log q}} = 2^{\mathsf{vars}_1}.\]
    Now, fix an encoding by choosing any efficiently computable surjective $\{0,1\}$-valued function
    \[\mathrm{dec} \colon \field^{\mathsf{vars}_2} \to \set{0,1}^{\mathsf{vars}_1}.\]
    For example, $\mathrm{dec}$ can be chosen as the function mapping any tuple $X \in \field^{\mathsf{vars}_2}$ to the binary encoding of the number represented by $X$ when interpreted as a base-$q$ number, modulo $2^{\mathsf{vars}_1}$.
    For all $v_1 \in [\mathsf{vars}_1]$, let $\mathrm{DEC}_{v_1}$ be the $\mathsf{vars}_2$-variate polynomial over $\field$ that agrees with the $v_1$-th bit of $\mathrm{dec}$ on all inputs.

    To express the constraints imposed by the clauses $C_i$ using polynomial equations, define for all $i \in [m]$ and $j \in [k]$ the polynomial
    \[L_{i,j}(Y) =
        \begin{cases}
            Y,   & \text{if } \ell_{i,j} \text{ is a positive literal} \\
            1-Y, & \text{otherwise}.
        \end{cases}
    \]
    We now construct the desired polynomial equation system $E$ over $\field$.
    The variables of $E$ will be $X = X_{1,1}, \dots, X_{1,\mathsf{vars}_2}, \dots, X_{\mathsf{blocks},1}, \dots, X_{\mathsf{blocks},\mathsf{vars}_2}$, where the block $X_{b,1}, \dots, X_{b,\mathsf{vars}_2}$ encodes the block $x_{b,1}, \dots, x_{b,\mathsf{vars}_1}$ of Boolean variables for all $b \in [\mathsf{blocks}]$.
    For any $i \in [m]$ and $j \in [k]$, let $b(i,j)$ be the index of the block of the variable occurring in the literal $\ell_{i,j}$ and let $v_1(i,j)$ be its position inside that block.
    For example, if $\ell_{i,j} = \neg x_{1,3}$, then $b(i,j) = 1$ and $v_1(i,j) = 3$.
    Consider the polynomial
    \[Q_{i,j}(X) \coloneqq L_{i,j}(\mathrm{DEC}_{v_1(i,j)}(X_{b(i,j), 1}, \dots, X_{b(i,j), \mathsf{vars}_2})).\]
    This polynomial uses $\mathrm{DEC}_{v_1(i,j)}$ to obtain the value of the Boolean variable $x_{b(i,j), v_1(i,j)}$ in the Boolean assignment encoded by the variables $X_{1,1}, \dots, X_{1,\mathsf{vars}_2}, \dots, X_{\mathsf{blocks},1}, \dots, X_{\mathsf{blocks}, \mathsf{vars}_2}$ and possibly negates it depending on whether $\ell_{i,j}$ is positive or negative.
    To make this formal, fix an assignment $\hat X = \hat X_{1,1}, \dots, \hat X_{1,\mathsf{vars}_2}, \dots, \hat X_{\mathsf{blocks},1}, \dots, \hat X_{\mathsf{blocks},\mathsf{vars}_2} \in \field^{\mathsf{blocks}\cdot \mathsf{vars}_2}$.
    Then $\mathrm{dec}(\hat X) \coloneqq \mathrm{dec}(\hat X_{1,1}, \dots, \hat X_{1,\mathsf{vars}_2}) \circ \dots \circ \mathrm{dec}(\hat X_{\mathsf{blocks},1}, \dots, \hat X_{\mathsf{blocks},\mathsf{vars}_2})$ is the Boolean assignment encoded by $\hat X$, and we have $Q_{i,j}(\hat X) \in \{0,1\}$ as well as $Q_{i,j}(\hat X) = 0$ if and only if $\mathrm{dec}(\hat X) \models \ell_{i,j}$.
    Finally, for any $i \in [m]$, define the polynomial
    \[P_i(X) = \prod_{j=1}^k Q_{i,j}(X).\]
    Now for all $\hat X \in \field^{\mathsf{blocks}\cdot \mathsf{vars}_2}$, we have that $P_i(\hat X) \in \{0,1\}$ and $P_i(\hat X) = 0$ if and only if $\mathrm{dec}(\hat X) \models C_i$.
    Consequently, the polynomial equation system $E \coloneqq \{P_i\}_{i \in [m]}$ has a solution if and only if $\FF$ is satisfiable.

    We now verify that $E$ has the claimed properties and can be constructed in the claimed running time.
    Note that by definition and the fact that $k$, $q$, and $\delta$ are considered to be constant, $\mathsf{vars}_1$ and $\mathsf{vars}_2$ are constant, and $\mathsf{blocks}$ is linear in $n$.
    By construction, $E$ consists of exactly $m$ polynomial equations and uses at most $\mathsf{blocks} \cdot \mathsf{vars}_2$ variables.
    This directly yields the desired bound:
    \begin{align*}
        \mathsf{blocks} \cdot \mathsf{vars}_2 & = \left\lceil \frac{n}{\mathsf{vars}_1} \right\rceil \cdot \left\lceil \frac{\mathsf{vars}_1}{\log q} \right\rceil        \\
                                              & \leq \left( \frac{n}{\mathsf{vars}_1} + 1 \right) \cdot \left( \frac{\mathsf{vars}_1}{\log q} + 1 \right)                 \\
                                              & = \frac{n}{\log q} + \frac{n}{\mathsf{vars}_1} + \frac{\mathsf{vars}_1}{\log q} + 1                                       \\
                                              & = \frac{n}{\log q} \cdot \left( 1 + \frac{\log q}{\mathsf{vars}_1} + \frac{\mathsf{vars}_1}{n} + \frac{\log q}{n} \right) \\
                                              & \in \frac{n}{\log q} \cdot \left( 1 + \frac{\delta}{2} + o(1) \right).
    \end{align*}
    The degree of any polynomial $\mathrm{DEC}_{v_1}$ for $v_1 \in [\mathsf{vars}_1]$ is trivially bounded by $\mathsf{vars}_2 \cdot (q-1)$, as it is a $\mathsf{vars}_2$-variate polynomial over $\field$.
    The same is true for the polynomial $Q_{i,j}$ for any $i \in [m]$ and $j \in  [k]$ (the remaining variables are unused in $Q_{i,j}$).
    Consequently, $P_i$ is of degree at most $k \cdot \mathsf{vars}_2 \cdot (q-1) \leq k\cdot(\frac{2}{\delta} + \frac{1}{\log q} + 1) \cdot (q-1) \in \OO(1)$ for all $i \in [m]$.

    Finally, we analyze the running time of the reduction.
    For any $i \in [m]$, the polynomial $P_i$ can be viewed as a polynomial with at most $k \cdot \mathsf{vars}_2$ variables, and can be interpolated from all of its evaluations on arbitrary assignments to those variables using Lemma~\ref{lem:interpolation}.
    To obtain these evaluations, we compute all relevant evaluations of $Q_{i,j}$ for all $j \in [k]$, that is, all evaluations for arbitrary assignments to the variables occurring in $Q_{i,j}$ but some fixed assignment for the variables not occurring.
    Any evaluation of $P_i$ can then be computed as the product of the corresponding evaluations of the $Q_{i,j}$.

    To obtain the evaluations of $Q_{i,j}$ for all $i \in [m]$ and $j \in [k]$, we first need to compute all evaluations of $\mathrm{DEC}_{v_1}$ for all $v_1 \in [\mathsf{vars}_1]$.
    Using the definition of $\mathrm{dec}$ suggested above, this can be done as follows:
    Simultaneously count from $0$ to $q^{\mathsf{vars}_2}-1$ in base $q$ using variables $q_0, \dots, q_{\mathsf{vars}_2} \in \field$ and in base $2$ (modulo $2^{\mathsf{vars}_1}$) using variables $d_0, \dots, d_{\mathsf{vars}_1} \in \{0,1\}$ and for each step, take $d_{v_1}$ as the evaluation of $\mathrm{DEC}_{v_1}$ on input $(q_0, \dots, q_{\mathsf{vars}_2})$ for all $v_1 \in [\mathsf{vars}_1]$.
    This takes time $\OO(\mathsf{vars}_1 \cdot q^{\mathsf{vars}_2})$.

    The polynomial $Q_{i,j}$ is now treated as a $\mathsf{vars}_2$-variate polynomial (leaving out all unused variables).
    Each evaluation of $Q_{i,j}$ is now simply obtained by a lookup in the correct evaluation of the correct polynomial $\mathrm{DEC}_{v_1}$ and potentially a negation.
    Hence, all evaluations of a single $Q_{i,j}$ can be obtained in time $\OO(q^{\mathsf{vars}_2})$.
    (Here, we assume that a single literal $\ell_{i,j}$ can be read from the input in constant time.
    If this is not possible a priori, we can prepare a data structure for it with linear pre-processing.)
    In consequence, all evaluations of all polynomials~$Q_{i,j}$ for $i \in [m]$ and $j \in [k]$ can be obtained in time $\OO(mk \cdot q^{\mathsf{vars}_2})$.

    Now, for any $i \in [m]$, the evaluation of $P_i$ for a single assignment $X$ for the variables occurring in $P_i$ can be computed in time $k$ by computing the product $\prod_{j=1}^k Q_{i,j}(X)$.
    (Here, $Q_{i,j}$ is treated as a polynomial with the same variables as $P_i$, so one has to project the assignment $X$ to the correct subset of variables for the lookup.)
    The polynomial $P_i$ can now be interpolated in time $\OO(\mathsf{blocks} \cdot \mathsf{vars}_2 \cdot q^{k \cdot \mathsf{vars}_2 \cdot (q-1)})$ using Lemma~\ref{lem:interpolation}.
    Hence, this step takes time $\OO(m \cdot \mathsf{blocks} \cdot \mathsf{vars}_2 \cdot q^{k \cdot \mathsf{vars}_2 \cdot (q-1)})$ for the whole system $E = (P_i)_{i \in [m]}$.

    In total, this means that the running time is bounded by
    \[\OO(\mathsf{vars}_1 \cdot q^{\mathsf{vars}_2} + mk \cdot q^{\mathsf{vars}_2} + m \cdot \mathsf{blocks} \cdot \mathsf{vars}_2 \cdot q^{k \cdot \mathsf{vars}_2 \cdot (q-1)}) \subseteq \OO(nm).\]

    Finally, the above reduction can be strengthened to a parsimonious reduction by enforcing a bound on the possible assignments to any block of variables over \field and ensuring that the decoding function restricted to the possible assignments is a bijection.
    With this modification, there is a $1$-to-$1$ correspondence between satisfying assignments of the original formula and solutions for the constructed polynomial equation system.

    This bound can be realized by constructing a $\mathsf{vars}_2$-variate polynomial $B$ over \field that expresses that the number represented by the given block when interpreted as a base-$q$ number is less than $2^{\mathsf{vars}_1}$.
    For this, use the same approach as for constructing the polynomials $\mathrm{DEC}_{\mathsf{vars}_1}$:
    List all of its evaluations and interpolate using Lemma~\ref{lem:interpolation}.
    Finally, let $\mathrm{BOUND}_{b}(X)$ be the composition of $B$ with a projection to the $b$-th block of $X$.
    Now the desired polynomial equation system is $\{P_1, \dots, P_m, \mathrm{BOUND}_1, \dots, \mathrm{BOUND}_{\mathsf{blocks}}\}$.
\end{proof}

\subsection{Lower Bounds for \texorpdfstring{\PESqd{q}{d}}{PES}.}

We are now ready to prove Theorem~\ref{thm:pes-hardness} as a direct corollary to Lemma~\ref{lem:PESreduction}.

\PEShardness*

\begin{proof}
    We prove the contrapositive.
    Assume that there exists a prime power~$q$ and rational~$\delta>0$ such that, for all $d\in\N$, there is an $\OO(q^{(1-\delta)n})$-time algorithm for \PESqd{q}{d}.
    We show that this contradicts \pp{SETH}.

    Let $\eps = \delta/2$, and let $k\ge 3$ be any integer.
    We now devise an $\OO(2^{(1-\eps)n})$-time algorithm for~\pp{$k$-SAT}, contradicting \pp{SETH}.
    The algorithm is given any $n$-variable $k$-CNF formula~$\FF$, uses Lemma~\ref{lem:PESreduction} to reduce it to an equivalent instance~$E$ of \PESqd{q}{d} for $d = k \cdot \left\lceil \frac{\lceil \frac{2}{\delta} \cdot \log q \rceil}{\log q} \right\rceil \cdot (q-1)$, and finally applies the assumed algorithm for \PESqd{q}{d}.
    As $k$, $q$ and $\delta$ are constant in this setting, the reduction takes time $\OO(nm)$, where~$m$ is the number of clauses in $\FF$, by Lemma~\ref{lem:PESreduction}.
    Furthermore, the polynomial equation system~$E$ has at most $\frac{n}{\log q} \cdot \left( 1 + \frac{\delta}{2} + o(1) \right)$ variables.
    In consequence, the final application of the assumed algorithm for \PESqd{q}{d} takes time
    \[\OO\left( q^{(1-\delta) \cdot \frac{n}{\log q} \cdot \left( 1 + \frac{\delta}{2} + o(1) \right)} \right)
        \le \OO\left( 2^{(1-\delta) \cdot \left( 1 + \frac{\delta}{2} + o(1) \right) \cdot n} \right).\]
    To finish the proof, we show that this is in $\OO(2^{(1-\eps)n})$.
    Indeed, we have:
    \begin{align*}
        (1-\delta) \cdot (1 + \frac{\delta}{2} + o(1) ) & = 1 + \frac{\delta}{2} + o(1) - \delta - \delta \cdot \left( \frac{\delta}{2} + o(1) \right) \\
                                                        & = 1 - \delta \cdot \left( \frac{1}{2} + \frac{\delta}{2} + o(1) \right) + o(1).
    \end{align*}
    Now for $\eps' \coloneqq \delta \cdot \left( \frac{1}{2} + \frac{\delta}{2} + o(1) \right)$ we have $\eps < \eps'$, which means that the running time is
    \[\OO(2^{(1- \eps' + o(1)) n}) \le \OO(2^{(1-\eps)n}).\]
    This concludes the proof of Theorem~\ref{thm:pes-hardness}.
\end{proof}

\subsection{Lower Bound for Counting the Roots of a Polynomial.}

Lemma~\ref{lem:PESreduction} is a mapping reduction that is \emph{parsimonious}, that is, it preserves the number of solutions.
Since the proof of Theorem~\ref{thm:pes-hardness} only deals with the parameters of Lemma~\ref{lem:PESreduction} and the running times, the theorem can be safely lifted to its counting version.
That is, if \#{SETH} holds, then \#\PESqd{q}{d} cannot be solved fast.
Here, \#{SETH} is the counting variant of \pp{SETH}, which states that counting the number of satisfying assignments to $k$-CNF formulas cannot be done fast, and \#\PESqd{q}{d} is the problem of computing the number of solutions to an instance of \PESqd{q}{d}.

If the number of equations in a given instance of \#\PESqd{q}{d} is~$1$, then the problem is simply to compute the number of roots of the one given polynomial.
For a polynomial $P \in \field[X_1,\dots,X_n]$, any vector $x\in\field^n$ with $P(x)=0$ is called a \emph{root} of the polynomial.
For convenience, we define the problem \Roots{q}{d} as the special case of \#\PESqd{q}{d} with $m=1$:
\defproblemtalt{\Roots{q}{d}}{Polynomial $P \in \field[X_1,\dots,X_n]$ of degree at most~$d$}{How many roots does $P$ have?}

In a beautiful paper and somewhat surprisingly, Williams~\cite{DBLP:conf/soda/Williams18a} was able to reduce from \#\PESqd{q}{d} to \Roots{q}{D} for some $D\ge d$.
The following reduction is implicit in~\cite[proof of Theorem~4]{DBLP:conf/soda/Williams18a}.

\begin{lemma}[Williams~{\cite{DBLP:conf/soda/Williams18a}}]%
\label{lem:williams-roots-reduction}
    Let $q$ be a prime power and $d, B, C \in\N$.
    For every $\eps>0$, there is a constant~$D\in\N$ and an oracle reduction from $\#\PESqd{q}{d}$ to $\Roots{q}{D}$, such that on input~$P_1,\dots,P_m\in\field[X_1,\dots,X_n]$, the reduction makes queries to polynomials $Z\in\field[X_1,\dots,X_n]$ and satisfies the following property:
    If every polynomial $P_i$ only depends on at most $B$ variables, then these queries~$Z$ have degree at most $D\coloneqq BC \cdot (q-1)$ and the reduction runs in time $\OO(q^{m/C}\cdot (m + n))$.
\end{lemma}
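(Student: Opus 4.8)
The plan is to follow Williams' reduction~\cite{DBLP:conf/soda/Williams18a}: first compress the system $P_1,\dots,P_m$ into a few polynomials of bounded degree by \emph{grouping}, and then recover the number $N\coloneqq\#\{x\in\field^n : P_1(x)=\dots=P_m(x)=0\}$ of common roots from the root-counts of \emph{all} $\field$-linear combinations of these compressed polynomials, which is exactly where the factor $q^{m/C}$ comes from.

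Concretely, I would pick $C\coloneqq\lceil 1/\eps\rceil$, partition $[m]$ into $r\coloneqq\lceil m/C\rceil$ groups $G_1,\dots,G_r$ of size at most $C$, and for each $j\in[r]$ set
\[
  \widetilde\Phi_j \;\coloneqq\; 1-\prod_{i\in G_j}\bigl(1-P_i^{\,q-1}\bigr)\,,
\]
reduced so that each variable occurs with individual degree at most $q-1$ (using $a^q=a$ for $a\in\field$, which does not change the values of $\widetilde\Phi_j$ on $\field^n$). By Fermat's little theorem, $\widetilde\Phi_j$ is $\{0,1\}$-valued with $\widetilde\Phi_j(x)=0$ if and only if $P_i(x)=0$ for every $i\in G_j$, so $N=\#\{x\in\field^n : \widetilde\Phi_1(x)=\dots=\widetilde\Phi_r(x)=0\}$. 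The key point is that $\widetilde\Phi_j$ only involves the variables appearing in the $\le C$ polynomials $P_i$ with $i\in G_j$, and there are at most $CB$ of them; after the degree reduction $\widetilde\Phi_j$ is thus an (at most) $CB$-variate polynomial with individual degrees $\le q-1$, whence $\deg\widetilde\Phi_j\le CB(q-1)=D$ --- note this bound is independent of $d$.

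Next, for each $c=(c_1,\dots,c_r)\in\field^r$ I would form $Z_c\coloneqq\sum_{j=1}^r c_j\,\widetilde\Phi_j\in\field[X_1,\dots,X_n]$, which again has degree at most $D$ and only $\OO(m)$ monomials, and query the $\Roots{q}{D}$ oracle for the number $R(Z_c)\coloneqq\#\{x\in\field^n : Z_c(x)=0\}$. Exchanging the order of summation,
\[
  \sum_{c\in\field^r}R(Z_c)\;=\;\sum_{x\in\field^n}\#\bigl\{\,c\in\field^r : \textstyle\sum_{j=1}^r c_j\widetilde\Phi_j(x)=0\,\bigr\}\;=\;q^{r}N+q^{r-1}\bigl(q^n-N\bigr)\,,
\]
because, for fixed $x$, the map $c\mapsto\sum_j c_j\widetilde\Phi_j(x)$ is a linear form on $\field^r$ that is identically zero when $(\widetilde\Phi_j(x))_{j}$ is the zero vector (then all $q^r$ values of $c$ count) and otherwise vanishes on a subspace of size exactly $q^{r-1}$. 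Solving for $N$ gives the exact identity
\[
  N=\frac{\bigl(\sum_{c\in\field^r}R(Z_c)\bigr)-q^{\,n+r-1}}{(q-1)\,q^{\,r-1}}\,,
\]
and evaluating it by integer arithmetic completes the reduction.

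For the running time: the reduction makes $q^{r}\le q\cdot q^{m/C}$ oracle calls; precomputing all the constant-size polynomials $\widetilde\Phi_j$ and reading the input takes $\OO(m+n)$ time, each $Z_c$ is then assembled in $\OO(m)$ time, and the closing identity is evaluated over integers of $\OO((n+m)\log q)$ bits, so standard bookkeeping gives total time $\OO(q^{m/C}(m+n))$ with $D=BC(q-1)$ a constant. I expect the two places that need care to be: (i) the degree bound, which must be extracted from the variable count $CB$ together with the reduction to individual degree $\le q-1$, \emph{not} from the naïve degree $Cd(q-1)$ of the product; and (ii) performing the last step over $\Z$ rather than over $\field$ --- the oracle returns genuine integers and the intermediate quantities may exceed the characteristic, so the division must be exact integer division. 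The combinatorial identity displayed above is the heart of Williams' argument; everything else is routine bookkeeping.
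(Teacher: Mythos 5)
Your proposal is correct and follows the same approach that the paper (implicitly, via the citation to Williams) relies on: group the $m$ polynomials into $r\approx m/C$ blocks, collapse each block to a $\{0,1\}$-valued indicator $\widetilde\Phi_j$ on at most $BC$ variables (so after reducing individual degrees via $a^q=a$ the total degree is at most $BC(q-1)=D$, independently of $d$), query the oracle on all $q^r$ linear combinations $Z_c=\sum_j c_j\widetilde\Phi_j$, and recover $N$ over $\Z$ from the identity $\sum_c R(Z_c)=Nq^r+(q^n-N)q^{r-1}$. Two small nits, neither affecting correctness: the lemma quantifies over a \emph{given} $C\in\N$, so you should not set $C=\lceil 1/\eps\rceil$ yourself (the parameter $\eps$ is essentially vestigial in this statement); and if queries are written in the paper's coefficient representation with keys in $\{0,\dots,q-1\}^n$, writing out each $Z_c$ costs $\OO(mn)$ rather than $\OO(m)$, so the stated $\OO(q^{m/C}(m+n))$ bound should really be read up to $\poly(n,m)$ factors --- which is harmless where the lemma is applied.
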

Williams~\cite{DBLP:conf/soda/Williams18a} used Lemma~\ref{lem:williams-roots-reduction} to show that $\Roots{q}{d}$ does not have algorithms that run in time~$\OO(2^{(1-\delta) n})$, unless \pp{\#SETH} is false.
This hardness result is only tight for~$q=2$, and we extend it to a tight hardness result for arbitrary prime powers~$q$.
\ROOTShardness*
In order to prove Theorem~\ref{thm:roots-hardness}, we follow the same plan as Williams' proof of his \pp{\#SETH} lower bound: we reduce from \pp{\#$k$-SAT} via \#\PESqd{q}{d} to \Roots{q}{D}, but we replace the first reduction with Lemma~\ref{lem:PESreduction} as a key missing component.
More precisely, starting from \pp{\#$k$-SAT}, we apply the counting version of the sparsification lemma~\cite{DBLP:journals/jcss/ImpagliazzoPZ01,DBLP:conf/coco/CalabroIP06,DBLP:journals/talg/DellHMTW14} to make the number~$m$ of clauses at most $(k/\eps)^{\OO(k)}$.
Next, we apply our efficient reduction of Lemma~\ref{lem:PESreduction} from \#\pp{$k$-SAT} to \#\PESqd{q}{d}; the main feature of this reduction is that it compresses $n$ Boolean variables from the $k$-CNF formula into roughly $n/\log q$ variables over $\field$ in the polynomial equation system.
Finally, we apply Lemma~\ref{lem:williams-roots-reduction} to reduce the polynomial equation system to \Roots{q}{D}; we apply this reduction with parameters satisfying $m/C = \eps n'$, where $n'$ is the number of variables in the original \#\pp{$k$-SAT} formula.
This leads to a running time of $\OO(q^{\eps n'} \cdot (n+n'))$, which is sufficient to obtain the desired lower bound under \#\pp{SETH}.

\begin{proof}(Proof of Theorem~\ref{thm:roots-hardness})
    We follow the proof outline stated above, so suppose there is some $\delta>0$ such that \Roots{q}{d} can be solved in time $\OO(q^{(1-\delta)n})$ for all $d$.
    We now devise an algorithm for \#\pp{$k$-SAT} that contradicts \pp{\#SETH}, using a constant $\eps > 0$ that we will set later.
    Let $\varphi$ be a $k$-CNF with $n$ variables.
    Without loss of generality, we can assume that $\varphi$ has at most $(k/\eps)^{\OO(k)} \cdot n$ clauses by the counting version of the sparsification lemma~\cite{DBLP:journals/jcss/ImpagliazzoPZ01,DBLP:conf/coco/CalabroIP06,DBLP:journals/talg/DellHMTW14}.
    This causes a multiplicative $\OO(2^{\eps n})$ overhead in the running time.
    We now apply Lemma~\ref{lem:PESreduction} to obtain a polynomial equation system $G$ using at most $\frac{n}{\log q}\cdot \left(1 + \frac{\delta}{2} + o(1)\right)$ variables and $(k/\eps)^{\OO(k)}\cdot n + \left\lceil \frac{n}{\lceil \delta/2 \cdot \log q \rceil}\right\rceil \in (k/\eps)^{\OO(k)} \cdot n$ equations, where the number of satisfying assignments of $\varphi$ is exactly the number of solutions to $G$.
    This takes time $\OO((k/\eps)^{\OO(k)} \cdot n^2)$.
    From the proof of Lemma~\ref{lem:PESreduction}, we can also see that each equation in $G$ uses at most $B \coloneqq \lceil \lceil 2/\delta \cdot \log q \rceil / \log q \rceil \cdot k$ variables.
    Let $m$ be the number of equations in $G$, $d \coloneqq B \cdot (q-1)$, and $C \coloneqq m/(\eps n)$.

    Finally, apply Lemma~\ref{lem:williams-roots-reduction} to $G$ with $q$, $d$, $B$, $C$, and $\eps$ as defined above.
    This takes time
    \[\OO\left( q^{m/(m/(\eps n))} \cdot \left( m + \left(\frac{n}{\log q} \cdot (1 + \delta/2 + o(1)\right) \right) \right) = \OO(q^{\eps n} \cdot n)\]
    and any polynomial queried in the reduction uses at most $\frac{n}{\log q} \cdot (1 + \frac{\delta}{2} + o(1))$ variables and has degree at most $D \coloneqq BC \cdot (q-1)$.

    Apart from this, we get the following running time for computing the answers to all oracle queries using the assumed algorithm for \Roots{q}{D}:
    \begin{align*}
      \OO(2^{\eps n} \cdot q^{\eps n} \cdot q^{(1-\delta)\cdot \frac{n}{\log q} \cdot (1 + \delta/2 + o(1))}) &\leq \OO(q^{(1-\delta)\cdot \frac{n}{\log q} \cdot (1 + \delta/2 + o(1)) + 2\eps n}).
    \end{align*}
    Setting $\delta'$ to a constant with $\delta' < \delta \cdot (1/2 + \delta/2 + o(1))$, by the same calculations as in the proof of Theorem~\ref{thm:pes-hardness}, the above is bounded by
    \[q^{(1-\delta')n + 2\eps n} = q^{(1 + 2\eps - \delta') n}.\]
    This term obviously dominates the running time of our algorithm.
    Setting $\eps < \delta'/2$, this implies an algorithm for \#\pp{$k$-SAT} running in time $\OO(q^{(1 - \delta'')n})$ for $\delta'' = \delta' - 2\eps$, yielding the desired \#\pp{SETH} lower bound.
\end{proof}

\bibliography{refs}

\end{document}